\renewcommand{\sout}[1]{\unskip}
\def\Ad{\mathrm{Ad}}
\newcommand{\inv}[0]{{-1}}
\newcommand{\oo}[0]{\otimes}
\newcommand{\ee}[0]{\epsilon}
\newtheorem*{rep@theorem}{\rep@title}
\newcommand{\newreptheorem}[2]{%
\newenvironment{rep#1}[1]{%
 \def\rep@title{#2 \ref{##1}}%
 \begin{rep@theorem}}%
 {\end{rep@theorem}}}
\newtheorem{theorem}{Theorem}[section]
\newtheorem{proposition}[theorem]{Proposition}
\newcommand{\issue}[2][]{%
  \textcolor{red}{#2}%
  \marginnote{\textbf{\textcolor{red}{issue}}%
    \ifthenelse{\equal{#1}{}}{}{: #1}}}
\def\Hom              {{\rm Hom}}
\def\C                {\mathbb C}
\def\id               {{\rm id}}
\long\def\labl#1      {\label{#1}\ee}
\def\R                {\mathbb R}
\def\Im               {{\rm Im}}
\def\Re               {{\rm Re}}
\def\ker               {{\rm Ker}}
\def\Ker               {{\rm Ker}}
\def\tr               {{\rm Tr}}
\def\Z                {\mathbb Z}
\newcommand{\II}{I\kern -0.7ex I}
\newcommand{\III}{I\kern -0.7ex I\kern -0.7ex I}
\newcommand{\cC}{{\mathcal C}}
\newcommand{\cH}{{\mathcal H}}
\newcommand{\cL}{{\mathcal L}}
\newcommand{\cM}{{\mathcal M}}
\newcommand{\cP}{{\mathcal P}}
\newcommand{\cG}{{\mathcal G}}
\newcommand{\cR}{{\mathcal R}}
\newcommand{\cT}{{\mathcal T}}
\newcommand{\bbH}{{\mathbb H}}
\newcommand{\bbP}{{\mathbb P}}
\newcommand{\bbS}{{\mathbb S}}
\newcommand{\pr}{{\rm pr}}
\newcommand{\Diff}{{\rm Diff}}
\newcommand{\Mod}{{\rm Mod}}
\newcommand{\Hyp}{{\rm Hyp}}
\newcommand{\Rep}{{\rm Rep}}
\newcommand{\Isom}{{\rm Isom}}
\newcommand{\Ann}{{\rm Ann}}
\newcommand{\Li}{{\rm Li}}
\newcommand{\Ein}{{\rm Ein}}
\title{Generalised shear coordinates on the moduli spaces of three-dimensional spacetimes}
\author{Catherine Meusburger\footnote{{\tt catherine.meusburger@math.uni-erlangen.de}}~ and Carlos Scarinci\footnote{{\tt scarinci@math.fau.de}} \\ \\ {\it Department Mathematik, Friedrich-Alexander Universit\"at  Erlangen-N\"urnberg,} \\ {\it Cauerstra\ss e 11, 91058 Erlangen, Germany}}
\date{}
\begin{document}

\maketitle

\begin{abstract}
We introduce 
coordinates on the moduli spaces of  maximal globally hyperbolic constant curvature 3d spacetimes with cusped Cauchy surfaces $S$. They are derived from the parametrisation of the moduli spaces by the bundle of measured geodesic laminations over Teichm\"uller space of $S$ and can be viewed as analytic continuations of the shear coordinates on Teichm\"uller space. In terms of these coordinates the gravitational symplectic structure takes a particularly simple form, which resembles the Weil-Petersson symplectic structure in shear coordinates, and is closely related to the cotangent bundle of Teichm\"uller space. We then consider the mapping class group action on the moduli spaces and show that it preserves the gravitational symplectic structure. This defines three distinct mapping class group actions on the cotangent bundle of Teichm\"uller space, corresponding to different values of the curvature.
\end{abstract}

\section{Introduction}\label{sec:intro}

\subsubsection*{Moduli spaces of constant curvature spacetimes}

Moduli spaces of three-dimensional constant curvature spacetimes classify the diffeomorphism classes of solutions of the Einstein equations on three-dimensional manifolds. As the Ricci curvature of a three-dimensional manifold determines its Riemann curvature tensor, all solutions of the Einstein equations with vanishing stress-energy tensor have constant curvature $\Lambda$, where $\Lambda$ is the cosmological constant.  
 This implies in particular that three-dimensional Einstein spacetimes, that is, Ricci constant spacetimes,  are all locally isometric 
  to one of three model Lorentzian geometries:  three-dimensional Minkowski space,  anti-de Sitter space or  de Sitter space for, respectively, $\Lambda=0$, $\Lambda=-1$ and $\Lambda=1$. 
  
  As a consequence, three-dimensional Einstein spacetimes can be classified completely under certain additional assumptions on their causality structure. 
  This yields a  classification of maximal globally hyperbolic (MGH) three-dimensional Einstein spacetimes of topology $\R\times S$, where $S$ is an 
   orientable surface, possibly with punctures.  Remarkably, these three-dimensional structures are completely characterised in terms of two-dimensional structures and their associated moduli spaces, in particular those related to hyperbolic geometry and Teichm\"uller theory.

Important results in this respect are the work by  Mess \cite{Mess2007} and Scannell \cite{Scannell1999}, where the diffeomorphism classes of MGH Einstein spacetimes with a compact Cauchy surface $S$ are classified by the cotangent bundle over Teichm\"uller space $T^*\cT(S)$ for $\Lambda=0$, two copies of Teichm\"uller space $\cT(S)\times\cT(S)$ for $\Lambda=-1$ and the space of complex projective structures $\cC\cP(S)$ for $\Lambda=1$. An explicit geometric construction of such spacetimes in terms of domains of dependence in the corresponding model spacetimes was later given by Benedetti and Bonsante \cite{Benedetti2009}. These domains of dependence are described in terms of earthquakes and grafting along measured laminations, and 
 the solutions for different values of curvature are related via so-called canonical Wick rotations and rescalings. This allows for a clear geometrical description of the moduli space of three-dimensional MGH Einstein spacetimes as 
the bundle $\cM\cL(S)$  of measured geodesic laminations over Teichm\"uller space and generalises the well-known description of the moduli space of three-dimensional hyperbolic manifolds by Thurston \cite{Thurston1980}.

The fact that any three-dimensional MGH Einstein spacetime is locally isometric to one of the model Lorentzian geometries also gives rise to a classification of such spacetimes in terms of conjugacy classes of group homomorphism $\pi_1(S)\to G_\Lambda$, where $G_\Lambda$ is the isometry group of the model spacetime.
This identifies their moduli spaces with a certain subspace of the corresponding representation variety $\Hom(\pi_1(S),G_\Lambda)/G_\Lambda$ or, equivalently, of the moduli space of flat $G_\Lambda$-connections on $S$. This can be viewed as a direct generalisation of the realisation of  Teichm\"uller space $\cT(S)$ as 
a connected component of the representation variety $\Hom(\pi_1(S),\mathrm{PSL}(2,\R))/\mathrm{PSL}(2,\R)$. 

From a physics perspective, this is related to the Chern-Simons formulation of three dimensional gravity developed by Achucarro-Townsend \cite{Achucarro1986} and Witten \cite{Witten1988/89} which describes three-dimensional gravity  as a $G_\Lambda$-Chern-Simons theory on $\R\times S$.  The representation variety $\Hom(\pi_1(S), G_\Lambda)/G_\Lambda$ then corresponds to the gauge invariant phase space of this Chern-Simons theory. The fact that only a certain subset of this phase space corresponds to gravity follows from the non-degeneracy of the metric, which imposes restrictions on the Chern-Simons connection.

\subsubsection*{Three-dimensional spacetimes, Teichm\"uller theory and quantisation}

This relation between three-dimensional Einstein geometry,  two-dimensional hyperbolic geometry and flat connections makes the moduli spaces of MGH Einstein spacetimes an interesting research topic from different perspectives. On one hand, Teichm\"uller theory is a very rich and well-developed theory with deep connections with complex analysis, algebraic topology, low dimensional topology and geometry and also symplectic geometry, to cite but a few, see \cite{Papadopoulos2007,Hubbard2006}. The study of generalisations of Teichm\"uller space and its structures may thus lead to new insights and techniques for a wide variety of topics in mathematics.

From a physics viewpoint, moduli spaces of MGH Einstein spacetimes are interesting since they are the diffeomorphism invariant phase space of gravity in three dimensions. Three-dimensional gravity plays an important role as a toy model for the quantisation of general relativity which allows one to investigate conceptual questions of quantum gravity and to develop new approaches to quantisation, see \cite{Carlip1998} and references therein for an overview.
Moreover, the  quantisation of  moduli spaces of MGH Einstein spacetimes is also of mathematical interest due to their relation with the moduli spaces of flat connections and Chern-Simons gauge theory, which connects it directly to the construction of 
quantum invariants of three-manifolds and three-dimensional topological quantum field theory \cite{Witten1989a,Reshetikhin1991,Turaev1992,Barrett1996}.

While quantisation techniques for Chern-Simons theory are well-established for  compact, semisimple  gauge groups, the case of non-compact groups remains a challenge due to their more complicated representation theory. A first step in the generalisation of the definition of quantum invariants associated with non-compact Lie groups such as the group  $\mathrm{PSL}(2,\C)$ are the hyperbolic invariants of Baseilhac and Benedetti \cite{Baseilhac2004,Baseilhac2007}, which in turn are closely related with the theory of quantum Teichm\"uller spaces \cite{Kashaev1998,Fock1999} via the quantum dilogarithm of Faddeev-Kashaev \cite{Faddeev1994}, see also \cite{Bai2007,Bonahon2007}.

The challenges in  the quantisation of moduli spaces of flat connections for non-compact groups and the close relation with Teichm\"uller theory thus suggest to approach the quantisation of the gravitational moduli spaces by applying or generalising results from quantum Teichm\"uller theory. In particular, this provides a strong motivation  to  generalise the description of Teichm\"uller space in terms of Thurston's shear coordinates \cite{Thurston1998,Bonahon1996,Fock1997} to the context of the gravitational moduli spaces.

\subsubsection*{3d Einstein spacetimes in terms of generalised shear coordinates}

The shear coordinate parametrisation of Teichm\"uller space has a direct geometrical interpretation and yields a simple description of the Weil-Petersson symplectic structure and the action of the mapping class group, which play a central role in the derivation of quantum Teichm\"uller theory. The present article introduces a set of coordinates on the moduli spaces $\cG\cH_\Lambda(\R\times S)$ of MGH Einstein spacetimes that can be viewed as a natural generalisation of Thurston's shear coordinates to the three-dimensional Lorentzian context. Similarly to the two-dimensional case, these generalised shear coordinates are defined by means of an ideal triangulation of a cusped surface and are obtained via the identification of the moduli space of MGH Einstein spacetimes with the bundle of measured geodesic laminations over Teichm\"uller space. They have a direct geometrical interpretation in terms of shearing and bending of hyperbolic structures along the ideal edges of the triangulation and allow one to directly 
determine  the associated group homomorphisms $\pi_1(S)\to G_\Lambda$. This should also be compared with the work of Bonahon \cite{Bonahon1996} where 
complex measured laminations are used to define complex-valued shear coordinates on the moduli space of hyperbolic 3-manifold.

From a more algebraic perspective, the coordinates introduced in this article  can be understood as analytic continuations of the shear coordinates in Teichm\"uller space with values in a two-dimensional commutative real algebra $R_\Lambda$, which coincides with the complex numbers for $\Lambda=1$, with the dual numbers for $\Lambda=0$ and with the split complex numbers for $\Lambda=-1$. 
This algebra also provides a unified description of the isometry groups $G_\Lambda$ and their Lie algebras in terms of matrices with entries in $R_\Lambda$. The $R_\Lambda$-valued shear coordinate then arise from the generalisation of the shear coordinate parametrisation of the holonomy representations.

The gravitational symplectic structure on $\cG\cH_\Lambda(\R\times S)$, which is the restriction of Goldman's symplectic form \cite{Goldman1984} on $\Hom(\pi_1(S), G_\Lambda)/G_\Lambda$, takes a particularly simple form in terms of these generalised shear coordinates. It is purely combinatorial and resembles the Weil-Petersson symplectic structure on Teichm\"uller space in shear coordinates. Moreover, for all values of $\Lambda$, it is also closely related to the cotangent bundle $T^*\cT(S)$ of Teichm\"uller space. In fact, the spaces $\cG\cH_\Lambda(\R\times S)$, for different values of $\Lambda$, are show to be isomorphic as symplectic manifolds to $T^*\cT(S)$.

The second part of the article investigates the action of the mapping class group $\Mod(S)$ on the moduli spaces of MGH Einstein spacetimes. Concrete expressions for mapping class group action on the generalised shear coordinates are derived in terms of Whitehead moves, which can be viewed as an analytic continuation of the corresponding expressions for shear coordinates on $\cT(S)$. The Whitehead moves are shown to preserve the gravitational symplectic structure and thus induce a symplectic $\Mod(S)$-action on $\cG\cH_\Lambda(\R\times S)$, for all values of $\Lambda$. This, in turn, induces three distinct symplectic actions on the cotangent bundle $T^*\cT(S)$. These results are achieved via a simple decomposition of the Whitehead moves into terms generated via the symplectic structure and linear terms which implement the combinatorial transformation of the Poisson structure under Whitehead moves.
 
The Hamiltonians generating the non-linear part of the Whitehead moves are related to the ``imaginary part'' of the dilogarithm of the associated edge coordinate, which is known to be related to the hyperbolic volume of ideal tetrahedra. We show that this is a direct generalisation of a corresponding result for Teichm\"uller space, in which the relevant Hamiltonian is the real dilogarithm. This gives a clear motivation for the appearance of the quantum dilogarithm in the quantum theory and  suggests that a quantisation of these moduli spaces could be used to define three-manifold invariants similar to those in  \cite{Baseilhac2004,Baseilhac2007}.

\section{Teichm\"uller theory and shear coordinates}
\label{sec:background}

In this section we summarise the relevant background on Teichm\"uller theory and Thurston's shear coordinates. We refer the reader to \cite{Papadopoulos2007} for a general introductory overview, but discuss the aspects  relevant to this article in detail to make it self-contained. In the following, $S$ denotes an orientable surface of genus $g$
 with $s$ punctures. We assume that the surface  contains at least one puncture  ($s>0$)  and  that its universal cover is isometric to the hyperbolic plane ($2g-2+s>0$).  

\subsection{Teichm\"uller space}
\label{subsec:back_teichmueller}

\subsubsection*{The Riemann moduli space and Teichm\"uller space}

The Riemann moduli space $\cR(S)$ parametrises diffeomorphism classes of, both,  hyperbolic and  conformal structures on a given topological surface $S$. In the case of punctured Riemann surfaces ($s>0$) there are different versions of this space, depending on the boundary conditions imposed on the hyperbolic metrics near each puncture \cite{Fock1997,Penner1993}. In this article, we consider the Riemann moduli space of cusped hyperbolic structures. We thus define $\cR(S)$ as the space of finite area complete hyperbolic metrics on $S$ modulo orientation preserving diffeomorphisms
$$\cR(S)=\Hyp(S)/\Diff^+(S).$$
The universal cover of the Riemann moduli space $\cR(S)$ is the Teichm\"uller space $\cT(S)$, which is the space of finite area complete hyperbolic metrics on $S$ modulo diffeomorphisms  isotopic to the identity
$$\cT(S)=\Hyp(S)/\Diff_{0}(S).$$
The group of deck transformations of the covering $\cT(S)\to\cR(S)$ is
 the mapping class group of $S$, which is given as the 
 quotient of the group $\Diff^+(S)$ of orientation preserving diffeomorphisms of $S$ by its normal subgroup $\Diff_0(S)$ of diffeomorphisms isotopic to the identity
$$\Mod(S)=\Diff^+(S)/\Diff_0(S).$$

\subsubsection*{
Symplectic structure}

Teichm\"uller space $\cT(S)$ carries a canonical symplectic structure. This  can be understood via its relation to  the $\mathrm{PSL}(2,\R)$-representation variety, which consists of conjugacy classes of  group homomorphisms  $\pi_1(S)\to \mathrm{PSL}(2,\R)$
$$\Rep(S,\mathrm{PSL}(2,\R))=\Hom(\pi_1(S),\mathrm{PSL}(2,\R))/\mathrm{PSL}(2,\R).$$
Via the uniformisation theorem, points $h\in\cT(S)$ are in one-to-one correspondence with Fuchsian representations $\rho\in\Rep(S,\mathrm{PSL}(2,\R))$ with fixed parabolic conjugacy classes around each puncture. Thus Teichm\"uller space can be viewed as a connected component of the representation variety
$\Rep_\cC(S,\mathrm{PSL}(2,\R))$,
where the index $\cC$  indicates the restriction to fixed parabolic conjugacy classes at the punctures.

For any Lie group $G$, the representation variety $\Rep(S,G)$ 
coincides with the moduli space of flat $G$-connections on $S$ and  carries a canonical non-degenerate closed two-form, the Atiyah-Bott-Goldman symplectic form  \cite{Atiyah1983,Goldman1984}, which is determined by the choice of an Ad-invariant symmetric bilinear form on $\mathfrak g=\text{Lie}\, G$. From a physics viewpoint, this moduli space is  the (gauge invariant) phase space of Chern-Simons theory on the three-manifold $\R\times S$. The $\Ad$-invariant symmetric bilinear form which characterises the Atiyah-Bott-Goldman symplectic structure enters in the definition of the Chern-Simons action and this  symplectic structure can be seen as the associated physical symplectic structure on the phase space \cite{Atiyah1983}.

For the group $G=\mathrm{PSL}(2,\R)$, the choice of an Ad-invariant symmetric bilinear form is unique up to rescaling, and it was shown by Goldman \cite{Goldman1984} that the restriction of the associated symplectic structure to Teichm\"uller space  $\cT(S)\subset \Rep(S,\mathrm{PSL}(2,\R))$ coincides with
the Weil-Petersson Poisson structure on $\cT(S)$.

\subsubsection*{Measured laminations and earthquakes}\label{subsec:back_lamin}
Measured geodesic laminations on two-dimensional hyperbolic surfaces and the associated operation of earthquake play a prominent role in three-dimensional hyperbolic geometry \cite{Thurston1980,Epstein1987,Bonahon1996,McMullen1998} as well as in three-dimensional Lorentzian geometry \cite{Mess2007,Scannell1999,Benedetti2009,Meusburger2007}.  
Measured geodesic laminations can be viewed as generalisations of weighted simple closed geodesics and earthquakes are defined via cutting and gluing operations along such geodesics. More precisely, a measured geodesic laminations on $S$ associated to a hyperbolic metric
$h\in\cT(S)$  is a pair $(\lambda,\mu)$ formed by:
\begin{enumerate}
\item a closed subset $\lambda\subset S$ that is foliated by disjoint non-self-intersecting complete geodesics, called the leaves of the lamination,
which cannot be contracted to punctures;
\item a positive measure $\mu$ on the set of arcs transverse to the leaves, which is invariant under homotopy through transverse arcs and additive under concatenation of arcs.
\end{enumerate}
We denote by $\cM\cL(S)$ the (total space of the) bundle of measured geodesic laminations over $\cT(S)$, considered up to isotopy.

Note that although the definition of  a  measured geodesic lamination makes use
 of a reference hyperbolic metric $h\in\cT(S)$,  there is a canonical identification between measured laminations defined with respect to any pair of hyperbolic metrics $h,h'\in\cT(S)$. This follows since any geodesic for a metric $h$ can be smoothly deformed to a geodesic for a metric $h'$, which gives a global identification between the fibres of the (trivial) bundle $\cM\cL(S)$.
In the following, however, such an identification of measured geodesic laminations for different metrics will not be convenient, and we shall not rely on any choice of trivialisation.

An earthquake is an operation $Eq:\cM\cL(S)\to\cT(S)$ that associates  to each point $h\in \cT(S)$ and each measured geodesic lamination $\lambda\in\cM\cL_h(S)$ another point $Eq^\lambda(h)\in\cT(S)$, called the earthquake of $h$ along $\lambda$. If $\lambda$ is a simple closed geodesic with associated weight $\mu\in \R_+$, the hyperbolic metric $Eq^\lambda(h)$ is obtained by cutting the hyperbolic surface determined by $h$ along $\lambda$ and gluing the pieces back together after applying a  (right) twist by $2\pi\mu$.  

A more explicit description of the earthquake operation can be given in terms of the associated Fuchsian representations of the fundamental group of $S$. For a point  $h\in\cT(S)$ denote by $\rho:\pi_1(S)\to\mathrm{PSL}(2,\R)$  the corresponding Fuchsian representation, which is  determined uniquely up to conjugation via the uniformisation theorem.
The earthquake $Eq^\lambda(h)$ of $h$ along $\lambda$ is then determined from  a new Fuchsian representation 
 $\rho^\lambda:\pi_1(S)\to\mathrm{PSL}(2,\R)$ constructed as follows. 
 First consider  the lift $\tilde\lambda\in\cM\cL(\bbH^2)$ of $\lambda$ to the universal cover of $(S,h)$. Each leaf $\tilde l$ of $\tilde\lambda$ is then a complete geodesic in the hyperbolic plane, and hence for each point $p\in\tilde l$ there is a unique
 hyperbolic isometry $A_p\in \mathrm{PSL}(2,\R)$ mapping the imaginary axis to $\tilde l$,  the point $i$ to $p$ and preserving the orientation. This allows one to associate to the representation $\rho:\pi_1(S)\to\mathrm{PSL}(2,\R)$ and the lamination $\lambda$ a $\rho$-cocycle $Z_E^\lambda:\pi_1(S)\to\mathrm{PSL}(2,\R)$ defined by
\begin{align}\label{eq:zedef}
Z_E^\lambda(a)=\prod_{p\in \lambda\cap a}\Ad_{A_p}E(\mu_p)\qquad \forall a\in\pi_1(S),
\end{align}
where
\begin{align}\label{eq:coord_earth}
E(\mu)=\left(\begin{array}{cc} e^{\mu/2} & 0\\ 0 & e^{-\mu/2} \end{array}\right)\in\mathrm{PSL}(2,\R)
\end{align}
is the hyperbolic translation of length $\mu$ along the imaginary axis and $\mu_p=\epsilon_p(\lambda,a)\mu$ is the measure of $a$ at $p$ multiplied by the oriented intersection number between $\lambda$ and $a$. The fact that $Z^\lambda_E$ is a $\rho$-cocycle 
\begin{align*}
Z^\lambda_E(ab)=Z^\lambda_E(a)\Ad_{\rho(a)}Z^\lambda_E(b),\qquad\forall a,b\in\pi_1(S),
\end{align*}
ensures that one obtains a  representation $\rho^\lambda:\pi_1(S)\to\mathrm{PSL}(2,\R)$ by setting
$$\rho^\lambda(a)=Z^\lambda_E(a)\rho(a),\qquad \forall a\in\pi_1(S).$$
This representation is again Fuchsian and therefore it determines a unique hyperbolic metric $Eq^\lambda(h)\in\cT(S)$, the earthquake of $h$ along $\lambda$.

It was show by Thurston, see \cite{Kerckhoff1983} for the proof,
that any two  hyperbolic metrics are related via an earthquake. More precisely, Thurston's earthquake theorem states that for any pair of hyperbolic metrics $h,h'\in\cT(S)$ there exists a unique measured geodesic lamination $\lambda\in\cM\cL_h(S)$ such that $Eq^\lambda(h)=h'$. In particular, for a given point $h\in\cT(S)$ there is a bijection $\cM\cL_h(S)\to\cT(S)$ between the fibre of measured geodesic laminations over $h$ and Teichm\"uller space.

\subsection{Shear coordinates on Teichm\"uller space}
\subsubsection*{Definition of shear coordinates on $\cT(S)$}
A very effective tool in Teichm\"uller theory, in particular in  the study of its Poisson geometry and subsequent quantisation, is a special set of global coordinates on $\cT(S)$ associated with ideal triangulations of the punctured surface $S$. These coordinates  were first introduced by Thurston \cite{Thurston1998} and  further developed  by Bonahon \cite{Bonahon1996} and Fock \cite{Fock1997} and have a direct geometrical interpretation. They measure the hyperbolic displacement, or shear, of adjacent ideal triangles ---  see also \cite{Penner1987} for a related interpretation in terms of distances between horocycles at each puncture.

Let $\tau$ be an ideal triangulation of the surface $S$, that is, a triangulation of $S$ whose set of vertices coincides with the set of punctures of the surface. Note that being ideal is a rather restrictive condition on the triangulation, which determines the number of its vertices, edges and faces  uniquely. 
From the formula $v-e+f=2-2g$ for the Euler characteristic of $S$ together with the relations $2e=3f$ and $v=s$ it follows that the number of edges and faces of the ideal triangulation are given by
$e=6g-6+3s$ and $f=4g-4+2s$.

In the following we will also consider  the dual graph $\Gamma$ of  an ideal triangulation $\tau$.
Note that  for a general combinatorial graph the notion of face is not defined a priori. However, graphs dual to a triangulation of an oriented surface $S$ carry additional structure, namely a cyclic ordering of the incident edges at each vertex induced by the orientation of the underlying surface. 
A graph with such a cyclic ordering of the incident edges at each vertex is called a fat graph. The notion of face can then be defined as certain closed edge paths on $\Gamma$.
In the case of a trivalent fat graph dual to a triangulation, a face is a closed edge path of the graph which takes the ``same turn'', either  left or right, at each vertex and which does not pass through an edge twice in the same direction.

An ideal triangulation $\tau$ of a surface $S$ thus determines uniquely (up to isotopy) an embedded trivalent dual fat graph $\Gamma$. While different ideal triangulations of $S$ lead to different embedded fat graphs, the associated fat graphs are always related by sequences of Whitehead moves, see  Section \ref{subsec:back_mcg}.
Conversely, for a trivalent fat graph $\Gamma$ there is a  unique (up to diffeomorphism) oriented surface with an ideal  triangulation, which is obtained by gluing punctured discs along the faces of the graph. Graphs related by a sequence of Whitehead moves give rise to the same topological surface and punctured surfaces are therefore in one-to-one correspondence with trivalent fat graphs modulo Whitehead moves.

Given a surface $S$ and a corresponding embedded trivalent fat graph $\Gamma$, one defines
Thurston's shear coordinates on Teichm\"uller space $\cT(S)$  as follows. Denote  by $V(\Gamma)$, $E(\Gamma)$ and $F(\Gamma)$, respectively,  the  sets of  vertices, edges and faces of $\Gamma$. A point $h\in\cT(S)$ corresponds to an equivalence class of hyperbolic structures on $S$ and determines an ideal geodesic triangulation of $S$ dual to $\Gamma$. Each edge $\alpha\in E(\Gamma)$ corresponds to an ideal hyperbolic square on $S$ or, equivalently, to a $\pi_1(S)$-equivalence class of hyperbolic squares in the 
 universal cover. The shear coordinate 
 $x^\alpha=x^\alpha(h)$ assigned to $\alpha$ is then defined as the logarithm of the cross-ratio associate with the ideal square determined by $\alpha$. 
 
 Working with the upper-half plane model of the hyperbolic plane, we may normalise this ideal square in such a way that one of the two adjacent triangles has vertices at $-1,0,\infty$ and the other at $\infty,0,t$ as shown in Figure \ref{fig:cross_ratio}. The cross-ratio is then given by the coordinate $t\in\R$ of the fourth vertex and the shear coordinate takes the form
$x^\alpha(h)=\log t$.
\begin{figure}[!ht]
\includegraphics[scale=0.34]{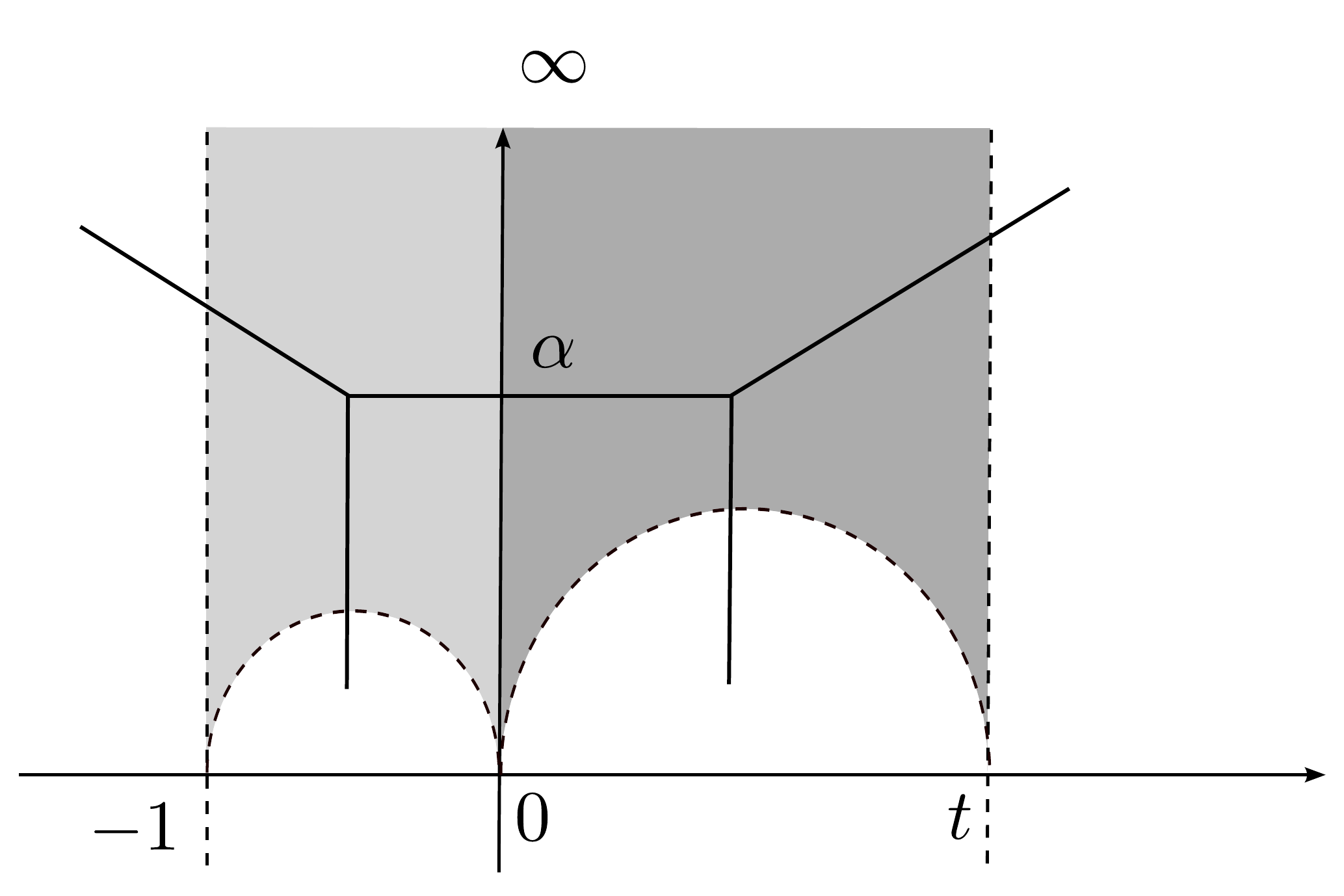}
\centering
\caption{Ideal square determined by the edge $\alpha$ of the fat graph dual to an ideal triangulation.}
\label{fig:cross_ratio}
\end{figure}

The shear coordinate $x^\alpha$ has a simple 
 geometric interpretation  as the shear between the  two adjacent ideal triangles. This follows from the fact that the triangle with vertices $\infty,0,t$ is obtained from a reference triangle with vertices $0,1,\infty$ by applying the hyperbolic transformation $E(x^\alpha)$  in \eqref{eq:coord_earth}. This transformation preserves the imaginary axis, which is  the lift of the ideal geodesic dual to the edge $\alpha$,  and the signed hyperbolic distance between a point on the imaginary axis and its image is exactly $x^\alpha$. This allows one to  interpret the coordinate $x^\alpha$ as the weight parameter for an earthquake along the imaginary axis, as shown in Figure \ref{fig:shear_triangle}.
\begin{figure}[!ht]
\includegraphics[scale=0.35]{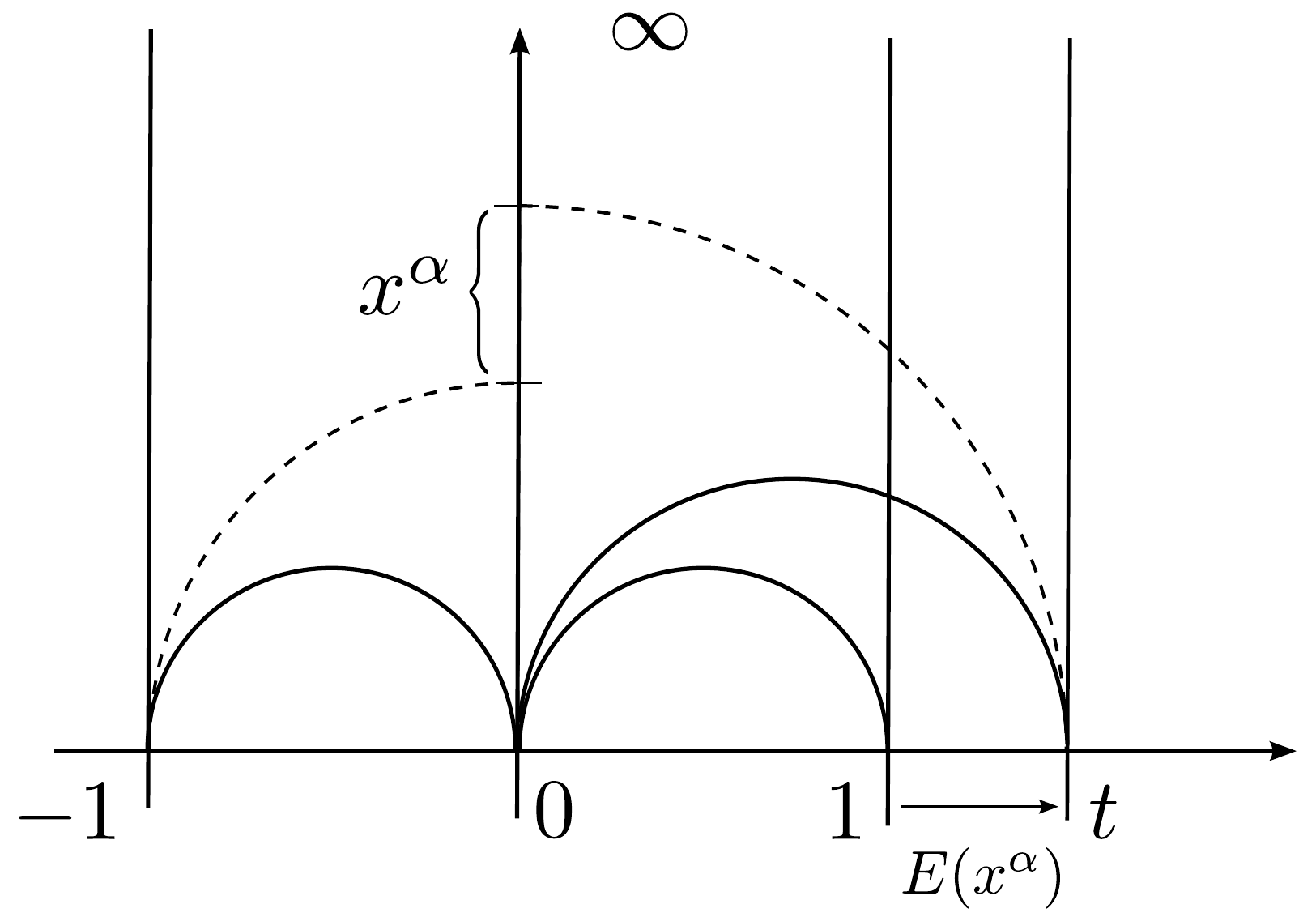}
\centering
\caption{Earthquake with weight $x^\alpha$ along the imaginary axis.}
\label{fig:shear_triangle}
\end{figure}

Besides their rather natural interpretation, an important feature of shear coordinates is their relation to the holonomies of simple closed curves on $S$, which are the images of elements of $\pi_1(S)$ under the associated Fuchsian representation $\rho:\pi_1(S)\to \mathrm{PSL}(2,\R)$. For a given vertex $v\in V(\Gamma)$, any 
simple closed curve $a$ on $S$ is homotopic to a unique closed edge path along the fat graph $\Gamma$ starting and ending at $v$. Such a closed edge path corresponds to a
sequence $(\alpha_1,...,\alpha_n)$ of edges of $\Gamma$, and the holonomy of $a$ is given by 
\begin{align}\label{eq:hol}
\rho(a)=P^a_nE(x^{\alpha_n})\cdots P^a_1E(x^{\alpha_1}),
\end{align}
where $E(x^\alpha)$  is given by \eqref{eq:coord_earth} and $P^a_k$ is either
\begin{align*}
L=\left(\begin{matrix} 1 & 1 \cr 0 & 1 \end{matrix}\right)\quad\text{ or }\quad R=\left(\begin{matrix} 1 & 0 \cr 1 & 1\end{matrix}\right)
\end{align*}
depending on whether $\alpha_{k}$ comes before or after $\alpha_{k+1}$ with respect to the ordering of incident edges at their common vertex\footnote{Note that this parametrisation of the holonomies is equivalent to the one in \cite{Fock1999} although the matrices $L$ and $R$ are different from the matrices $L$ and $R$ used there.}.  The former corresponds to  a left turn and the latter to  a right turn at the vertex between  $\alpha_k$ and  $\alpha_{k+1}$. 
Note that this prescription determines the holonomies only up to the choice of a basepoint, and different choices give rise to holonomies that are related by conjugation.

As they correspond to deck transformations in the universal cover, the holonomies encode key geometric properties of the hyperbolic surface $S$. For instance, any element $a\in \pi_1(S)$ is homotopic to a unique closed geodesic of $S$, and its geodesic length $l(a)$ is given by the trace of the associated holonomy 
\begin{align*}
\tr\rho(a)=2\cosh({l(a)}/ 2).
\end{align*}
The parametrisation \eqref{eq:hol} of the holonomies  thus gives rise to a simple description of the geometric properties of the hyperbolic surface in terms of the shear coordinates $x^\alpha$.

It is also directly apparent from \eqref{eq:hol} that the shear coordinates $x^\alpha:\cT(S)\to\R$ cannot be all independent but  must satisfy certain constraints associated with the  boundary conditions at the punctures. By definition, each face of the graph $\Gamma$ corresponds to a puncture and the  holonomy of the associated edge path must be parabolic. On the other hand, for an edge path $(\alpha_1,...,\alpha_n)$ around a face, the matrices $P^a_k$ in \eqref{eq:hol} are necessarily all equal to  $L$ or all equal to  $R$. A simple computation then shows that the  parabolicity condition is equivalent to imposing the constraint that the sum of shear coordinates associated to the edges of a given face identically vanishes. More explicitly, for each face $i\in F(\Gamma)$ we have
\begin{align}\label{eq:teich_const}
c^i(x)=\sum_{\alpha\in E(\Gamma)}\theta^i{}_\alpha x^\alpha=0,
\end{align}
where $\theta^i{}_\alpha$ denotes the multiplicity of the edge $\alpha$ in the face $i$.
Note that these constraints are linear in the coordinates $x^\alpha$. Thus, denoting by $V$, $E$ and $F$ the number of vertices, edges and faces of $\Gamma$, one can interpret the constraints as a linear map $c:\R^E\to \R^F$,  which identifies $\cT(S)$ with a linear subspace of $\R^E$ of codimension $F$.
\begin{theorem}[Fock-Chekhov \cite{Fock1999}]
\label{thm:teich_shear}
The functions $x^\alpha:\cT(S)\to\R$ define an embedding 
$x:\cT(S)\hookrightarrow \R^E$
whose image is the kernel of the linear map $c:\R^E \to\R^F$ whose components are given by \eqref{eq:teich_const}.
\end{theorem}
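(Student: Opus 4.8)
The plan is to establish three things: (i) the map $x=(x^\alpha):\cT(S)\to\R^E$ lands in $\ker c$; (ii) $x$ is injective; (iii) $x$ is a diffeomorphism onto all of $\ker c$, i.e.\ it is surjective onto the kernel and a local diffeomorphism. Statement (i) has essentially been argued in the text already: for a face $i\in F(\Gamma)$ the boundary edge path takes the same turn at every vertex, so all the turn matrices $P^a_k$ in \eqref{eq:hol} are equal to $L$ (or all to $R$); a direct computation of the product $L^{?}E(x^{\alpha_n})\cdots L^?E(x^{\alpha_1})$ shows that its trace equals $\pm 2$ — hence the holonomy is parabolic — precisely when $\sum_\alpha\theta^i{}_\alpha x^\alpha=0$, which is the constraint $c^i(x)=0$. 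So I would first record this computation (it suffices to do it for a single face by induction on the number of edges, tracking the off-diagonal entry) to conclude $x(\cT(S))\subseteq\ker c$.

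For injectivity, I would use the reconstruction of the holonomy representation from the shear data. Given the numbers $x^\alpha$, formula \eqref{eq:hol} produces a homomorphism $\rho_x:\pi_1(S)\to\mathrm{PSL}(2,\R)$, well defined up to conjugation once a basepoint vertex is fixed; one checks it is independent of the chosen edge-path representatives because the relations in $\pi_1(S)$ correspond to the face relations, which are respected thanks to (i). If two hyperbolic structures $h,h'$ have the same shear coordinates, then $\rho_h$ and $\rho_{h'}$ are conjugate, hence $h=h'$ by the uniformisation theorem and the standard fact that a point of $\cT(S)$ is determined by its Fuchsian representation. This shows $x$ is injective.

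For surjectivity onto $\ker c$, the cleanest route is a dimension count combined with properness (or, alternatively, an explicit inverse via the Penner/decorated-Teichm\"uller picture). The kernel of $c:\R^E\to\R^F$ has dimension at least $E-F=(6g-6+3s)-(4g-4+2s)=2g-2+s$, which is exactly $\dim_\R\cT(S)=6g-6+2s$ — wait, I should double-check the arithmetic: $\cT(S)$ for a surface of genus $g$ with $s$ cusps has real dimension $6g-6+2s$, while $E-F=2g-2+s$, so these do not match and the constraints cannot be independent in the naive way. Hence the correct statement is that the $F$ constraints $c^i$ are \emph{not} independent: the faces satisfy one global relation (each edge borders exactly two face-corners, so $\sum_i c^i = 2\sum_\alpha x^\alpha$ is not identically the relation — rather the precise dependency is that the sum of all face constraints counts each edge with total multiplicity $2$), giving $\operatorname{rank} c \le F-1$, actually $\operatorname{rank} c = F$ when... let me instead just invoke that $\dim\ker c = E-\operatorname{rank}c$ and that one shows $\operatorname{rank} c = F-??$; the point is that $\dim\ker c = \dim\cT(S)$, which can be verified directly from $E,F$ and the rank of $c$ once the linear dependencies among the rows of $c$ are identified (these come from the fact that near each puncture the sum of the two shears on the two sides of each incident edge reassembles into the horocyclic length relation). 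Granting $\dim\ker c=\dim\cT(S)$, it then suffices to show $x$ is an immersion — which follows because the derivatives of the trace functions $\tr\rho_h(a)$ with respect to the $x^\alpha$ (computed from \eqref{eq:hol}) span the cotangent space, these trace functions being coordinates on the character variety near a Fuchsian point — and that $x$ is proper, which follows because a sequence of hyperbolic structures leaving every compact set has some closed geodesic length going to $0$ or $\infty$, forcing some $x^\alpha\to\pm\infty$ by the trace formula $\tr\rho(a)=2\cosh(l(a)/2)$ together with \eqref{eq:hol}. A proper injective immersion between manifolds of equal dimension is a diffeomorphism onto a union of connected components, and $\ker c$ is connected (a linear space), so $x:\cT(S)\to\ker c$ is a diffeomorphism.

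I expect the main obstacle to be step (iii), and within it the surjectivity: showing that every point of $\ker c$ is realised by an actual hyperbolic structure, rather than merely producing a representation $\rho_x$ which a priori might fail to be discrete and faithful with the right parabolic behaviour. The honest way around this is either (a) the properness-plus-immersion argument above, which reduces realisability to the two analytic facts that leaving compacta in $\cT(S)$ forces a shear to blow up and that the $x^\alpha$ are local coordinates — both provable from the trace formula and the explicit form \eqref{eq:hol} — or (b) a direct geometric construction, gluing ideal triangles with the prescribed shears along the edges and checking that the constraint $c^i(x)=0$ is exactly the condition that the developing map closes up around each puncture to a cusp (this is the content of the horocycle/developing-map computation and is how Thurston and Fock originally argued). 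I would present route (b) for geometric transparency, using route (a) only to nail down that the resulting structure has finite area and fills out the whole kernel.
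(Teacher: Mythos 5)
The paper itself offers no proof of this statement: it is quoted as a background result of Fock--Chekhov, with only the derivation of the constraints \eqref{eq:teich_const} from the parabolicity condition sketched in the surrounding text. So your proposal must stand on its own, and its overall architecture (containment in $\Ker c$ via the parabolicity computation, injectivity via reconstruction of the Fuchsian representation from \eqref{eq:hol}, surjectivity via either properness-plus-immersion or the direct gluing of ideal triangles with prescribed shears) is the standard and correct route.

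However, there is a concrete error that derails your step (iii). You take $F=4g-4+2s$, which is the number of \emph{triangles} of the ideal triangulation, i.e.\ the number of \emph{vertices} of the dual fat graph $\Gamma$. In the statement, $F$ denotes the number of \emph{faces of $\Gamma$}, and the paper states explicitly that each face of $\Gamma$ corresponds to a puncture, so $F=s$. With the correct count, $E-F=(6g-6+3s)-s=6g-6+2s=\dim_\R\cT(S)$, the $s$ constraints $c^i$ are linearly independent (each edge of $\Gamma$ lies on at most two faces, and one can exhibit for each puncture an edge meeting only that face's cusp region, or simply note that the rows of $\theta$ are supported on disjoint-enough edge sets), and $\dim\Ker c=\dim\cT(S)$ exactly. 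Your subsequent digression about hidden linear dependencies among the rows of $c$ and $\operatorname{rank}c\le F-1$ is chasing a discrepancy that does not exist; as written you never resolve it and proceed only by ``granting'' the dimension match, which is precisely the point at issue. Since surjectivity onto $\Ker c$ is the substantive half of the theorem (containment and injectivity are comparatively soft), the proof is incomplete until this count is fixed. Once it is, your route (a) (properness from the trace formula plus the immersion property, applied to the connected linear space $\Ker c$) or route (b) (gluing ideal triangles and checking that $c^i(x)=0$ is the condition for the developing map to close up to a cusp of finite area) both go through.
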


\subsubsection*{Symplectic structure in shear coordinates}

Another  remarkable property of the shear coordinates is that they give rise to a very simple description of  the Weil-Petersson symplectic structure on $\cT(S)$ in terms of a Poisson structure on $\R^E$. This  Poisson structure is given in terms of  combinatorial constants  associated with the graph $\Gamma$ \cite{Fock1997,Penner1992}. Denoting by $\partial/\partial x^\alpha\in T\R^E$ the basis of coordinate vector fields on $\R^E$, one can characterise this  Poisson structure by the   Poisson bivector
\begin{align}\label{eq:wp_bivect}
\pi_{WP}
=\frac{1}{2}\sum_{\alpha\in E(\Gamma)}\frac \partial {\partial x^\alpha}\wedge\Big(\frac{\partial} {\partial x^\beta}-\frac{\partial} {\partial x^\gamma}+\frac{\partial} {\partial x^\delta}-\frac{\partial} {\partial x^\epsilon}\Big).
\end{align}
Here the sum is taken over all edges $\alpha\in E(\Gamma)$ and $\beta,\gamma,\delta,\epsilon$ are the incident edges at the source and target vertices of $\alpha$, ordered  as in Figure \ref{fig:adj_edges}. Note that this expression is also valid in the case where some of the edges $\beta, \gamma,\delta,\epsilon$ are equal.  
\begin{figure}[!ht]
\includegraphics[scale=0.34]{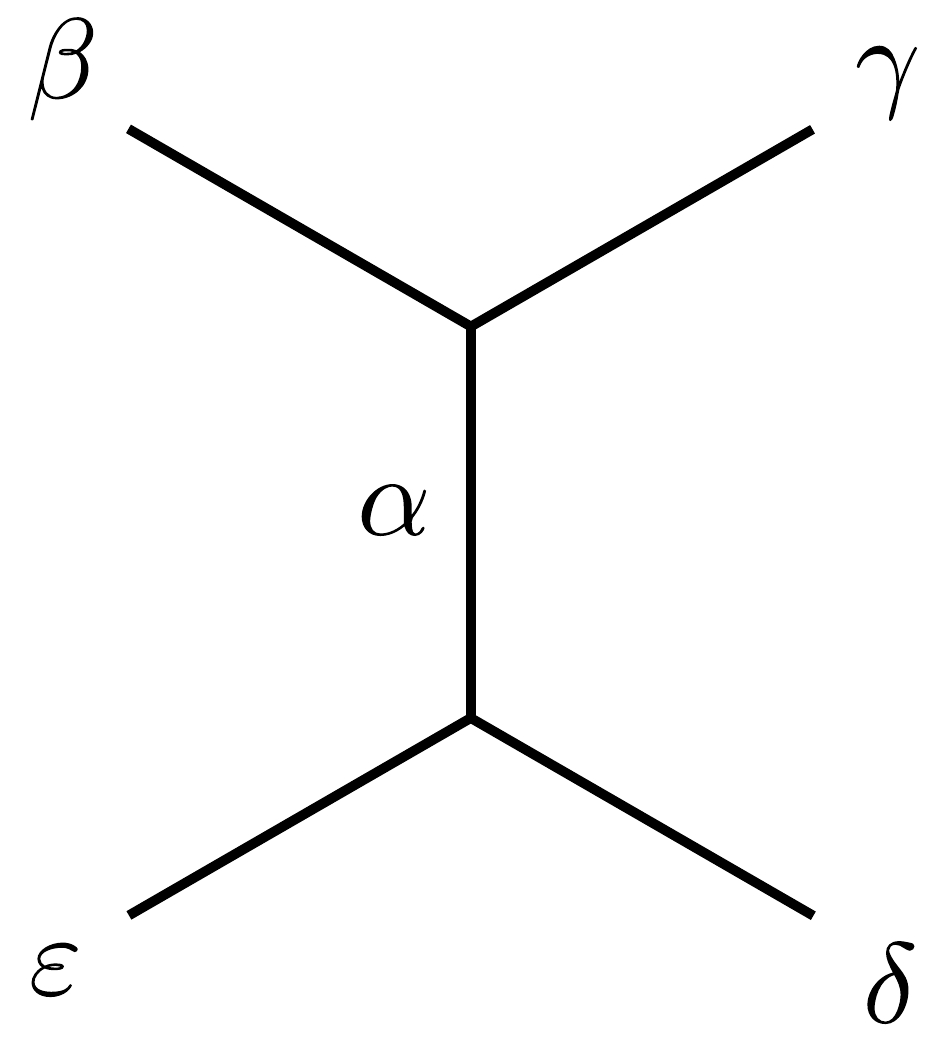}
\centering
\caption{Edge $\alpha$ in the trivalent fat graph $\Gamma$ and adjacent edges.}\label{fig:adj_edges}
\end{figure}

The corresponding Poisson bracket $\{\;,\;\}_{WP}$ 
on $\R^E$ is given by
$$\{f_1,f_2\}_{WP}=(df_1\otimes df_2)(\pi_{WP})\qquad \forall f_1,f_2\in C^\infty(\R^E).$$
In particular, one obtains for the coordinate functions $x^\alpha:\R^E\to \R$ the following expression
\begin{align*}
\{x^\alpha,x^{\alpha'}\}_{WP}=(dx^\alpha\otimes dx^{\alpha'})(\pi_{WP})=\pi_{WP}^{\alpha{\alpha'}} =\delta^{\beta{\alpha'}}-\delta^{\gamma{\alpha'}}+\delta^{\delta{\alpha'}}-\delta^{\epsilon{\alpha'}}.
\end{align*}
The Poisson bracket between the components $c^i$ of the linear constraint  defined in  \eqref{eq:teich_const} and a general  function $f\in C^\infty(\R^E)$ is also easily computed and takes the form
\begin{align}\label{eq:poiss_const}
\{f,c^i\}_{WP}=\!\!\!\sum_{\alpha\in E(\Gamma)}\frac{\partial f}{\partial x^\alpha}\{x^\alpha,c^i\}_{WP}=\!\!\!\sum_{\alpha\in E(\Gamma)}\frac{\partial f}{\partial x^\alpha}\Big(\theta^i{}_\beta-\theta^i{}_\gamma+\theta^i{}_\delta-\theta^i{}_\epsilon\Big).\end{align} 
This vanishes identically for all $f\in C^\infty(\R^E)$, since every face $i\in F(\Gamma)$ involves only left or right turns, as can be seen from the fact that a face containing any of the edges $\alpha,\beta,\gamma,\delta,\epsilon$ in Figure \ref{fig:adj_edges} must involve (a combination of) the edge paths $(\beta,\gamma)$, $(\gamma,\alpha,\delta)$, $(\delta,\epsilon)$ or $(\epsilon,\alpha,\beta)$ or their inverses. 
For all of these paths,  the corresponding sum of multiplicities in \eqref{eq:poiss_const} vanishes and  hence the bracket is trivial.

This demonstrates that the components of the linear constraint  \eqref{eq:teich_const} are Casimir functions for the Poisson bivector \eqref{eq:wp_bivect} and implies  that the Poisson bivector $\pi_{WP}$ restricts to a Poisson bivector for the constraint surface $\Ker c\subset \R^E$. The induced symplectic structure on Teichm\"uller space  coincides with the Weil-Petersson symplectic structure as the following theorem states.
\begin{theorem}[Fock-Chekhov \cite{Fock1999}, Penner \cite{Penner1992,Penner1993}]\label{thm:teich_reduc}

The linear constraint  $c: \R^E\to\R^F$ \eqref{eq:teich_const} is Casimir with respect to the Poisson bivector $\pi_{WP}$ on $\R^E$ and the symplectic quotient determined by $c$  is Poisson isomorphic to Teichm\"uller space with the Weil-Petersson structure.
\end{theorem}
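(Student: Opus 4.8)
The plan is to treat the Casimir half of the statement as already proved: it is exactly the computation carried out immediately before the theorem. By \eqref{eq:poiss_const}, $\{f,c^i\}_{WP}$ is, for every $f$, a combination of the quantities $\theta^i{}_\beta-\theta^i{}_\gamma+\theta^i{}_\delta-\theta^i{}_\epsilon$, each of which vanishes because a face of $\Gamma$ takes the same turn at every vertex, so that the multiplicities entering \eqref{eq:poiss_const} cancel in pairs along each face. Hence every $c^i$ is Casimir for $\pi_{WP}$, the Hamiltonian vector fields of functions on $\R^E$ are tangent to the linear subspace $\Ker c$, and $\pi_{WP}$ restricts to a Poisson bivector on $\Ker c$, which by Theorem \ref{thm:teich_shear} is the image $x(\cT(S))$. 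Since $\pi_{WP}$ has constant coefficients its rank is constant; the $F$ independent Casimirs $c^i$ give $\operatorname{rank}\pi_{WP}\le E-F$, and once the remaining point is settled we will get equality and non-degeneracy for free. That remaining point is: the symplectic form induced on $\cT(S)$ equals the Weil--Petersson form, i.e.\ $\{x^\alpha,x^{\alpha'}\}_{WP}$ agrees with the Goldman bracket on the Teichm\"uller component of $\Rep(S,\mathrm{PSL}(2,\R))$.

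For this I would invoke Penner's decorated Teichm\"uller theory. Let $p\colon\widetilde{\cT}(S)\to\cT(S)$ be the $\R_{>0}^{F}$-bundle of decorations (a horocycle at each puncture), with Penner's $\lambda$-length coordinates $(\ell_\alpha)_{\alpha\in E(\Gamma)}$, global coordinates on $\widetilde{\cT}(S)$ for the ideal triangulation $\tau$ dual to $\Gamma$. Penner's formula expresses the pullback of the Weil--Petersson form (up to a normalisation constant) as
\begin{align*}
p^*\omega_{WP}=\sum_{t}\big(d\log\ell_{a}\wedge d\log\ell_{b}+d\log\ell_{b}\wedge d\log\ell_{c}+d\log\ell_{c}\wedge d\log\ell_{a}\big),
\end{align*}
the sum running over the triangles $t$ of $\tau$, with $a,b,c$ their sides in cyclic order, while the shear coordinate of an edge is recovered from the $\lambda$-lengths of the four sides of its two adjacent triangles by a relation of the shape $x^\alpha=\log(\ell_\beta\ell_\delta)-\log(\ell_\gamma\ell_\epsilon)$ in the labelling of Figure \ref{fig:adj_edges}. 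One first checks that the kernel of this closed $2$-form is exactly the vertical distribution of $p$: the vertical fields are spanned by the Euler fields rescaling simultaneously all $\ell_\alpha$ whose edge borders a fixed face, and these are annihilated by $p^*\omega_{WP}$ precisely because a face takes consistent turns --- the $\lambda$-length counterpart of the cancellation in \eqref{eq:poiss_const}. Hence $p^*\omega_{WP}$ descends to $\cT(S)$, recovering $\omega_{WP}$, and the Poisson bracket it induces on the functions $x^\alpha$ --- which, modulo the fibrewise-constant ambiguity, are the basic functions $\log(\ell_\beta\ell_\delta)-\log(\ell_\gamma\ell_\epsilon)$ --- is the reduction of $p^*\omega_{WP}$ along the vertical directions. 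Computing that reduced bracket, $\{x^\alpha,x^{\alpha'}\}$ receives contributions only from triangles containing both $\alpha$ and $\alpha'$, i.e.\ from the at most two vertices of $\Gamma$ shared by $\alpha$ and $\alpha'$, and a short local computation at such a vertex reproduces exactly $\delta^{\beta\alpha'}-\delta^{\gamma\alpha'}+\delta^{\delta\alpha'}-\delta^{\epsilon\alpha'}$, the bracket dual to \eqref{eq:wp_bivect}. Non-degeneracy of $\pi_{WP}|_{\Ker c}$ then follows from that of $\omega_{WP}$ together with $\dim\cT(S)=E-F$, which also identifies the level sets of $c$ with the symplectic leaves of $\pi_{WP}$.

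An alternative to this last step, closer in spirit to the holonomy formula \eqref{eq:hol} and to the proof of Fock--Chekhov, is to avoid decorated Teichm\"uller space altogether: equip $\R^E$ with the bracket \eqref{eq:wp_bivect}, restrict to $\Ker c\cong\cT(S)$, and verify directly that the holonomy matrix entries $a\mapsto\rho(a)_{ij}$ built from \eqref{eq:hol} satisfy the Goldman Poisson brackets determined by the Killing form on $\mathfrak{sl}(2,\R)$ (which yields the Weil--Petersson structure). Since these matrix-entry functions generate the algebra of functions on the relevant representation variety, agreement on them forces agreement of the two Poisson structures. Here the computation again localises at the vertices of $\Gamma$ traversed by both loops, where, using the cocycle property and the jump relations satisfied by $L$, $R$ and $E(x)$, it reproduces Goldman's local $r$-matrix contribution $\epsilon(p)$ at each geometric intersection point.

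\textbf{Main obstacle.} Both routes reduce the theorem to one genuinely combinatorial verification --- that $\pi_{WP}$ is the honest Weil--Petersson bracket, and not merely a closed $2$-form of the right rank --- and the difficulty is entirely in the bookkeeping: tracking how each edge sits inside its two triangles, the orientation-induced cyclic orders, and the left/right turns along faces, so that all the $\pm$ signs line up. The conceptual steps (Casimir property, symplectic reduction, dimension count) are routine; it is this sign- and index-chasing, carried out either through Penner's $\lambda$-length formula or through Goldman's intersection formula applied to \eqref{eq:hol}, that is the real work.
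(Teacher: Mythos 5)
Your proposal is correct in outline, and its two halves line up with the paper as follows. The Casimir half is exactly the computation the paper carries out around \eqref{eq:poiss_const} immediately before the theorem, so there is nothing to add there. For the identification with the Weil--Petersson structure the paper gives no self-contained proof at this point (the theorem is quoted from Fock--Chekhov and Penner), but the remark after Theorem \ref{thm:cotangent} makes clear what the paper's intended argument is: specialise the proof of Theorem \ref{thm:cotangent} by replacing $R_\Lambda^E$ with $\R^E$ and $G_\Lambda$ with $\mathrm{PSL}(2,\R)$, i.e.\ compute the bracket of class functions $f_a,g_b$ of the holonomies \eqref{eq:hol} with respect to \eqref{eq:wp_bivect}, organise the nonzero terms into edge-path segments common to $a$ and $b$, and use the recursion relations for the conjugated generators $J^a_k$ to collapse each segment to the boundary term $\pi_{WP}^{\alpha_1\beta_2}+\pi_{WP}^{\alpha_{s-1}\beta_s}=2\epsilon_p(a,b)$, matching Goldman's formula \eqref{eq:goldman}. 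That is precisely your \emph{alternative} route, so your secondary plan coincides with the paper's. Your primary route through Penner's decorated Teichm\"uller theory ($\lambda$-lengths, the triangle-wise two-form, and $x^\alpha=\log(\ell_\beta\ell_\delta/\ell_\gamma\ell_\epsilon)$) is genuinely different and equally legitimate --- it is essentially Penner's own argument, one of the two cited sources --- and it delivers non-degeneracy and the identification of the level sets of $c$ with the symplectic leaves more directly, since the descended two-form is manifestly $\omega_{WP}$. What it does not provide, and what the Goldman route does, is the template the paper needs for Section 4: only the holonomy computation analytically continues to $R_\Lambda$. One caution on your Penner route: the phrase ``a short local computation at such a vertex reproduces the bracket'' hides the real step, namely verifying that the constant-coefficient bivector \eqref{eq:wp_bivect} and the descended Penner form are mutually inverse on $\Ker c$ (not merely that both are invariant under the face scalings); the localisation to shared triangles is a consequence of that check rather than a substitute for it. With that caveat, both routes close the argument.
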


\subsection{The mapping class group action}\label{subsec:back_mcg}
\subsubsection*{Combinatorial description of the mapping class group}
Recall from the previous subsection that mapping class group $\Mod(S)$ is the fundamental group of the Riemann moduli space $\cR(S)$ and is
given as the quotient of the group $\Diff^+(S)$, of orientation preserving diffeomorphisms of $S$, by its normal subgroup $\Diff_0(S)$, of diffeomorphisms isotopic to the identity.
It acts on Teichm\"uller space via pull-back
and a classical result shows this action is properly discontinuous, although not free.

The choice of an embedded trivalent fat graph $\Gamma$ on $S$ gives rise to a combinatorial description of the mapping class group and of its action on Teichm\"uller space in terms of shear coordinates. 
As a first step, we
consider  the action of the mapping class group on embedded fat graphs. Given an element of the mapping class group  $\varphi\in\Mod(S)$ and an embedded fat graph $\Gamma$ one obtains another, combinatorially equivalent, embedded fat graph $\Gamma'=\varphi(\Gamma)$ as the image of $\Gamma$ under $\varphi$.
Clearly, if $\varphi$ is a non-trivial element of $\Mod(S)$, the isotopy classes of $\Gamma$ and $\varphi(\Gamma)$ are necessarily distinct. Conversely, for any two  isotopy classes $\Gamma,\Gamma'$ of embeddings of the same combinatorial trivalent fat graph, there is a unique element of $\varphi\in\Mod(S)$ such that
$\Gamma'=\varphi(\Gamma)$.
Mapping class group elements can therefore be characterised as pairs of isotopy classes of embeddings of a given combinatorial trivalent fat graph. In fact, a result of Penner \cite{Penner1992,Penner1993} allows one to decompose elements of the  mapping class group  into sequences of elementary graph transformations between any such a pair. 

Two embedded trivalent fat graphs $\Gamma$ and $\Gamma'$ are said to be related by a Whitehead move $W_\alpha:\Gamma\mapsto\Gamma'=\Gamma_\alpha$ along an edge $\alpha$ if $\Gamma'$ is obtained from $\Gamma$ by collapsing the edge $\alpha$ into a four-valent vertex and then expanding it in the opposite direction as shown in Figure \ref{fig:whitehead}. 
\begin{figure}[!ht]
\includegraphics[scale=0.3]{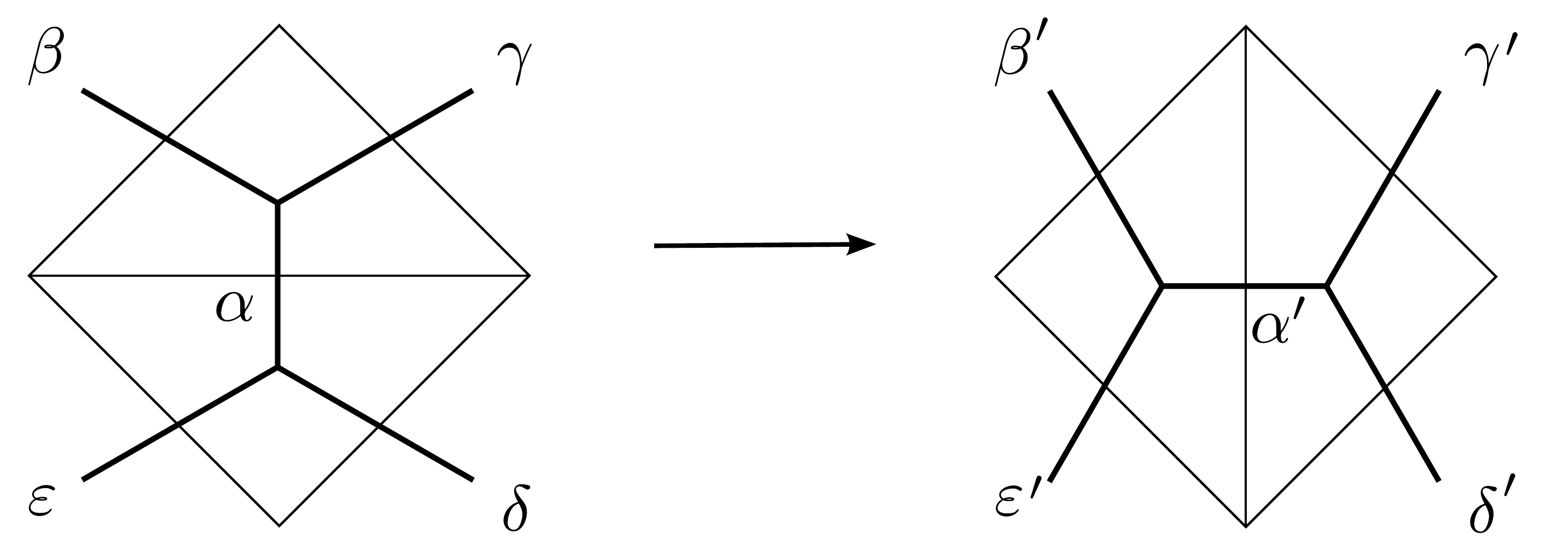}
\centering
\caption{Whitehead move along $\alpha$.}\label{fig:whitehead}
\end{figure}
Similarly, two trivalent fat graphs $\Gamma$ and $\Gamma'$ with ordered edges are said to be related by a transposition of the edges $\alpha$ and $\beta$ if $\Gamma'$ is obtained from $\Gamma$ by exchanging the order of $\alpha$ and $\beta$. Such a transformation will be denoted by $\sigma=(\alpha\;\beta):\Gamma\mapsto\Gamma'$, where $(\alpha\;\beta)\in S_E$ is interpreted as an element of the  symmetric group $S_E$.

The result  in \cite{Penner1992,Penner1993}  states that any  two   embeddings  $\Gamma,\Gamma'$ of a combinatorial edge-ordered trivalent fat graph are related by a sequence of Whitehead moves and transposition of the edges
whose interaction is characterised by a set of simple relations. In particular, this provides a presentation of the mapping class group(oid) in terms of generators and relations.
\begin{theorem}[Penner \cite{Penner1992,Penner1993}]
\label{thm:mcg_fat_graph}
Elements of $\Mod(S)$  are in bijection with finite sequences of elementary graph transformations between embeddings $\Gamma,\Gamma'$  of a combinatorial edge-ordered trivalent fat graph, modulo the following relations:
\begin{enumerate}
 \item (Involutivity) For every edge $\alpha\in E(\Gamma)$ $$W_\alpha^2=\id;$$
 \item (Naturality) For every edge $\alpha$ and every transposition $\sigma$ of edges $$\sigma\circ W_\alpha=W_{\sigma(\alpha)};$$
 \item (Commutativity) For  edges $\alpha,\beta\in E(\Gamma)$which do not share a  common vertex $$W_\alpha\circ W_\beta=W_\beta\circ W_\alpha;$$
 \item (Pentagon) for  edges $\alpha,\beta\in E(\Gamma)$ sharing exactly one vertex $$W_\alpha\circ W_\beta\circ W_\alpha\circ W_\beta\circ W_\alpha=(\alpha\;\beta).$$
\end{enumerate}
\end{theorem}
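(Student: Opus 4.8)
The plan is to follow Penner and deduce the presentation from the cell decomposition of decorated Teichm\"uller space. Write $\widetilde{\cT}(S)$ for the decorated Teichm\"uller space, obtained from $\cT(S)$ by choosing a horocycle at each of the $s$ punctures; it is diffeomorphic to $\cT(S)\times\R_{>0}^s\cong\R^{6g-6+3s}$, hence contractible. The convex hull construction of \cite{Penner1987} endows $\widetilde{\cT}(S)$ with a $\Mod(S)$-invariant ideal cell decomposition whose open cells $C(\Delta)$ are indexed by isotopy classes of ideal cell decompositions $\Delta$ of $S$, with $\dim C(\Delta)$ equal to the number of edges of $\Delta$. The top-dimensional cells correspond to ideal triangulations, that is, to embedded trivalent fat graphs $\Gamma$; the codimension-one cells correspond to decompositions with a single quadrilateral face, and each such cell is the common boundary face of exactly the two top cells related by the Whitehead move along the edge collapsed to form that quadrilateral. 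In particular, a Whitehead move applied twice returns the original graph, which is the involutivity relation~(1) and will be built into the CW structure used below.

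First I would pass to the dual $2$-complex $Y$: its vertices are the trivalent fat graphs (dual to the top cells), its edges are the Whitehead moves (dual to the codimension-one cells), and its $2$-cells are dual to the codimension-two cells, attached along their links. This construction is $\Mod(S)$-equivariant because Penner's decomposition is, and since the dual complex of that decomposition is homotopy equivalent to $\widetilde{\cT}(S)$, hence simply connected, and $\pi_1$ depends only on the $2$-skeleton, $Y$ is simply connected. Extracting a presentation of $\Mod(S)$ from its action on $Y$ then reduces to two tasks: (a) classifying the codimension-two strata, that is, identifying the $2$-cells of $Y$; and (b) converting the (non-free) $\Mod(S)$-action on $Y$ into the combinatorial bookkeeping in the statement.

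For task (a): a codimension-two cell corresponds to removing two edges from an ideal triangulation, and since a non-triangular $n$-gon face accounts for an edge deficit of $n-3$, the non-triangular faces of the resulting decomposition form either a single pentagon or two quadrilaterals. In the first case the cell lies in the closure of the five top cells given by the five triangulations of a pentagon, cyclically ordered so that consecutive triangulations differ by a Whitehead move; its link is therefore a pentagon, and recording how the two diagonals are relabelled on going once around gives the pentagon relation~(4). In the second case the cell lies in the closure of the $2\times2$ top cells obtained by triangulating the two quadrilaterals, related by Whitehead moves along the two diagonals; these diagonals lie in distinct faces of the triangulation, hence dually in edges of $\Gamma$ sharing no vertex, the link is a square, and one obtains the commutativity relation~(3). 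I expect the main obstacle to be the verification that these are the only contributions---that the degenerate configurations of the two removed edges produce no further relations, and that the non-compactness of Penner's cells near the punctures does not disturb the CW and duality structures---and it is precisely here that using the decorated space $\widetilde{\cT}(S)$ rather than $\cT(S)$ itself is essential.

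For task (b) I would introduce the edge-orderings. Passing to isotopy classes of \emph{edge-ordered} embedded trivalent fat graphs makes the $\Mod(S)$-action free---using the bijection, recalled above, between pairs of isotopy classes of embeddings of a fixed combinatorial fat graph and elements of $\Mod(S)$---and the transpositions $\sigma\in S_E$ enter as the extra generators relating different orderings on a fixed graph, with relation~(2) expressing that a Whitehead move depends only on the underlying unordered graph. A sequence of elementary moves from a fixed edge-ordered graph $\Gamma_0$ to $\varphi(\Gamma_0)$ determines the class $\varphi\in\Mod(S)$; every $\varphi$ arises this way because any two embedded fat graphs are joined by Whitehead moves; and two such sequences determine the same $\varphi$ exactly when their composite is a loop at $\Gamma_0$ representing the identity. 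By the standard method for presenting a group from its action on a simply connected $2$-complex, applied to $Y$ with the edge-orderings adjoined, such a loop is a product of conjugates of the boundaries of the $2$-cells together with the defining relations of the symmetric groups $S_E$ acting at the vertices---that is, of relations~(1)--(4)---which gives the asserted bijection. An alternative to (a)--(b) would be to prove directly, by induction on flip distance, that the flip graph of ideal triangulations becomes simply connected once the square and pentagon $2$-cells are attached.
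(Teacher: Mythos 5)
The paper states this theorem without proof, citing Penner, so there is no internal argument to compare against; your outline is a faithful reconstruction of Penner's own proof via the $\Mod(S)$-invariant ideal cell decomposition of the decorated Teichm\"uller space $\widetilde{\cT}(S)$, with the correct identifications (top cells $\leftrightarrow$ ideal triangulations, codimension-one cells $\leftrightarrow$ Whitehead moves, codimension-two cells $\leftrightarrow$ the pentagon and commutativity relations, edge-orderings to make the action free). The two points you explicitly flag as obstacles --- that the dual $2$-complex of a decomposition into non-compact ideal cells is still simply connected, and that degenerate configurations of the two removed edges (e.g.\ when the faces involved are not embedded polygons) contribute no further relations --- are precisely the technical steps carried out in \cite{Penner1992,Penner1993}, so the sketch is correct as far as it goes and matches the cited source's strategy.
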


\subsubsection*{The mapping class group action in shear coordinates}
The description of the mapping class group in terms of elementary graph transformations gives rise to simple and explicit expressions for its action on Teichm\"uller space in terms of shear coordinates. 
Consider again an embedded trivalent fat graph  $\Gamma$ and denote by  $x^\alpha:\cT(S)\to\R$ the coordinate function associated to an edge $\alpha\in E(\Gamma)$. For each element  $\varphi\in\Mod(S)$ denote by  $\Gamma'=\varphi(\Gamma)$ the image of $\Gamma$ under $\varphi$ and  by  $x'{}^{\alpha}:\cT(S)\to\R$  the  coordinate function for the edge  $\varphi(\alpha)\in E(\Gamma')$.
The mapping class group action $\Mod(S)\times\cT(S)\to\cT(S)$ extends to an equivariant action $\Mod(S)\times\R^E\to\R^E$.
This extended action is determined uniquely by the condition that the coordinates of $\varphi^*h\in \cT(S)$ with respect to the embedded graph $\Gamma=\varphi^*(\Gamma')$  agree with the coordinates of $h\in\cT(S)$ with respect to the embedded graph $\Gamma'$. In other words, the coordinate functions $x^\alpha:\cT(S)\to\R$ and $x'{}^{\alpha}:\cT(S)\to\R$ are related by
$$x'{}^{\alpha}=\varphi(x^\alpha)=x^\alpha\circ\varphi^*.$$
From this, it follows that the corresponding group action  $\Mod(S)\times\R^E\to\R^E$ is simply given by the change of coordinates $x^\alpha\mapsto x'{}^{\alpha}$ determined by the two distinct embeddings $\Gamma,\Gamma'$ of the given combinatorial fat graph.

As a consequence of Theorem \ref{thm:mcg_fat_graph} it is then sufficient to consider the transformation of coordinates $x^\alpha\to x'^\alpha$ under Whitehead moves, as their transformation under  edge transpositions is immediate. Choosing a point $h\in\cT(S)$ and comparing its shear coordinates before and after the move,  one obtains an expression for the transformation of the shear coordinates  \cite{Fock1999}.
For the Whitehead move along the edge $\alpha\in E(\Gamma)$ depicted in Figure \ref{fig:whitehead}, the relation between the shear coordinates for $\Gamma$ and those for $\Gamma'=\Gamma_\alpha$ is given by
\begin{align}\label{eq:flip_x}
W_\alpha:\begin{cases}
x^\alpha \mapsto x'{}^{\alpha}=-x^\alpha,
\cr
x^{\beta,\delta}\mapsto x'{}^{\beta,\delta}=x^{\beta,\delta}+\log(1+e^{x^\alpha}),
\cr
x^{\gamma,\epsilon}\mapsto x'{}^{\gamma,\epsilon}=x^{\gamma,\epsilon}-\log(1+e^{-x^\alpha}),
\end{cases}
\end{align}
while all other edge coordinates remain unchanged. 
This formula 
allows one to determine properties of the Whitehead moves  via direct computations. In particular, it follows that they satisfy the conditions in Theorem \ref{thm:mcg_fat_graph}, see \cite{Fock1999}. It is also straightforward  to show that the constraints \eqref{eq:teich_const} and the Poisson bivector \eqref{eq:wp_bivect} are preserved by \eqref{eq:flip_x}. More precisely,  the pull-back of the constraint $c'$, defined with respect to the fat graph $\Gamma'=\Gamma_\alpha$, coincides with the constraint $c$, defined with respect to $\Gamma$, and the push-forward of the bivector $\pi_{WP}$, defined with respect to $\Gamma$, agrees with the bivector $\pi'_{WP}$, defined with respect to $\Gamma'$
$$c=c'\circ W_\alpha,\qquad (W_\alpha)_*\pi_{WP}=\pi_{WP}'.$$
This therefore proves that he Whitehead moves  induce a symplectic mapping class group action on Teichm\"uller space.
\begin{theorem}[Fock-Chekhov \cite{Fock1999}, Penner \cite{Penner1992,Penner1993}]
\label{thm:pentagon_teich}
The Whitehead moves $W_\alpha:\R^E\to\R^E$ \eqref{eq:flip_x} satisfy the relations of Theorem \ref{thm:mcg_fat_graph}. Furthermore, they preserve the constraints $c:\R^E\to\R^F$ \eqref{eq:teich_const} and  the Weil-Petersson Poisson bivector $\pi_{WP}$ \eqref{eq:wp_bivect}.
\end{theorem}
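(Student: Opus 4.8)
The plan is to reduce every assertion to a bounded local computation on the fat graph. The key observation is that the map $W_\alpha$ of \eqref{eq:flip_x} fixes all coordinates except $x^\alpha$ and those of the at most four edges $\beta,\gamma,\delta,\epsilon$ incident to the two endpoints of $\alpha$ (Figure \ref{fig:adj_edges}), and that its formula depends only on the combinatorial position of these edges relative to $\alpha$, not on their labels. Hence each identity to be proved---the four relations of Theorem \ref{thm:mcg_fat_graph}, the intertwining $c=c'\circ W_\alpha$, and $(W_\alpha)_*\pi_{WP}=\pi'_{WP}$---can be verified in a fixed neighbourhood of the relevant edge(s) once one records how the cyclic orders at the affected vertices change under the move. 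Throughout one must allow the degenerate configurations in which some of $\beta,\gamma,\delta,\epsilon$ coincide (as already noted for \eqref{eq:wp_bivect}); these are handled by reading the same formulas with the appropriate multiplicities.

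For the relations of Theorem \ref{thm:mcg_fat_graph}: \emph{involutivity} $W_\alpha^2=\id$ is a direct check using that the move negates $x^\alpha$ and cyclically permutes the roles of the surrounding edges in such a way that the shift $+\log(1+e^{x^\alpha})$ added to an edge on the first application is exactly cancelled by the shift $-\log(1+e^{-x^\alpha})$ it receives on the second, and conversely. \emph{Naturality} $\sigma\circ W_\alpha=W_{\sigma(\alpha)}$ is immediate from the label-independence of \eqref{eq:flip_x}. \emph{Commutativity} for edges $\alpha,\beta$ with no common vertex holds because then $\alpha$ is not among the neighbours of $\beta$, nor conversely, so $W_\alpha$ adds to each coordinate a function of $x^\alpha$ alone and $W_\beta$ a function of $x^\beta$ alone, with neither move altering $x^\alpha$ or $x^\beta$; additive shifts of this type commute even on a shared neighbour. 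The \emph{pentagon} relation $W_\alpha\circ W_\beta\circ W_\alpha\circ W_\beta\circ W_\alpha=(\alpha\;\beta)$ for $\alpha,\beta$ sharing exactly one vertex is the one substantial case: one fixes the local five-edge configuration, applies \eqref{eq:flip_x} five times while updating the cyclic orders at each step, and verifies that the composition equals the transposition; this comes down to a lengthy but elementary algebraic identity in the exponentials $e^{\pm x^\alpha},e^{\pm x^\beta}$ and is carried out in \cite{Fock1999}. The only delicate point is keeping track of which of the successive intermediate graphs a given edge currently lives in.

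For preservation of the constraints, fix a puncture $i$ and compare the face-$i$ constraint $c'{}^{i}$ for $\Gamma'=\Gamma_\alpha$ with the face-$i$ constraint $c^{i}$ for $\Gamma$. Tracking how the boundary edge-path of the puncture traverses the region near $\alpha$ shows that in each case the change in the multiplicities $\theta^i{}_\mu$ induced by the move is exactly compensated by the coordinate shifts in \eqref{eq:flip_x} (the sign flip $x^\alpha\mapsto-x^\alpha$ and the logarithmic corrections to $x^{\beta},x^{\gamma},x^{\delta},x^{\epsilon}$), so that $c'{}^{i}\circ W_\alpha=c^{i}$ for every $i$, i.e.\ $c=c'\circ W_\alpha$. (As a reality check, the zero locus of either side is precisely the condition that the holonomy around the puncture $i$ be parabolic, which is unaffected by the change of triangulation.)

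For preservation of the bivector---equivalently, that $\{x'{}^\mu,x'{}^\nu\}_{WP}$, computed from $\pi_{WP}$ with the $x'$ regarded as functions of the $x$ via \eqref{eq:flip_x}, equals the combinatorial bracket read off the new fat graph $\Gamma_\alpha$ for all $\mu,\nu$---the subtlety is that $W_\alpha$ is nonlinear, so the push-forward contributes Jacobian terms involving $\partial_{x^\alpha}\log(1+e^{x^\alpha})=(1+e^{-x^\alpha})^{-1}$ and $\partial_{x^\alpha}\log(1+e^{-x^\alpha})=-(1+e^{x^\alpha})^{-1}$, and one must verify that every such term rational in $e^{x^\alpha}$ cancels, leaving exactly the constant coefficients of \eqref{eq:wp_bivect} for $\Gamma_\alpha$. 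This is the main obstacle: the computation is direct but error-prone, and is done in \cite{Fock1999}. I would organise it by the pairs $(\mu,\nu)$ drawn from $\{\alpha,\beta,\gamma,\delta,\epsilon\}$, use the already-established involutivity and commutativity to cut down the number of independent cases, and invoke the identity $\theta^i{}_\beta-\theta^i{}_\gamma+\theta^i{}_\delta-\theta^i{}_\epsilon=0$ from \eqref{eq:poiss_const} to control the configurations in which the surrounding edges coincide. Once invariance of the bracket is established, the reduction of Theorem \ref{thm:teich_reduc} transports it to the Weil--Petersson structure on $\cT(S)$, yielding the claimed symplectic $\Mod(S)$-action.
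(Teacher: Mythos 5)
Your outline is correct and each step would go through, but for the hardest claim it defers the decisive computation to \cite{Fock1999}, whereas the paper supplies its own self-contained argument that is genuinely different. For the relations of Theorem \ref{thm:mcg_fat_graph} and for the constraints your treatment matches the paper's: involutivity, naturality and commutativity are the local checks you describe, the pentagon is the explicit five-step computation recorded in Appendix \ref{app:flip} (stated there for $z^\alpha$ but identical for $x^\alpha$), and the constraint identity $c=c'\circ W_\alpha$ reduces to $\log(1+e^{x^\alpha})-\log(1+e^{-x^\alpha})=x^\alpha$ applied to the edge sequences $(\beta,\alpha,\epsilon)\mapsto(\beta,\epsilon)$ around a face, exactly as you indicate. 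The divergence is in the bivector preservation, the step you yourself flag as ``the main obstacle'': instead of pushing $\pi_{WP}$ through the nonlinear map and cancelling the Jacobian terms rational in $e^{x^\alpha}$, the paper (Section \ref{subsec:ham}) factors $W_\alpha=A_\alpha\circ B_\alpha$, where $B_\alpha$ is the time-one flow of $H_\alpha(x)=\tfrac{1}{4}(x^\alpha)^2+\Li_2(-e^{x^\alpha})$ with respect to $\pi_{WP}$ --- this works because $H_\alpha$ depends only on $x^\alpha$ and $\{x^\alpha,H_\alpha\}_{WP}=0$, so the Hamiltonian vector field is constant along its own flow and the time-one map is exactly $x\mapsto x+\{x,H_\alpha\}_{WP}$ --- and $A_\alpha$ is the linear map $x^\alpha\mapsto-x^\alpha$, $x^{\beta,\gamma,\delta,\epsilon}\mapsto x^{\beta,\gamma,\delta,\epsilon}+\tfrac{1}{2}x^\alpha$. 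Then $B_\alpha$ preserves $\pi_{WP}$ for free, and $(A_\alpha)_*\pi_{WP}=\pi'_{WP}$ is a constant-coefficient combinatorial check; all the cancellations you would do by hand are absorbed into the Hamiltonian flow. This decomposition is also what the paper analytically continues to get symplecticity on $\cG\cH_\Lambda(M)$ and $T^*\cT(S)$, so it buys strictly more than the direct computation. One small correction: the identity $\theta^i{}_\beta-\theta^i{}_\gamma+\theta^i{}_\delta-\theta^i{}_\epsilon=0$ you invoke to ``control'' coincident-edge configurations is the Casimir property of the constraints and has no bearing on the invariance of the bivector; the degenerate cases are handled simply by the label-independence and multiplicity conventions you already noted for \eqref{eq:wp_bivect}.
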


\section{Moduli spaces of 3d gravity}
\label{sec:3dgravity}

In this section we consider moduli spaces of geometric structures which arise  in the context of 3d gravity, namely the moduli spaces of  maximal globally hyperbolic  (MGH) Einstein spacetimes. These moduli spaces are higher-dimensional generalisations of  Teichm\"uller space and classify the diffeomorphism classes of constant curvature Lorentzian metrics on a three-dimensional manifold $M$. In the following,  $M$  denotes a three-dimensional manifold of topology $\R\times S$, where $S$ is a compact orientable genus $g$ surface with $s$ punctures satisfying $2g-2+s>0$. We also restrict  attention to Einstein metrics which are globally hyperbolic with Cauchy surface $S$ and maximal in the sense that any isometric embedding of $M$ into another globally hyperbolic spacetime $N$ is a global isometry. See \cite{Beem1996} for details on causality of Lorentzian manifolds.

\subsection{The isometry groups in 3d gravity}
 \label{subsec:isom}
\subsubsection*{The isometry groups and their Lie algebras} 
The simplicity of 3d gravity is a consequence of the vanishing of the traceless part of the Riemann tensor, the Weyl tensor, in three dimensions. It implies that the Riemann curvature  tensor and therefore the sectional curvature of a three-dimensional manifold are determined uniquely by its Ricci tensor. It then follows  that any  Einstein spacetime, a solution of Einstein equations with vanishing stress-energy tensor,  is  locally isometric to one of three model Lorentzian manifolds with sectional curvature given by the cosmological constant $\Lambda$.

These model spacetimes are  three-dimensional  Minkowski space  $\mathrm{M}_3$ for $\Lambda=0$,  anti-de Sitter space $\mathrm{AdS}_3$ for $\Lambda=-1$, and  de Sitter space $\mathrm{dS}_3$ for $\Lambda=1$. In the following we  denote these model spacetimes by $X_\Lambda$, their isometry groups by $G_\Lambda=\Isom(X_\Lambda)$ and the associated Lie algebras by $\mathfrak g_\Lambda=\mathrm{Lie}(G_\Lambda)$. As solutions for different values of the curvature $\Lambda$ can be obtained by simple rescalings of the metrics, 
 we restrict attention to the cases $\Lambda=0,-1,1$.

The three model spacetimes have a simple description in terms of the group $\mathrm{PSL}(2,\R)$, which is outlined in Appendix \ref{subsec:model}, and their isometry groups are given by
\begin{align*}
G_\Lambda=\begin{cases}
\mathrm{PSL}(2,\R)\ltimes \mathfrak{sl}(2,\R) & \Lambda=0\\
\mathrm{PSL}(2,\R)\times \mathrm{PSL}(2,\R) & \Lambda=-1\\
\mathrm{PSL}(2,\C) &\Lambda=1.
\end{cases}
\end{align*}
In all cases, the associated  Lie algebra $\mathfrak g_\Lambda$ is a six-dimensional real Lie algebra  and can be described in terms of a common basis for which the cosmological constant plays the role of a structure constant \cite{Witten1988/89}. This basis  involves 
a basis $\{J_i\}_{i=0,1,2}$ 
 of $\mathfrak{sl}(2,\R)$ and three additional basis vectors $\{P_i\}_{i=0,1,2}$ such that Lie bracket is given by
\begin{align}\label{liebr}
[J_i,J_j]=\sum_{k=0}^2 \epsilon_{ij}{}^{k} J_k,\quad[J_i,P_j]=\sum_{k=0}^2\epsilon_{ij}{}^{k}P_k,\quad [P_i,P_j]=-\Lambda\sum_{k=0}^2\epsilon_{ij}{}^{k}J_k,
\end{align}
where $\epsilon_{ijk}$ is the totally antisymmetric tensor in three dimensions with $\epsilon_{012}=1$ and indices are raised and lowered with the three-dimensional Minkowski metric $\eta=\text{diag}(-1,1,1)$.
For $\Lambda=0$, this is simply the  Poincar\'e algebra in three dimensions. For $\Lambda=1$ and $\Lambda=-1$, one can introduce the alternative basis
$\{J_i^\pm\}_{i=0,1,2}$ with $J^\pm_i =\tfrac 1 2 \left(J_i\pm 
{P_i}/\sqrt{-\Lambda} \right)$, 
in which the Lie bracket of $\mathfrak g_\Lambda$ reads
$$
[J_i^\pm, J_j^\pm]=\epsilon_{ij}{}^kJ^\pm_k,\qquad[J_i^\pm, J_j^\mp]=0.
$$
This shows that the Lie algebra \eqref{liebr} is isomorphic to  $\mathfrak{sl}(2,\R)\ltimes\mathfrak{sl}(2,\R)$ for $\Lambda=0$, to  $\mathfrak{sl}(2,\R)\oplus \mathfrak{sl}(2,\R)$ for $\Lambda=-1$ and to $\mathfrak{sl}(2,\C)$ for $\Lambda=1$.
In the following, we also need to consider $\Ad$-invariant symmetric bilinear forms on $\mathfrak{g}_\Lambda$. In all three cases the real vector space of such bilinear forms is two-dimensional, with a basis given by the forms $(\;,\;),\langle\;,\;\rangle:\mathfrak g_\Lambda\times\mathfrak g_\Lambda\to \R$ defined by
\begin{align}\label{eq:forms}
&(J_i,P_j)=0, & & (J_i, J_j)=\eta_{ij}, &  & (P_i,P_j)=-\Lambda\eta_{ij},
\\
&\langle J_i, P_j\rangle=\eta_{ij}, &  & \langle J_i,J_j\rangle=0, & & \langle P_i,P_j\rangle=0.\nonumber
\end{align}

\subsubsection*{A unified description of the Lie algebras} 
A convenient description of  the isometry groups $G_\Lambda$ and their Lie algebras $\mathfrak g_\Lambda$ is obtained by exteding $\mathfrak{sl}(2,\R)$ to a Lie algebra over a commutative real algebra
$R_\Lambda$, see \cite{Meusburger2007}. As a vector space, this algebra $R_\Lambda$ is isomorphic to $\R^2$ and its multiplication law  is given by
\begin{align*}
(x,y)\cdot (u,v)=(xu-\Lambda yv, xv+yu)\qquad \forall x,y,u,v\in\R.
\end{align*}
Writing  $1=(1,0)$ and $\ell=(0,1)$, one obtains a parametrisation of  $R_\Lambda$  analogous to the complex numbers,
and consequently we use the notation  $\mathrm{Re}_\ell(x+\ell y)=x$, $\mathrm{Im}_\ell(x+\ell y)=y$ for all $x,y\in\R$.
A direct computation shows that $R_\Lambda$ is isomorphic to $\C$  with $\ell=\mathrm{i}$ for $\Lambda=1$, to the split-complex numbers for $\Lambda=-1$, and to the dual numbers for $\Lambda=0$.

Note that, for $\Lambda=0,-1$,  the algebra $R_\Lambda$ has zero divisors. For $\Lambda=-1$ these are of the form $\tfrac 1 2 (1\pm \ell) x$  with $x\in \R$, and one has
\begin{align*}
\tfrac 1 2 (1\pm \ell)\cdot \tfrac 1 2 (1\pm \ell)=\tfrac 1 2 (1\pm \ell),\qquad \tfrac 1 2 (1\pm \ell)\cdot \tfrac 1 2 (1\mp\ell)=0.
\end{align*}
This allows one to extend analytic functions $f:\R\to\R$ to analytic functions $f:R_{-1}\to R_{-1}$ 
\begin{align*}
f(x+\ell y)&
=\tfrac 1 2(1+\ell) f(x+y)+\tfrac 1 2 (1-\ell)f(x-y).
\end{align*}
For $\Lambda=0$, the zero divisors in $R_\Lambda$ are of the form $\ell y$ with $y\in\R$.  Analytic functions $f:\R\to\R$ can thus be extended to functions $f:R_0\to R_0$ via
\begin{align*}
f(x+\ell y)=f(x)+\ell f'(x)y.
\end{align*}
Note that these expressions generalise the extension of real analytic functions $f:\R\to\R$ to complex analytic  functions $f:\C\to \C$ and also give rise to the following generalisation of the Cauchy-Riemann differential equations 
\begin{align*}
\frac{\partial \mathrm{Re}_\ell f}{\partial x}=\frac{\partial \mathrm{Im}_\ell f}{\partial y},\qquad \frac{\partial \mathrm{Re}_\ell f}{\partial y}=-\Lambda\,\frac{\partial \mathrm{Im}_\ell f}{\partial x}.
\end{align*}
The algebra $R_\Lambda$  allows one to identify  the Lie algebras $\mathfrak g_\Lambda$ with the Lie algebra of traceless $2\times 2$-matrices with entries in $R_\Lambda$ \cite{Meusburger2007}. By considering a basis $\{J_i\}_{i=0,1,2}$ of $\mathfrak{sl}(2,\R)$ with Lie bracket $[J_i,J_j]=\epsilon_{ij}{}^kJ_k$ and setting $P_i=\ell J_i$ one  obtains the Lie algebras $\mathfrak g_\Lambda$ with Lie bracket \eqref{liebr}. This description of the Lie algebras $\mathfrak{g}_\Lambda$ in terms of $R_\Lambda$ also gives rise to a common description of the bilinear forms \eqref{eq:forms}. They are obtained as the real and imaginary part of the bilinear extension of the Killing form $\kappa$ on $\mathfrak{sl}(2,\R)$ to $\mathfrak g_\Lambda$
\begin{align}\label{formreal}
(\;,\;)=2\mathrm{Re}_\ell\;\kappa\qquad \langle\;,\;\rangle=2\mathrm{Im}_\ell\; \kappa,
\end{align}
where $\kappa(X,Y)=\mathrm{Tr}(XY)$ for $X,Y\in\mathfrak g_\Lambda$.

\subsection{Einstein spacetimes and their moduli spaces}
\label{subsec:ghm_spt}
\subsubsection*{Classification of Einstein spacetimes}
Generalising the results summarised in Section \ref{sec:background}, we now consider the moduli spaces of maximal globally hyperbolic (MGH) Einstein metrics of curvature $\Lambda$ on $M$ that induce a complete metric of  finite area on the Cauchy surface $S$  modulo orientation preserving diffeomorphisms
$$\cM_\Lambda(M)=\Ein_\Lambda(M)/\Diff^+(M).$$
As in the case of the Riemann moduli space, it is convenient to consider the universal covering space of $\cM_\Lambda(S)$ by identifying only those metrics that are related by the subgroup of diffeomorphisms isotopic to the identity. This leads to  the Teichm\"uller-like moduli spaces
$$\cG\cH_\Lambda(M)=\Ein_\Lambda(M)/\Diff_0(M).$$

One approach to the classification of MGH Einstein spacetimes in three dimensions
is based on their
description as quotients of regions in the model spacetimes by a discrete group of isometries \cite{Witten1988/89,Mess2007,Scannell1999,Benedetti2009}. 
The resulting classification  is analogous to  the uniformisation theorem for hyperbolic surfaces in two dimensions and states 
 that MGH Einstein spacetimes are largely determined by their holonomy representation $\rho:\pi_1(M)\cong\pi_1(S)\to G_\Lambda$, which defines the action of $\pi_1(S)$ on the universal cover of $M$.
More precisely, for $\Lambda=0,-1$, a MGH Einstein metric $g\in \cG\cH_\Lambda(M)$ can be described as follows. First, consider the universal cover  $\tilde S$ of the Cauchy surface $S$. It is shown in \cite{Mess2007} that $\tilde S$ isometrically embeds in $X_\Lambda$ and that the universal cover $\tilde M$ of $M$ is obtained from this embedding. For $\Lambda=-1$ $\tilde M$ coincides with the domain of dependence of $\tilde S$ and for $\Lambda=0$ it is the chronological future of this domain of dependence. In other words, the universal cover of $M$ isometrically embeds in $X_\Lambda$ and the group of deck transformations provides a representation of $\pi_1(M)\cong\pi_1(S)$ into $G_\Lambda$.

Due to the non-degeneracy of the three-dimensional metric, not all representations $\rho:\pi_1(S)\to G_\Lambda$ arise 
as holonomy representations of MGH spacetimes. The allowed representations are also described in \cite{Mess2007} and can be characterised as follows.  For $\Lambda=-1$ the allowed holonomy representations $\rho:\pi_1(S)\to\mathrm{PSL}(2,\R)\times\mathrm{PSL}(2,\R)$ are the ones that decompose into two Fuchsian components $\rho_{l,r}=\pr_{1,2}\circ \rho:\pi_1(S)\to\mathrm{PSL}(2,\R)$. For $\Lambda=0$, they are representations $\rho:\pi_1(S)\to\mathrm{PSL}(2,\R)\ltimes\mathfrak{sl}(2,\R)$ which decompose into a Fuchsian part $\rho_0=\pr_1\circ \rho:\pi_1(S)\to\mathrm{PSL}(2,\R)$ and a $\rho_0$-cocycle $\tau=\pr_2\circ H:\pi_1(S)\to\mathfrak{sl}(2,\R)$ with
$$\tau(ab)=\tau(a)+\rho_0(a)\tau(b)\rho_0(a)^{-1},\qquad\forall a,b\in \pi_1(S).$$
The moduli spaces for $\Lambda=-1,0$ are thus shown to be in one-to-one correspondence with certain components of the representation variety $\Rep_\cC(S,\mathrm{G}_\Lambda)$.

For $\Lambda=1$ the correspondence between holonomy representations and MGH spacetimes  is only locally injective, which means that the holonomy data is not sufficient to distinguish certain MGH de Sitter spacetimes. This is a consequence of the fact that the universal cover of $M$ is  in general only immersed in $X_1$. The suitable data for the classification of such spacetimes is obtained by grafting of hyperbolic surfaces along measured geodesic laminations, which also provides an alternative description for the flat and AdS case.

\subsubsection*{Grafting parametrisation}

As explained in Appendix \ref{subsec:model}, all three model spacetimes $X_\Lambda$ are equipped with certain embeddings of the hyperbolic plane, either in $X_\Lambda$ itself ($\Lambda=-1,0$) or in an appropriate dual space ($\Lambda=1$). For a spacetime with purely Fuchsian holonomy $\rho_0:\pi(S)\to\mathrm{PSL}(2,\R)\subset G_\Lambda$, the action of the holonomy group on $X_\Lambda$ induces an action on these embedded hyperbolic planes and thus defines a hyperbolic surface $h\in\cT(S)$. Applying earthquakes along measured laminations for $h$, one then obtains a description of all such Fuchsian spacetimes in terms of the fibre $\cM\cL_h(S)$.

General MGH spacetimes, whose holonomies are not restricted to the subgroup $\mathrm{PSL}(2,\R)\subset G_\Lambda$, are obtained as deformations of 
these  Fuchsian spacetimes via grafting. Grafting is an operation $Gr:\cM\cL(S)\to\cG\cH_\Lambda(M)$ that associates to each point $h\in\cT(S)$ and each measured geodesic lamination $\lambda\in\cM\cL_h(S)$ a MGH metric $Gr^\lambda(h)\in\cG\cH_\Lambda(M)$, called the grafting of $h$ along $\lambda$. In terms of representations, the construction is similar to the one of earthquakes on Riemann surfaces and can be described as follows. 

Consider a point  $h\in \cT(S)$ with associated Fuchsian representation $\rho_0:\pi_1(S)\to\mathrm{PSL}(2,\R)$. Then the grafting of $h$ along $\lambda\in\cM\cL_h(S)$   corresponds to another representation $\rho:\pi_1(S)\to G_\Lambda$ that is the product $\rho=Z_G^\lambda\cdot\rho_0$ of the Fuchsian representation $\rho_0$  and a grafting cocycle $Z_G^\lambda:\pi_1(S)\to G_\Lambda$. Such cocycles are defined in exactly the same way as the cocycles for earthquakes in \eqref{eq:zedef}, only now all measures are multiplied by the $R_\Lambda$-imaginary unit $\ell$. Let $\tilde\lambda\in\cM\cL(\bbH^2)$ be the lift of $\lambda$ to the universal cover of $(S,h)$. As the hyperbolic plane is embedded in $X_\Lambda$ (or in the associated dual space), there is a well defined notion of rotations around any leaf $\tilde l$ of $\tilde\lambda$. For any point $p\in \tilde l$, this rotation is given by $\Ad_{A_p}E(\ell \mu_p)$, where $A_p$ is the hyperbolic isometry mapping the imaginary axis to $\tilde l$,  $\mathrm{i}
$ to $p$ and preserving the orientation, $\mu_p$ is the associated oriented weight, as in \eqref{eq:zedef}, which determines  the angle of rotation and
\begin{align}
E(\ell\mu)=\begin{cases}(\id,\mu J_1) & \Lambda=0, \cr (E(\mu),E(-\mu)) & \Lambda=-1, \cr  E(\mathrm{i}\mu) & \Lambda=1.\end{cases}
\end{align}
Here $J_1$ denotes the generator of hyperbolic translations along the imaginary axis on $\bbH^2$, see \eqref{eq:jgen}.
Note that this rotation is a direct generalisation of the hyperbolic translation along a geodesic in \eqref{eq:coord_earth} and the cocycle 
 $Z_G^\lambda:\pi_1(S)\to G_\Lambda$ is obtained as a direct generalisation of the cocycle \eqref{eq:zedef}
\begin{align}\label{eq:gr_cocyle}
Z_G^\lambda(a)=Z_E^{\ell\lambda}(a)
=\prod_{p\in \lambda\cap a}\Ad_{A_p}E(\ell\mu_p).
\end{align}
In particular, this expression makes it clear that for $\Lambda=1$ the correspondence between measured laminations and de Sitter holonomy representations can only be locally injective, since laminations with the same support whose measures differ by multiples of $2\pi$ give rise to the same cocycle and hence to the same holonomy representation.
This correspondence between MGH Einstein spacetimes and measured laminations allows one to identify the former with the bundle of measured geodesic laminations over Teichm\"uller space. 
\begin{theorem}[Mess \cite{Mess2007}, Scannell \cite{Scannell1999}, Benedetti-Bonsante \cite{Benedetti2009}]
 Let $S$ be a closed orientable surface of  genus $g$ and  with $s$ punctures satisfying $2g-2+s>0$. Then the Teichm\"uller-like moduli spaces of MGH Einstein spacetimes on $M=\R\times S$ are  homeomorphic to the bundle of measured geodesic laminations over Teichm\"uller space
$$\cG\cH_\Lambda(M)\cong\cM\cL(S).$$
\end{theorem}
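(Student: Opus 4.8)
\emph{Proof sketch.} The plan is to realise the claimed homeomorphism explicitly as the grafting map $Gr\colon\cM\cL(S)\to\cG\cH_\Lambda(M)$, $(h,\lambda)\mapsto Gr^\lambda(h)$, built from the cocycle $Z_G^\lambda=Z_E^{\ell\lambda}$ of \eqref{eq:gr_cocyle}, and to establish in three steps that it is (i) well defined, (ii) a bijection, and (iii) a homeomorphism. The geometric substance of each step is contained in the works of Mess \cite{Mess2007}, Scannell \cite{Scannell1999} and Benedetti--Bonsante \cite{Benedetti2009}; the role of the proof is to assemble these inputs and to exploit the algebra $R_\Lambda$ so that all three values of $\Lambda$ are handled in a uniform notation, even though the intermediate identifications differ.

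For well-definedness I would first repeat, over $R_\Lambda$ instead of $\R$, the computation that makes $Z_E^\lambda$ a $\rho_0$-cocycle: the factors $\Ad_{A_p}E(\ell\mu_p)$ attached to points of a single leaf commute and transform equivariantly under $\pi_1(S)$, so $Z_G^\lambda$ is a $\rho_0$-cocycle and $\rho=Z_G^\lambda\cdot\rho_0$ is a homomorphism $\pi_1(S)\to G_\Lambda$ whose $\mathrm{PSL}(2,\R)$-part is the Fuchsian $\rho_0$. One then checks, invoking \cite{Mess2007,Scannell1999,Benedetti2009}, that the developing map determined by this data builds a spacetime on $\R\times S$ which is MGH, has $S$ as Cauchy surface, and induces a complete finite-area metric on a leaf, so that $Gr^\lambda(h)\in\cG\cH_\Lambda(M)$; independence of the isotopy class of $\lambda$ and of the normalisation of $\rho_0$ is part of the same verification.

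Bijectivity is where the cited classification theorems enter, case by case. For $\Lambda=-1$, Mess identifies $\cG\cH_{-1}(M)$ with $\cT(S)\times\cT(S)$ through the pair of Fuchsian holonomies $(\rho_l,\rho_r)$, and a direct inspection of \eqref{eq:gr_cocyle} shows that the two components of $Gr^\lambda(h)$ are the holonomies of $Eq^{\lambda}(h)$ and $Eq^{-\lambda}(h)$; Thurston's earthquake theorem (\cite{Kerckhoff1983}) then gives, for any $(h_l,h_r)$, a unique $(h,\lambda)$ realising it as the ``midpoint'' of the earthquake path joining $h_l$ to $h_r$, which is exactly bijectivity of $Gr$. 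For $\Lambda=0$, Mess identifies $\cG\cH_0(M)$ with the pairs consisting of a Fuchsian $\rho_0$ and a $\rho_0$-cocycle in $\mathfrak{sl}(2,\R)$, i.e.\ with $T\cT(S)$ via Goldman's identification of such cocycles with the tangent space; and $T\cT(S)\cong\cM\cL(S)$ by Thurston's homeomorphism sending $\lambda$ to its earthquake vector field $\tfrac{d}{dt}\big|_{t=0}Eq^{t\lambda}(h)$, which is precisely the $\Lambda\to 0$ limit of $Z_G^\lambda$ --- so again $Gr$ is a bijection. For $\Lambda=1$ the holonomy representation is only a local invariant: laminations whose transverse measures differ by integer multiples of $2\pi$ have the same holonomy, as \eqref{eq:gr_cocyle} makes clear. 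Here I would appeal directly to the construction of Scannell and of Benedetti--Bonsante, in which every MGH de Sitter spacetime is produced from a unique $(h,\lambda)$ by grafting together with the canonical Wick rotation, the full transverse measure being recovered from the developing map; this is the statement that $Gr$ is a bijection onto $\cG\cH_1(M)$.

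It remains to upgrade these bijections to homeomorphisms, where $\cM\cL(S)$ carries Thurston's topology and $\cG\cH_\Lambda(M)$ the quotient topology of $\Ein_\Lambda(M)/\Diff_0(M)$. Continuity of $Gr$ follows from continuity of the product \eqref{eq:gr_cocyle} in $(h,\lambda)$ together with continuous dependence of the spacetime on its developing data; for $\Lambda=0,-1$ continuity of the inverse comes for free, since the holonomy map is a continuous open identification onto the relevant component of $\Rep_\cC(S,G_\Lambda)$ and the intermediate identifications with $\cT(S)\times\cT(S)$ and $T\cT(S)$ are classical homeomorphisms. The hard part will be the $\Lambda=1$ case: there the homeomorphism cannot be read off the holonomy, so one must control the topology on the space of de Sitter developing maps and show that the grafting/Wick-rotation assignment of \cite{Scannell1999,Benedetti2009} is open, i.e.\ that nearby $(h,\lambda)$ give nearby spacetimes and conversely --- this properness/openness statement for immersed, rather than embedded, domains is the main technical obstacle.
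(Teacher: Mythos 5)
This theorem is imported from the literature: the paper gives no proof of its own, only the surrounding expository description of the grafting map $Gr^\lambda(h)$ via the cocycle \eqref{eq:gr_cocyle} and a citation to Mess, Scannell and Benedetti--Bonsante for bijectivity and continuity. Your sketch assembles exactly those same ingredients in the same way (grafting as the explicit map, case-by-case bijectivity from the $\cT(S)\times\cT(S)$, cocycle/$T\cT(S)$, and de Sitter classifications, with the $\Lambda=1$ non-injectivity of holonomy correctly flagged as the delicate point), so it is consistent with the paper's treatment and correct at the level of detail the paper itself operates at.
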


\subsubsection*{The gravitational symplectic structure}
\label{subsec:symp}
From a physics viewpoint, the parametrisation of MGH spacetimes in terms of holonomies is closely related to the Chern-Simons formulation of 3d gravity developed in \cite{Achucarro1986,Witten1988/89}. In this formulation, the spacetime metric is first decomposed into a (co-)frame field $e$ and an associated spin connection $\omega$, which are then combined into a $G_\Lambda$-connection
$$A=\omega^iJ_i+e^iP_i=(\omega^i+\ell e^i)J_i,$$
where $J_i$ and $P_i$ denote the basis of $\mathfrak{g}_\Lambda$ introduced in \eqref{liebr}. The requirements of flatness and vanishing torsion on $e$ and $\omega$  translate into a flatness condition $F=dA+A\wedge A=0$ for  the $G_\Lambda$-connection.   This allows one to relate the moduli spaces of MGH Einstein spacetimes of curvature $\Lambda$ to the moduli space of flat $G_\Lambda$ on the Cauchy surface $S$ \cite{Witten1988/89}. In particular, the gravitational symplectic structure on $\cG\cH_\Lambda(M)$ can be characterised in terms of the Chern-Simons symplectic structure, which agrees with the Atiyah-Bott symplectic structure on $\Rep_\cC(S,G_\Lambda)$ \cite{Atiyah1983}.
 
We start by summarising the relevant results on this symplectic structure for a general structure group $G$. Given an $\Ad$-invariant, non-degenerate  symmetric bilinear form $B:\mathfrak g\times\mathfrak g\to\R$ on the Lie algebra $\mathfrak g=\mathrm{Lie}\, G$, one obtains a canonical symplectic structure 
 on the  moduli space $\Rep_\cC(S,G)$ \cite{Atiyah1983,Goldman1984}. It was shown by Goldman \cite{Goldman1986}
 that the corresponding Poisson structure can be expressed in terms of the bilinear form $B$ and  the intersection behaviour of curves representing  elements of $\pi_1(S)$ as follows.
To each class function $f\in C_G(G)$ and  each element $a\in\pi_1(S)$ one associates a function $f_a:\Rep_\cC(S,G)\to\R$ defined by $f_a(\rho)=f(\rho(a))$. Then, the Goldman Poisson bracket between two such functions  $f_a,g_b$ is defined  by 
\begin{align}\label{eq:goldman}
\{f_a,g_b\}_G(\rho)=\sum_{p\in a\cap b} \epsilon_p(a,b)\, B(F_{a_p}(\rho), G_{b_p}(\rho)),
\end{align}
where the sum is over the intersection points of $a,b\in\pi_1(S)$ and  $\epsilon_p(a,b)$ denotes their oriented intersection number at $p$. The indices $a_p,b_p$ on the right-hand side of \eqref{eq:goldman} stand for representatives of $a$ and $b$ based at the point $p$, and the functions  $F_{a_p}, G_{b_p}:\Rep_\cC(S,G)\to\mathfrak{g}$ are defined by
\begin{align*}
B(F_{a_p}(\rho),X)=\frac d {dt}\bigg\vert_{t=0} f(\rho(a_p)\,e^{tX}),
\qquad \forall X\in\mathfrak{g}.
\end{align*}

In the following, it will be convenient to express this Poisson structure \eqref{eq:goldman} in terms of a  basis $\{T_k\}$ of $\mathfrak g$. 
Denoting by $B_{kl}=B(T_k,T_l)$ the coefficients  of the Ad-invariant symmetric bilinear form $B$ with respect to this basis
 and by $B^{kl}$  the entries of the inverse of the coefficient matrix,  one find that the bracket \eqref{eq:goldman}  is given by
\begin{align}
\label{eq:b_gold}
B(F_{a_p}(\rho), G_{b_p}(\rho))=\sum_{k,l=1}^{\dim(\mathfrak{g})}  B^{kl}B(F_{a_p},T_k)B(G_{b_p},T_l).
\end{align}

Note that although the definition of this bracket involves a choice of paths on $S$ which represent the elements $a,b\in \pi_1(S)$, it is shown in \cite{Goldman1986} that the bracket  does not depend on this choice and induces a symplectic structure on $\Rep_\cC(S,G)$.

\begin{theorem} [Goldman \cite{Goldman1984,Goldman1986}]
Formula \eqref{eq:goldman} defines a symplectic structure 
 on $\Rep_\cC(S,G)$ which coincides with  the Atiyah-Bott symplectic structure.
\end{theorem}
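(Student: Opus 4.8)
The plan is to follow Goldman's original argument \cite{Goldman1984,Goldman1986}, identifying both the bracket \eqref{eq:goldman} and the Atiyah-Bott structure with a single cup-product pairing in the group cohomology of $\pi_1(S)$. Since $S$ is a $K(\pi_1(S),1)$, the tangent space to $\Rep_\cC(S,G)$ at a class $[\rho]$ is the parabolic group cohomology $H^1_{par}(\pi_1(S),\mathfrak g)$, where $\mathfrak g$ carries the $\Ad\circ\rho$-module structure and the parabolic condition encodes the fixed conjugacy classes at the punctures: a tangent vector is represented by a cocycle $u:\pi_1(S)\to\mathfrak g$ with $u(ab)=u(a)+\Ad_{\rho(a)}u(b)$, modulo coboundaries $u(a)=\Ad_{\rho(a)}\xi-\xi$. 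Under the de Rham isomorphism $H^1_{dR}(S,\mathrm{ad}\,P)\cong H^1(\pi_1(S),\mathfrak g)$ the Atiyah-Bott form is the cup product $H^1(\pi_1(S),\mathfrak g)\times H^1(\pi_1(S),\mathfrak g)\to H^2(\pi_1(S),\R)\cong\R$ obtained by using $B$ as the coefficient pairing and evaluating on the fundamental class $[S]$; this is closed and non-degenerate, so it suffices to show that the Poisson bracket it induces on functions of the form $f_a$ is given by \eqref{eq:goldman}.

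Next I would compute the differential of $f_a$ for a class function $f\in C_G(G)$ and $a\in\pi_1(S)$. Writing a path of representations through $\rho$ as $\rho_t(b)=(\id+t\,u(b)+O(t^2))\rho(b)$ and using $f(xy)=f(yx)$, one gets $df_a|_{[\rho]}([u])=B(F_a(\rho),u(a))$ with $F_a$ the variation function from the statement; the same identity forces $\Ad_{\rho(a)}F_a(\rho)=F_a(\rho)$, so $F_a(\rho)$ is a well-defined flat section of $\mathrm{ad}\,P$ along any loop representing $a$. The geometric heart of the argument is then to identify the Hamiltonian cocycle $[u^{f_a}]$ characterised by $\omega_{AB}([u^{f_a}],\cdot)=df_a(\cdot)$: realising $a$ by a loop on $S$, $[u^{f_a}]$ is the image in $H^1(S,\mathfrak g)$ of the class of $F_a$ on that loop, i.e.\ the Poincar\'e dual of $a$ with coefficients $F_a$. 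This is verified by a direct cup-product calculation against an arbitrary class $[v]$, evaluated on a cellular model of $S$ in which the loop appears as a subcomplex.

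Given this description of $X_{f_a}$, the bracket $\{f_a,g_b\}_G=\omega_{AB}(X_{f_a},X_{g_b})$ becomes the cup product of the Poincar\'e duals of $a$ and $b$, paired via $B$ on the coefficients. On a surface that cup product is represented by the signed intersection $0$-cycle of $a$ and $b$, and parallel-transporting $F_a$ and $G_b$ to a common intersection point $p$ yields precisely $\sum_{p\in a\cap b}\epsilon_p(a,b)\,B(F_{a_p}(\rho),G_{b_p}(\rho))$, which is \eqref{eq:goldman}. As a cohomological pairing this is manifestly independent of the chosen representative loops; it is a Poisson bracket because $\omega_{AB}$ is symplectic; and it coincides with the Atiyah-Bott structure by construction. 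Since the differentials of the $f_a$ span the cotangent space of $\Rep_\cC(S,G)$ at a generic point, the bracket extends uniquely to all smooth functions.

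I would expect the main obstacle to be the geometric identification of $X_{f_a}$ and the attendant Poincar\'e-duality computation. The delicate points are: choosing a cellular decomposition of $S$ in which both loops are subcomplexes so the cup product can be evaluated by hand; handling transverse self-intersections of the representing loops rather than assuming them embedded, so that $F_a$ must be parallel-transported around the branches of $a$ meeting at each intersection point; keeping track of the twisted coefficients and of the invariance $\Ad_{\rho(a)}F_a=F_a$, which is exactly what makes $F_a$ a legitimate local coefficient along $a$; and, in the punctured case, checking that the cocycles $u^{f_a}$ satisfy the parabolic boundary conditions, equivalently that the flows of the $f_a$ preserve the fixed conjugacy classes at the punctures, so that $X_{f_a}$ is genuinely tangent to $\Rep_\cC(S,G)$. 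A routine additional check is that $f_a$ and its Hamiltonian flow descend from $\Hom(\pi_1(S),G)$ to $\Hom(\pi_1(S),G)/G$, which is immediate since $f$ is a class function.
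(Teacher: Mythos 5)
The paper does not prove this statement itself---it is cited from Goldman's work---and your proposal is a faithful reconstruction of Goldman's original cohomological argument: tangent spaces as (parabolic) group cohomology, the Atiyah--Bott form as the cup product paired via $B$ and evaluated on the fundamental class, the identification of the Hamiltonian class of $f_a$ with the Poincar\'e dual of $a$ with coefficient $F_a$, and the intersection-pairing computation yielding \eqref{eq:goldman}. The delicate points you flag (transverse self-intersections of the representing loops, the parabolic boundary conditions at the punctures ensuring $X_{f_a}$ is tangent to $\Rep_\cC(S,G)$, and the generic spanning of the cotangent space by the $df_a$) are precisely the ones Goldman's argument must and does handle, so the outline is sound and matches the cited route.
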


In particular, Goldman's symplectic structure can be used to describe the Weil-Petersson structure on Teichm\"uller space, by realising the latter as a connected component 
of $\Rep_\cC(S,\mathrm{PSL}(2,\R))$. In other words, for the structure group $\mathrm{PSL}(2,\R)$ and the the Killing form $\kappa$ on $\mathfrak{sl}(2,\R)$,  the restriction of the Goldman Poisson bracket \eqref{eq:goldman} to the Teichm\"uller component of $\Rep_\cC(S,\mathrm{PSL}(2,\R))$ induces the Weil-Petersson symplectic structure on Teichm\"uller space.

Similarly, the symplectic structures on the moduli spaces of 3d MGH spacetimes are closely related to the Goldman bracket on $\Rep_{\mathcal C}(S,G_\Lambda)$. However, unlike in the case of $\mathrm{PSL}(2,\R)$, the space of $\Ad$-invariant symmetric bilinear forms on $\mathfrak g_\Lambda$ is two-dimensional, and there are inequivalent versions of the Goldman bracket on these moduli spaces, corresponding to different  linear combinations of the bilinear forms $(\;,\; )$ and $\langle\;,\;\rangle$ defined in \eqref{eq:forms}. It is shown in \cite{Witten1988/89} that the bilinear forms on $\mathfrak g_\Lambda$ that are relevant for 3d gravity 
are the forms $\langle\;,\;\rangle$, which according to  equation \eqref{formreal} can be interpreted as  the imaginary part of the bilinear extension of the Killing form $\kappa$ to $\mathfrak g_\Lambda$.
\begin{theorem} [Witten \cite{Witten1988/89}]\label{thm:witten}
The gravitational Poisson structure on the Teichm\"uller-like moduli spaces of MGH spacetimes agrees with the restriction of the imaginary part of the Atiyah-Bott-Goldman Poisson structure on the $G_\Lambda$-representation variety.
\end{theorem}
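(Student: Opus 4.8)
The plan is to run the Chern--Simons argument indicated in the text. One first exhibits the Einstein--Hilbert action with cosmological constant $\Lambda$ as a $G_\Lambda$-Chern--Simons action for the particular bilinear form $\langle\;,\;\rangle$ of \eqref{eq:forms}; one then observes that covariant phase spaces of Lagrangians differing by a boundary term carry the same symplectic structure, which reduces the problem to computing the Chern--Simons symplectic form and identifying it, via Goldman, with the bracket \eqref{eq:goldman} for $\langle\;,\;\rangle$; finally one matches this reduced phase space with $\cG\cH_\Lambda(M)$ using the classification recalled in the previous subsection.

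First I would set up the first-order formulation of three-dimensional gravity: the fields are a frame $e = e^i P_i$ and a spin connection $\omega = \omega^i J_i$, with action
\[
S[e,\omega] = \int_M \Big( 2\, e_i\wedge F^i(\omega) + \tfrac{\Lambda}{3}\,\epsilon_{ijk}\, e^i\wedge e^j\wedge e^k \Big),
\]
where $F^i(\omega) = d\omega^i + \tfrac12\,\epsilon^i{}_{jk}\,\omega^j\wedge\omega^k$ is the curvature of $\omega$. Combining the fields into $A = \omega^i J_i + e^i P_i = (\omega^i + \ell e^i) J_i$ and using \eqref{liebr}, the curvature $F(A) = dA + A\wedge A$ splits into a $J$-part given by $F(\omega)$ corrected by an $e\wedge e$ term and a $P$-part equal to the covariant torsion $d_\omega e$, so that $F(A) = 0$ is equivalent to the Einstein equations together with vanishing torsion. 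Using $\langle J_i, P_j\rangle = \eta_{ij}$ and $\langle J_i,J_j\rangle = \langle P_i, P_j\rangle = 0$, a direct computation then shows that $S[e,\omega]$ equals, up to a boundary term, the Chern--Simons action $S_{CS}[A] = \tfrac12\int_M \langle A\wedge dA + \tfrac13\, A\wedge[A,A]\rangle$; equivalently, by \eqref{formreal}, it is $2\,\mathrm{Im}_\ell$ of the $\mathfrak{sl}(2,\R)$-Chern--Simons functional of the $R_\Lambda$-valued connection $A$. The complementary combination $(\;,\;) = 2\,\mathrm{Re}_\ell\,\kappa$ yields the parity-odd ``exotic'' action, which is why only $\langle\;,\;\rangle$ is relevant for gravity.

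Next I would compare the two covariant phase spaces. For a first-order Lagrangian the symplectic form on the space of solutions is obtained from the boundary term $\theta$ in the variation $\delta S$, as $\Omega = \int_S \delta\theta$; since $S[e,\omega]$ and $S_{CS}[A]$ differ only by a boundary term their potentials $\theta$ differ by an exact form, and hence the two symplectic structures coincide. The Chern--Simons symplectic form, on the reduced space of flat $G_\Lambda$-connections on $S$ modulo gauge, is the Atiyah--Bott form $\omega(a,b) = \tfrac12\int_S \langle a\wedge b\rangle$ on $\mathfrak g_\Lambda$-valued $1$-cocycles, which by the Goldman theorem quoted above agrees with the bracket \eqref{eq:goldman} built from $\langle\;,\;\rangle$. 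It then remains to match this reduced phase space with $\cG\cH_\Lambda(M)$: by the classification recalled in the previous subsection the holonomy map identifies $\cG\cH_\Lambda(M)$ with a union of connected components of $\Rep_\cC(S,G_\Lambda)$ for $\Lambda = 0,-1$, and locally with such a component for $\Lambda = 1$, and on this locus $\cG\cH_\Lambda(M)$ carries by definition the restriction of the Chern--Simons symplectic structure. Combining these facts gives the theorem.

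The algebraic identity $S_{\mathrm{grav}} = S_{CS}$ is routine once one projects onto $\mathrm{Im}_\ell$; the real work, and the main obstacle, is the careful comparison of the various quotients and subspaces. One must check that Chern--Simons gauge transformations and symplectic reduction are compatible with the Teichm\"uller-like quotient by $\Diff_0(M)$ and with the representation-variety picture on the relevant \emph{open} locus: the frame non-degeneracy that singles out the physical subspace of $\Rep_\cC(S,G_\Lambda)$; the prescribed parabolic conjugacy classes at the punctures encoded by the subscript $\cC$; for $\Lambda = 1$ the fact that the holonomy map is only a local symplectomorphism, so that the statement must be read pointwise along the grafting parametrisation; and for $\Lambda = 0$ the affine/cotangent structure of $\mathrm{PSL}(2,\R)\ltimes\mathfrak{sl}(2,\R)$, where one must verify that $\langle\;,\;\rangle$ restricted to the $\rho_0$-cocycle directions reproduces the canonical duality pairing. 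The unified $R_\Lambda$-description of $\mathfrak g_\Lambda$ together with \eqref{formreal} is precisely what lets one treat all three values of $\Lambda$ uniformly.
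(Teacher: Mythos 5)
Your proposal is correct and follows essentially the same route as the paper, which states this result as a citation of Witten \cite{Witten1988/89} and sketches precisely this Chern--Simons argument in the surrounding discussion: the decomposition of the metric into frame and spin connection, their combination into a flat $G_\Lambda$-connection, the identification of the Einstein--Hilbert action with the Chern--Simons action for the bilinear form $\langle\;,\;\rangle=2\,\mathrm{Im}_\ell\,\kappa$, and the resulting identification of the gravitational symplectic structure with the Atiyah--Bott--Goldman structure restricted to the relevant locus of $\Rep_\cC(S,G_\Lambda)$. Your additional care about the quotients, the non-degeneracy locus, and the only-local identification for $\Lambda=1$ fills in details the paper leaves implicit but does not change the approach.
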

Only for this choice of the bilinear form, the Chern-Simons action agrees with the Einstein-Hilbert action for 3d gravity in Cartan's formulation. Other choices of the $\Ad$-invariant symmetric bilinear form on $\mathfrak g_\Lambda$ yield a different action which gives rise to  the same equations of motion but  induces a different  symplectic structure on the moduli space. It should therefore be expected that the choice of the correct bilinear form has important consequences for the resulting quantum theory.

\section{Generalised shear coordinates}\label{sec:general_shear}

We are now ready to introduce generalised shear-bending coordinates on the Teichm\"uller-like moduli spaces $\cG\cH_\Lambda(M)$. This will be achieved by parametrising the deformation cocycles in terms of analytic shear coordinates on $\cM\cL^{R_\Lambda}(S)$, the bundle of $R_\Lambda$-valued measured geodesic laminations. Using Thurston's earthquake theorem, we first define shear coordinates on the bundle $\cM\cL(S)$ of measured geodesic laminations and, using the fact that the earthquake cocycles depend  analytically on the set of measures on a lamination with fixed support, we then define an analytic extension of these coordinates to $\cM\cL^{R_\Lambda}(S)$. This construction gives rise to coordinates on the moduli spaces $\cG\cH_\Lambda(M)$ that have a clear  geometric interpretation in terms of grafting along ideal edges of an ideal triangulation of $S$. We then derive an expression for the gravitational symplectic structure on $\cG\cH_\Lambda(M)$ in terms of these  coordinates and describe 
its relation to the Weil-Petersson symplectic structure  and to the cotangent bundle over Teichm\"uller space.

\subsection{Definition of shear-bending coordinates}
\label{subsec:general_shear}

\subsubsection*{Shear coordinates for $\cM\cL(S)$ via Thurston's theorem}
To construct generalised shear coordinates, we first show how Thurston's shear coordinates on Teichm\"uller space 
provide  a global parametrisation of the bundle $\cM\cL(S)$ of measured geodesic laminations via Thurston's earthquake theorem, see end of Subsection \ref{subsec:back_teichmueller}.

Let $h\in\cT(S)$ be a point in Teichm\"uller space, $\lambda\in\cM\cL_h(S)$ a measured geodesic lamination for $h$ and $Eq^\lambda(h)\in\cT(S)$ the earthquake  of $h$ along $\lambda$. For an embedded trivalent fat graph $\Gamma$, denote by $x^\alpha=x^\alpha(h)$ and $x'{}^\alpha=x^\alpha(Eq^\lambda(h))$ the shear coordinates of $h$ and $Eq^\lambda(h)$ associated to an edge $\alpha\in E(\Gamma)$. Comparing the holonomy representations  \eqref{eq:hol} of $h$ and $Eq^\lambda(h)$ in terms of shear coordinates, we obtain the following parametrisation of the associated earthquake cocycle $Z^\lambda_E:\pi_1(S)\to\mathrm{PSL}(2,\R)$ defined in \eqref{eq:zedef}
\begin{align}\label{eq:cocycle}
&Z_E^\lambda(a)=\rho^\lambda(a)\rho(a)^\inv
=\Ad_{A^a_n(x)} E(x'{}^{\alpha_n}\!\!-\!x{}^{\alpha_n})\cdots 
\Ad_{A^a_1(x)}E(x'{}^{\alpha_1}\!\!-\!x{}^{\alpha_1}).
\end{align}
Here $(\alpha_1,...,\alpha_n)$ is the sequence of edges of $\Gamma$ representing $a\in\pi_1(S)$ and $A^a_k(x)$ is the hyperbolic isometry that maps the imaginary axis on $\bbH^2$ to the lift of the ideal geodesic dual to $\alpha_k$
\begin{align}\label{eq:adef}
A^a_k(x)
=P^a_n E(x^{\alpha_n})P^a_{n-1}\cdots E(x^{\alpha_{k+1}}) P^a_k.
\end{align}
This expression for the earthquake cocycle $Z_E^\lambda$ in terms of the difference between the shear coordinates of $Eq^\lambda(h)$ and $h$ then allows us to define coordinates $u^\alpha:\cM\cL_h(S)\to\R$ parametrising the fibres of $\cM\cL(S)$ via
\begin{align}\label{eq:udef}
u^\alpha(\lambda)=x^\alpha(Eq^\lambda(h))-x^\alpha(h),\qquad \forall \lambda\in\cM\cL_h(S).
\end{align}
Clearly, these coordinates are not all independent but satisfy the same constraints as the shear coordinates $x^\alpha$, namely for each face $i\in F(\Gamma)$
\begin{align}\label{eq:u_const}
c^i(u)=\sum_{\alpha\in E(\Gamma)}\theta^i{}_\alpha u^\alpha=0.
\end{align}
Interpreting the constraints for the different faces $i\in F(\Gamma)$ as components of a linear map $c:\R^E\to\R^F$,  we thus obtain an explicit description of measured geodesic lamination in terms of shear coordinates on $S$, which characterises $\cM\cL(S)$ as a linear subspace of $\R^E\times\R^E$.
\begin{proposition}\label{lem:embed}
The coordinate functions $x^\alpha,u^\alpha:\cM\cL(S)\to\R$ define an embedding $(x,u):\cM\cL(S)\hookrightarrow \R^E\times \R^E$ whose image is the kernel of the linear map $c\oplus c:\R^E\times \R^E \to\R^F\times \R^F$ whose components are given by \eqref{eq:teich_const} and \eqref{eq:u_const}.
\end{proposition}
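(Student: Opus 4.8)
The plan is to exhibit the map $(x,u)$ as a composition of three maps that are already understood: the identification of $\cM\cL(S)$ with $\cT(S)\times\cT(S)$ provided by Thurston's earthquake theorem, the Fock--Chekhov embedding $x\colon\cT(S)\hookrightarrow\R^E$ of Theorem~\ref{thm:teich_shear} applied separately in each factor, and an explicit linear automorphism of $\R^E\times\R^E$.

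First I would use Thurston's earthquake theorem, recalled at the end of Subsection~\ref{subsec:back_teichmueller}, in the form that the map $E\colon\cM\cL(S)\to\cT(S)\times\cT(S)$ given by $E(h,\lambda)=(h,Eq^\lambda(h))$ is a bijection: surjectivity and injectivity of $E$ are precisely the existence and uniqueness of a measured geodesic lamination $\lambda\in\cM\cL_h(S)$ realising a prescribed earthquake image $h'=Eq^\lambda(h)$. Since the fibrewise bijections $\cM\cL_h(S)\to\cT(S)$ are homeomorphisms depending continuously on $h$, the map $E$ is in fact a homeomorphism. Next let $L\colon\R^E\times\R^E\to\R^E\times\R^E$ be the linear automorphism $L(p,q)=(p,q-p)$, whose inverse is $(p,r)\mapsto(p,r+p)$. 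Writing $x(h)=(x^\alpha(h))_{\alpha\in E(\Gamma)}$, the definition \eqref{eq:udef} of $u^\alpha$ gives, for every $(h,\lambda)\in\cM\cL(S)$, the identity $(x,u)(h,\lambda)=\big(x(h),\,x(Eq^\lambda(h))-x(h)\big)=L\circ(x\times x)\circ E\,(h,\lambda)$, so $(x,u)=L\circ(x\times x)\circ E$. As $E$ is a homeomorphism, $x\times x$ is an embedding by Theorem~\ref{thm:teich_shear}, and $L$ is a linear automorphism, the composite $(x,u)$ is an embedding.

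It remains to identify its image. We have $(x,u)\big(\cM\cL(S)\big)=L\big((x\times x)(\cT(S)\times\cT(S))\big)=L(\ker c\times\ker c)$, using that the image of $x$ is $\ker c$ by Theorem~\ref{thm:teich_shear}. Since $\ker c$ is a linear subspace of $\R^E$, the automorphism $L$ maps $\ker c\times\ker c$ onto itself, so the image equals $\ker c\times\ker c=\ker(c\oplus c)$, i.e.\ the common zero set of the face constraints \eqref{eq:teich_const} (on the $x$-variables) and \eqref{eq:u_const} (on the $u$-variables). This is consistent with the observation already recorded before the proposition, that $x^\alpha(h)$ and $x^\alpha(Eq^\lambda(h))$ each satisfy \eqref{eq:teich_const}, so their difference $u^\alpha(\lambda)$ satisfies \eqref{eq:u_const}; injectivity of $(x,u)$ can also be checked directly, since $x^\alpha(h_1)=x^\alpha(h_2)$ for all $\alpha$ forces $h_1=h_2$ by Theorem~\ref{thm:teich_shear}, and then $u^\alpha(\lambda_1)=u^\alpha(\lambda_2)$ for all $\alpha$ gives $x^\alpha(Eq^{\lambda_1}(h_1))=x^\alpha(Eq^{\lambda_2}(h_1))$, hence $Eq^{\lambda_1}(h_1)=Eq^{\lambda_2}(h_1)$ by Theorem~\ref{thm:teich_shear}, hence $\lambda_1=\lambda_2$ by the uniqueness clause of the earthquake theorem.

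The single delicate point, which I would isolate at the outset, is the claim that $E$ is a homeomorphism rather than merely a bijection. The set-theoretic statement is Thurston's earthquake theorem, but the topological upgrade requires continuity of the earthquake map and of its inverse with respect to the topology on $\cM\cL(S)$ as a bundle of measured geodesic laminations over $\cT(S)$; this belongs to the standard earthquake theory, cf.\ \cite{Kerckhoff1983}. Granting it, the rest of the argument is formal, using only that a composition of embeddings with a linear automorphism is an embedding and that linear automorphisms carry linear subspaces onto linear subspaces.
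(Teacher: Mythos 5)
Your argument is correct and is precisely the reasoning the paper relies on implicitly (the proposition is stated without a formal proof, following the discussion that defines $u^\alpha$ as a difference of shear coordinates and notes that the constraints \eqref{eq:u_const} are inherited from \eqref{eq:teich_const}): the factorisation $(x,u)=L\circ(x\times x)\circ E$ through the earthquake bijection, the Fock--Chekhov embedding of Theorem~\ref{thm:teich_shear} in each factor, and the linear automorphism $L(p,q)=(p,q-p)$ preserving $\ker c\times\ker c$ is exactly the intended route. Your isolation of the one genuinely non-formal ingredient --- that $E$ is a homeomorphism and not merely a bijection, which rests on continuity of the earthquake map and its inverse --- is an appropriate and honest refinement of what the paper leaves unstated.
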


Note also that the coordinate functions $u^\alpha$ satisfy certain cocycle conditions reminiscent from the properties of the cocycle \eqref{eq:cocycle}. For two measured geodesic laminations $\lambda_1,\lambda_2\in\cM\cL_h(S)$ denote by $h_1=Eq^{\lambda_1}(h)$ and $h_2=Eq^{\lambda_2}(h)$ the images of $h$ under the associated earthquakes and let $\lambda'\in\cM\cL_{h_1}(S)$ be the measured geodesic lamination with $h_2=Eq^{\lambda'}(h_1)$. Then the definition of $u^\alpha$
directly implies  the cocycle condition
$$u^\alpha(\lambda')=u^\alpha(\lambda_2)-u^\alpha(\lambda_1).$$

\subsubsection*{Analytic extension to $\cM\cL^{R_\Lambda}(S)$}
An analogous  description of grafting construction is obtained by considering geodesic laminations with $\ell$-imaginary measures and analytic continuation of the coordinates $u^\alpha$ defined above. In fact, we may consider more general $R_\Lambda$-valued measured laminations, defined as pairs $(\lambda,\mu+\ell\nu)$ with $\mu$ and $\nu$ real transverse measures supported on $\lambda$. Given a point $h\in\cT(S)$ and an $R_\Lambda$-valued measured lamination $\lambda\in \cM\cL_h^{R_\Lambda}(S)$, we  define the $\ell$-complexified earthquake of $h$ along $\lambda$ in terms of the cocycle $Z_{EG}^\lambda:\pi_1(S)\to G_\Lambda$ 
$$Z_{EG}^\lambda(a)=\prod_{p\in \lambda\cap a}\Ad_{A_p}E(\mu_p+\ell\nu_p),\qquad \forall a\in\pi_1(S),$$
with the same notation as in \eqref{eq:zedef} and
\begin{align}\label{eq:e_gen}
E(\mu+\ell\nu)=\begin{cases}(E(\mu),\nu J_1) & \Lambda=0, \cr (E(\mu+\nu),E(\mu-\nu)) & \Lambda=-1, \cr  E(\mu+\mathrm{i}\nu) & \Lambda=1.\end{cases}\end{align}
Note that earthquake and grafting cocycles in \eqref{eq:zedef} and \eqref{eq:gr_cocyle} are obtained as particular cases of this $R_\Lambda$-valued cocycle for purely real or purely imaginary $R_\Lambda$-measures.

For an embedded trivalent fat graph $\Gamma$ on $S$, denote by $u^\alpha:\cM\cL(S)\to\R$  the shear coordinates for $\R$-valued measured laminations, as defined in \eqref{eq:udef}. We now wish to analytically extend the coordinate functions $u^\alpha$ to $\cM\cL^{R_\Lambda}(S)$. This is indeed possible since the earthquake map $Eq:\cM\cL(S)\to\cT(S)$ depends analytically on the measure of laminations with fixed support \cite{McMullen1998} and the shear coordinates $x^\alpha:\cT(S)\to\R$ on Teichm\"uller space are also analytic \cite{Penner2012}. Together these imply the following result.
\begin{proposition}
The coordinates $u^\alpha:\cM\cL(S)\to\R$ are analytic on the measure of laminations with fixed support and therefore admit a unique analytic extension $w^\alpha:\cM\cL^{R_\Lambda}(S)\to R_\Lambda$ satisfying
$w^\alpha|_{\cM\cL(S)}=u^\alpha$.
\end{proposition}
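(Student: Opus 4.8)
The plan is to derive the real-analyticity statement by writing $u^\alpha$ as a composition of the two maps whose analyticity is quoted above, and then to obtain $w^\alpha$ from the general prescription, recalled in Section \ref{subsec:isom}, for continuing a real-analytic function to an $R_\Lambda$-valued analytic function; uniqueness will follow from the fact that such a continuation is determined by its restriction to the real locus $\cM\cL(S)\subset\cM\cL^{R_\Lambda}(S)$.

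First I would fix $h\in\cT(S)$ and a maximal train track $\tau$ on $S$, so that the weighted geodesic laminations carried by $\tau$ fill out a finite-dimensional weight space $W(\tau)\subset\R^{E(\tau)}$ cut out by the switch relations (with the measured laminations forming the positive cone inside it), and every measured geodesic lamination for $h$ is carried by some such $\tau$. Restricted to $W(\tau)$, the assignment $w\mapsto Eq^{(\tau,w)}(h)$ is real-analytic by \cite{McMullen1998} (a negative weight being an earthquake in the opposite sense, varying analytically across $0$), and the shear coordinates $x^\alpha:\cT(S)\to\R$ are real-analytic by \cite{Penner2012}; since $x^\alpha(h)$ is constant along the fibre, $u^\alpha(\tau,w)=x^\alpha(Eq^{(\tau,w)}(h))-x^\alpha(h)$ is a composition of real-analytic maps and hence real-analytic on $W(\tau)$. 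As $\tau$ and $h$ were arbitrary, this is the first assertion.

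Next I would extend. By the discussion in Section \ref{subsec:isom}, a real-analytic function $f$ on an open subset $U\subset\R^{N}$ has a canonical $R_\Lambda$-valued analytic continuation: for $\Lambda=0$ it is $f(x+\ell y)=f(x)+\ell\,(Df)(x)\!\cdot\! y$, for $\Lambda=-1$ it is $f(x+\ell y)=\tfrac12(1+\ell)f(x+y)+\tfrac12(1-\ell)f(x-y)$, and for $\Lambda=1$ it is the ordinary holomorphic continuation of $f$ to a neighbourhood of $U$ in $\C^{N}$. Applying this construction chart by chart to the functions $u^\alpha|_{W(\tau)}$ produces analytic functions on the fibres $\cM\cL_h^{R_\Lambda}(\tau)$; two such local continuations coincide wherever both are defined because they agree on the real locus $\cM\cL(S)$, on which both equal $u^\alpha$, and an $R_\Lambda$-analytic function is determined by its restriction to that real locus (immediate from the explicit formulas when $\Lambda=0,-1$, and by uniqueness of holomorphic continuation when $\Lambda=1$). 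Hence the local continuations patch to a well-defined analytic $w^\alpha:\cM\cL^{R_\Lambda}(S)\to R_\Lambda$ with $w^\alpha|_{\cM\cL(S)}=u^\alpha$, and the same restriction argument shows it is the unique such extension.

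The main obstacle is the case $\Lambda=1$, where the $R_\Lambda$-continuation exists a priori only in a neighbourhood of the real locus: one must verify that the holomorphic continuation of $u^\alpha$ is actually defined on the whole domain $\cM\cL^{R_1}(S)$ of complex measured laminations under consideration. This is precisely where the full strength of \cite{McMullen1998} enters, through the holomorphicity of complex earthquakes on the locus of laminations with positive imaginary transverse measure, together with its compatibility with the analytic continuation of $x^\alpha$ and with the chosen definition of $\cM\cL^{R_1}(S)$. The remaining points are routine: for $\Lambda=0,-1$ one checks that the arguments occurring in the extension formulas (the tangent directions for $\Lambda=0$, and $x\pm y$ for $\Lambda=-1$) stay inside $W(\tau)$, and one checks that replacing $\tau$ by a carrying train track $\tau'$ alters neither $u^\alpha$ nor its extension — again by agreement on the common real locus.
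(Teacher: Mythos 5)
Your argument follows the same route as the paper, whose entire justification is the sentence preceding the proposition: $u^\alpha$ is analytic because it is the composition of the earthquake map (analytic in the transverse measure for fixed support, by \cite{McMullen1998}) with the analytic shear coordinates of \cite{Penner2012}, and the $R_\Lambda$-extension is the canonical analytic continuation described in Section \ref{subsec:isom}, unique because it is determined on the real locus. You supply details the paper leaves implicit (train-track charts, the domain-of-continuation issue for $\Lambda=1$, independence of the carrying track), but the underlying proof is the same.
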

This allows to describe the associated shear-bending cocycle $Z_{EG}^\lambda:\pi_1(S)\to G_\Lambda$ via
\begin{align*}
Z_{EG}^\lambda(a)=\Ad_{A_{n}^a(x)}E(u^{\alpha_{n}}+\ell v^{\alpha_{n}})\cdots \Ad_{A_{1}^a(x)}E(u^{\alpha_{1}}+\ell v^{\alpha_{1}}),
\end{align*}
with the same notation as in \eqref{eq:cocycle}, $u^\alpha=\mathrm{Re}_\ell(w^\alpha)$, $v^\alpha=\mathrm{Im}_\ell(w^\alpha)$ and $E(u+\ell v)$ as in \eqref{eq:e_gen}.

\subsubsection*{Shear-bending coordinates on $\cG\cH_\Lambda(M)$}

The analytic extension $w^\alpha$ of the coordinates $u^\alpha$ in \eqref{eq:udef} now allows us to  define generalised shear coordinates on $\cG\cH_\Lambda(M)$ as follows. Consider a point $g\in\cG\cH_\Lambda(M)$,  let  $h\in\cT(S)$ be  the hyperbolic metric determined by the Fuchsian part of its holonomy representation and  $\lambda\in\cM\cL_h(S)$ the measured lamination associated with its grafting cocycle. Denoting 
 by $x^\alpha(h)$ the shear coordinate of $h$ and by $w^\alpha(\ell\lambda)$ the shear coordinate of $\ell\lambda$ for each edge $\alpha\in E(\Gamma)$, we define 
the generalised shear coordinate  of $g$ as
\begin{align}\label{eq:zdef}
z^\alpha(g)=x^\alpha(h)+w^\alpha(\ell\lambda)=x^\alpha(g)+\ell y^\alpha(g).
\end{align}
Note that $\Re_\ell(w^\alpha(\ell\lambda))$ may in general be non-zero and, consequently,   the real part $x^\alpha(g)$ of the generalised shear coordinates does not necessarily agree with the corresponding shear coordinates $x^\alpha(h)$.

As a direct consequence of the definition of generalised shear coordinates, one finds that the holonomy representation of $g$ can be parametrised exactly as in \eqref{eq:hol} by
\begin{align}\label{eq:hol_shear}
\rho(a)=P^a_n E(z^{\alpha_n})\cdots P^a_1 E(z^{\alpha_1}),
\end{align}
where  the real matrices $E(x^{\alpha_k})$ in \eqref{eq:hol} are replaced by $R^\Lambda$-valued matrices $E(z^{\alpha_k})$.
These terms can  be interpreted  as a combination of earthquakes and grafting along the ideal edges of the  triangulation dual to $\Gamma$. We therefore refer to the generalised shear coordinates $z^\alpha$ as shear-bending coordinates. Note that these coordinates can also be viewed as generalisations of the shear-bending coordinates of  Bonahon \cite{Bonahon1996} to the Lorentzian context.

As in the real case, the parametrisation \eqref{eq:hol_shear} of the holonomies in terms of shear-bending coordinates gives rise to  constraints associated with the faces of $\Gamma$. A direct computation  yields
\begin{align}\label{eq:const_z}
c^i_\Lambda(z)=\sum_{\alpha\in E(\Gamma)} \theta^i{}_\alpha z^\alpha=0,
\end{align}
for each face  $i\in F(\Gamma)$. These constraints again impose the tracelessness of the holonomies around the punctures and allow one to realise the moduli spaces $\cG\cH_\Lambda(M)$ of 3d spacetimes as a linear subspaces of $R^E_\Lambda$, thus generalising Theorem \ref{thm:teich_shear}.
\begin{theorem}\label{thm:gen_embed}
The coordinate functions $z^\alpha:\cG\cH_\Lambda(S)\to R_\Lambda$ in \eqref{eq:zdef} define an embedding $z:\cG\cH_\Lambda(S)\hookrightarrow R^E_\Lambda$ whose image agrees with the kernel of the linear map $c_\Lambda:R^E_\Lambda \to R^F_\Lambda$ whose components are given by \eqref{eq:const_z}.
\end{theorem}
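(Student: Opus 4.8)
Since the pair $(x,u)$ already identifies $\cM\cL(S)$ homeomorphically with $\Ker c\times\Ker c=\Ker(c\oplus c)$ by Proposition~\ref{lem:embed}, and $\cG\cH_\Lambda(M)\cong\cM\cL(S)$ by the previous theorem, it is equivalent to prove that $\Phi:=z\circ(x,u)^{-1}$ is a bijection of $\Ker c\times\Ker c$ onto $\Ker c_\Lambda$ --- it is then automatically a homeomorphism, being a continuous bijection between Euclidean spaces of the same dimension. The plan is to run, over the algebra $R_\Lambda$ in place of $\R$, the argument that proves the real statement (Theorem~\ref{thm:teich_shear}), using the explicit form of the analytic extension $w^\alpha$ together with the geometric classification of MGH spacetimes for each value of $\Lambda$; I write a point of $\cM\cL(S)$ as a pair $(h,\lambda)$, so that $z^\alpha(h,\lambda)=x^\alpha(h)+w^\alpha(\ell\lambda)$. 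Two preliminary reductions. First, $z$ maps into $\Ker c_\Lambda$: around a face $i\in F(\Gamma)$ the dual edge path takes only left (or only right) turns, so all matrices $P^a_k$ in \eqref{eq:hol_shear} coincide, and working out the resulting product of $L$'s (or $R$'s) and $E(z^\alpha)$'s over $R_\Lambda$ shows that the puncture holonomy is parabolic precisely when its diagonal part vanishes, i.e.\ precisely when $c^i_\Lambda(z)=\sum_\alpha\theta^i{}_\alpha z^\alpha=0$ --- this is the computation of \eqref{eq:const_z}, now carried out in $R_\Lambda$. Second, because the $\theta^i{}_\alpha$ are integers and $\{1,\ell\}$ is an $\R$-basis of $R_\Lambda$, the single relation $c^i_\Lambda(z)=0$ is equivalent to $c^i(\Re_\ell z)=0$ together with $c^i(\Im_\ell z)=0$, so that $\Ker c_\Lambda=\Ker c\oplus\ell\,\Ker c$, of real dimension $2(E-F)=\dim_\R\cM\cL(S)$, as it must be. It remains to check, for each of the three algebras, that $\Phi$ is onto this set and injective.

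For $\Lambda=-1$ the idempotents $e_\pm=\tfrac12(1\pm\ell)$ make this immediate. The split-complex extension formula gives $w^\alpha(\ell\lambda)=e_+u^\alpha(\lambda)+e_-u^\alpha(-\lambda)$, and since $x^\alpha(h)+u^\alpha(\pm\lambda)=x^\alpha(Eq^{\pm\lambda}(h))$ this becomes
\[ z^\alpha(h,\lambda)=e_+\,x^\alpha\bigl(Eq^{\lambda}(h)\bigr)+e_-\,x^\alpha\bigl(Eq^{-\lambda}(h)\bigr). \]
Under Mess's identification $\cG\cH_{-1}(M)\cong\cT(S)\times\cT(S)$, which sends $(h,\lambda)$ to the pair $(Eq^{\lambda}(h),Eq^{-\lambda}(h))$ of hyperbolic structures underlying the two Fuchsian components of the anti-de~Sitter holonomy --- the grafting cocycle for $\ell\lambda$ being $(Z_E^{\lambda},Z_E^{-\lambda})$ by \eqref{eq:e_gen} --- the two idempotent components of $z$ are exactly the Thurston shear coordinates of these two structures. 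Applying Theorem~\ref{thm:teich_shear} to each factor then shows that $z$ is an embedding with image $e_+\Ker c\oplus e_-\Ker c=\Ker c_{-1}$.

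For $\Lambda=0$ the dual-number extension gives $w^\alpha(\ell\lambda)=\ell\,\dot u^\alpha(\lambda)$ with $\dot u^\alpha(\lambda)=\tfrac{d}{dt}\big|_{t=0}u^\alpha(t\lambda)=\tfrac{d}{dt}\big|_{t=0}x^\alpha(Eq^{t\lambda}(h))=dx^\alpha_h(V_\lambda)$, where $V_\lambda\in T_h\cT(S)$ is the infinitesimal earthquake field of $\lambda$. Hence $\Re_\ell z^\alpha=x^\alpha(h)$ and $\Im_\ell z^\alpha=dx^\alpha_h(V_\lambda)$: by Theorem~\ref{thm:teich_shear} the real part determines $h$ and lies in $\Ker c$, and since $dx_h\colon T_h\cT(S)\to\Ker c$ is the differential of the Fock--Chekhov embedding --- a linear isomorphism onto $\Ker c$ --- the imaginary part lies in $\Ker c$ and determines $V_\lambda$. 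Finally, $\lambda\mapsto V_\lambda$ is a bijection $\cM\cL_h(S)\to T_h\cT(S)$: this is the content of the Mess/Benedetti--Bonsante identification $\cG\cH_0(M)\cong T^*\cT(S)$ --- the Fuchsian part of the holonomy giving $h$, its translational cocycle $\tau=\tfrac{d}{dt}\big|_{t=0}Z_E^{t\lambda}$ giving $V_\lambda$, modulo the Weil--Petersson isomorphism $T\cT\cong T^*\cT$. Therefore $\Phi$ is a bijection onto $\Ker c\oplus\ell\,\Ker c=\Ker c_0$.

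For $\Lambda=1$ the coordinates $z^\alpha=x^\alpha(h)+w^\alpha(\ii\lambda)$, with $w^\alpha$ the holomorphic continuation of $u^\alpha$ in the measure, are a fat-graph version of Bonahon's complex shear--bending parametrisation of $\cC\cP(S)\cong\cG\cH_1(M)$ \cite{Bonahon1996}; injectivity of $z$ and the identification of its image with $\Ker c_1=\Ker c\oplus\ii\,\Ker c$ then follow from Bonahon's embedding theorem together with Theorem~\ref{thm:teich_shear}. I expect this last case to be the main obstacle. The holonomy map is only locally injective for de~Sitter spacetimes --- laminations whose measures differ by a $2\pi$ multiple of a closed leaf have the same $\mathrm{PSL}(2,\C)$-holonomy --- so ``embedding'' must be understood through the lamination (equivalently $\cC\cP(S)$) parametrisation rather than through the holonomy; and surjectivity onto \emph{all} of $\Ker c_1$, as opposed to merely onto an open subset (which would follow for free from invariance of domain and the dimension count), is exactly the content of the geometric classification result and resists purely soft arguments. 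For $\Lambda=0,-1$, by contrast, surjectivity is handled above by decomposing an arbitrary element of $\Ker c_\Lambda$ in the dual resp.\ idempotent basis and applying Theorem~\ref{thm:teich_shear} componentwise.
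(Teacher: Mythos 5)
The paper does not really prove this theorem: it derives the face constraints \eqref{eq:const_z} from the parabolicity of the puncture holonomies and then simply asserts that the moduli space is realised as $\Ker c_\Lambda$, ``thus generalising Theorem \ref{thm:teich_shear}''. Your proposal supplies the argument the paper leaves implicit, and for $\Lambda=0$ and $\Lambda=-1$ it does so correctly. The reduction via Proposition \ref{lem:embed} to a bijection $\Ker(c\oplus c)\to\Ker c_\Lambda$, the splitting $\Ker c_\Lambda=\Ker c\oplus\ell\,\Ker c$, and above all the observation that the algebra structure of $R_\Lambda$ converts $w^\alpha(\ell\lambda)$ into either the pair of shear coordinates of $\bigl(Eq^{\lambda}(h),Eq^{-\lambda}(h)\bigr)$ (idempotents, $\Lambda=-1$) or the pair $\bigl(x^\alpha(h),dx^\alpha_h(V_\lambda)\bigr)$ (nilpotent, $\Lambda=0$) is exactly the right mechanism for reducing the claim to Theorem \ref{thm:teich_shear} applied componentwise, combined with the earthquake theorem, the infinitesimal earthquake theorem and Mess's identifications $\cG\cH_{-1}(M)\cong\cT(S)\times\cT(S)$ and $\cG\cH_0(M)\cong T\cT(S)$. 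This buys what the paper's bare appeal to ``analytic continuation'' does not: an actual verification of injectivity and of surjectivity onto all of $\Ker c_\Lambda$, rather than only the inclusion of the image.

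The one place where the argument remains thin is $\Lambda=1$, and you have diagnosed the difficulty correctly yourself. Since $R_1=\C$ has no zero divisors, there is no algebraic splitting that reduces the claim to the real theorem, and the two nontrivial inputs --- injectivity of $\lambda\mapsto w^\alpha(\mathrm{i}\lambda)$ at the level of laminations (even though the $\mathrm{PSL}(2,\C)$-holonomy does not separate laminations whose measures differ by $2\pi$ on closed leaves, the continued branch of $w^\alpha$ along rays in $\cM\cL_h(S)$ does), and surjectivity onto all of $\Ker c\oplus\mathrm{i}\,\Ker c$ rather than onto the open subset that invariance of domain gives for free --- are exactly Thurston's grafting theorem ($Gr:\cM\cL(S)\to\cC\cP(S)$ is a homeomorphism) plus the complex shear--bending parametrisation of $\cC\cP(S)$ from \cite{Bonahon1996}. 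You should either quote those two statements precisely as the external input, or accept that the de~Sitter case is not self-contained; the paper itself is no more explicit on this point, so this is a gap in the source as much as in your write-up.
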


\subsection{The gravitational symplectic structures}
\label{subsec:symp_gen}
\subsubsection*{Symplectic structure in terms of shear-bending coordinates}
We now describe how the gravitational symplectic structure on the moduli space $\cG\cH_\Lambda(M)$ of 3d spacetimes can be expressed in terms of shear-bending coordinates. First, recall that the gravitational symplectic structure  is given by Goldman's symplectic structure \eqref{eq:goldman} with structure group $G_\Lambda$ and the $\Ad$-invariant symmetric bilinear form  $\langle\,,\rangle$ from \eqref{eq:forms}. As the latter is the imaginary part  $\mathrm{Im}_\ell(\kappa)$
of the bilinear extension of the Killing form on $\mathfrak{sl}(2,\R)$   
and the generalised shear coordinates can be interpreted as an $R_\Lambda$-analytic continuation  of the shear coordinates on $\cT(S)$, it is natural to expect that Goldman's symplectic structure on $\cG\cH_\Lambda(M)$ is given 
by a Poisson structure on $R_\Lambda^E$ that resembles the Weil-Petersson Poisson structure.

In the following, we shall prove that the gravitational symplectic structure on $\cG\cH_\Lambda(M)$ is induced from the Poisson bivector 
\begin{align}\label{imagbivec}
\pi_\Lambda=\frac{1}{2}\sum_{\alpha,\beta\in E(\Gamma)}\pi^{\alpha\beta}_{WP}\frac{\partial}{\partial x^\alpha}\wedge \frac{\partial}{\partial y^\beta}
\end{align}
where the coordinates $x^\alpha$ and $y^\beta$ denote, respectively, the real and imaginary parts of the generalised shear coordinates $z^\alpha$ in \eqref{eq:zdef}.
As the Poisson bracket of the variables $x^\alpha, y^\beta$ is a combinatorial constant, it is immediate that the Jacobi identity is satisfied and that this bivector defines a Poisson structure on $R^\Lambda_E$. Moreover, the bivector $\pi_\Lambda$  induces a Poisson structure on the constraint surface $\Ker c_\Lambda  \subset R_\Lambda^E$, where $c_\Lambda: R^E_\Lambda\to R^F_\Lambda$ is the linear constraint map from Theorem \ref{thm:gen_embed}.
This follows directly from the combinatorics of $\pi_\Lambda$, via its relation to $\pi_{WP}$, and from the combinatorics of the constraints.
A simple computation shows that the Poisson bracket of a function
 $f\in C^\infty(R_\Lambda^E)$ with a component $c^i_\Lambda: R^E_\Lambda\to R_\Lambda$ of the constraint map in Theorem \ref{thm:gen_embed} is given by
$$\{f,c^i_\Lambda\}_\Lambda=(df\otimes dc^i_\Lambda)(\pi_\Lambda)=\frac{1}{2}\sum_{\alpha\in E(\Gamma)}\left(\ell\frac{\partial f}{\partial x^\alpha}+\frac{\partial f}{\partial y^\alpha}\right)(\theta^i{}_{\beta}-\theta^i{}_{\gamma}+\theta^i{}_{\delta}-\theta^i{}_{\epsilon}).$$
By an argument similar to the one  following equation \eqref{eq:poiss_const},  one finds that this expression vanishes identically for all $f\in C^\infty(R_\Lambda^E)$. This follows since every face $i\in F(\Gamma)$ involves only left or right turns and hence the linear combinations of the multiplicities $\theta^i$ on the right-hand-side cancel. Consequently, the  constraint $c_\Lambda: R^E_\Lambda\to R^F_\Lambda$ is Casimir with respect to the Poisson structure \eqref{imagbivec} and the Poisson bivector  $\pi_\Lambda$ restricts to a Poisson bivector on $\Ker c_\Lambda\cong \cG\cH_\Lambda(M)$.  It is then easy to see that this Poisson structure is symplectic.

We will now prove that this Poisson structure on $\cG\cH_\Lambda(M)$ agrees with the gravitational symplectic structure, that is, Goldman's symplectic structure \eqref{eq:goldman} for the group $G_\Lambda$ and the $\Ad$-invariant symmetric bilinear form $\langle\,,\,\rangle$ in \eqref{eq:forms}. This yields the following theorem.
\begin{theorem}\label{thm:cotangent}
The linear constraint $c_\Lambda:R^E_\Lambda\to R^F_\Lambda$ defined in \eqref{eq:const_z} is Casimir with respect to the Poisson bivector $\pi_\Lambda$ on $R^E_\Lambda$ and induces a symplectic structure on  $\Ker c_\Lambda\cong \cG\cH_\Lambda(M)$. This symplectic structure coincides with the gravitational Poisson structure on $\cG\cH_\Lambda(M)$.
\end{theorem}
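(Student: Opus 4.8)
The plan is to establish the identification of the two Poisson structures by a direct computation of the Goldman bracket \eqref{eq:goldman} for the group $G_\Lambda$ with the bilinear form $\langle\,,\,\rangle$, evaluated on functions built from the holonomies \eqref{eq:hol_shear}, and to compare the result with the bracket induced by $\pi_\Lambda$ in \eqref{imagbivec}. Since the Casimir property of $c_\Lambda$ has already been verified in the text preceding the statement, the heart of the proof is the second sentence: that the symplectic structure on $\Ker c_\Lambda$ is precisely the gravitational one. First I would note that, by Theorem~\ref{thm:witten}, the gravitational Poisson structure is the $\mathrm{Im}_\ell$-part of the Atiyah--Bott--Goldman structure for $G_\Lambda$, and by \eqref{formreal} the relevant form $\langle\,,\,\rangle$ equals $2\,\mathrm{Im}_\ell\,\kappa$ where $\kappa$ is the bilinear extension of the Killing form on $\mathfrak{sl}(2,\R)$. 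The strategy is then to run the $\mathrm{PSL}(2,\R)$ shear-coordinate computation that underlies Theorem~\ref{thm:teich_reduc} ``with an extra factor of $\ell$'', i.e.\ over the algebra $R_\Lambda$ rather than over $\R$.

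Concretely, I would proceed as follows. Fix an embedded trivalent fat graph $\Gamma$ and two simple closed curves $a,b$ represented by edge-paths on $\Gamma$; their holonomies are given by \eqref{eq:hol_shear} as products of the $R_\Lambda$-matrices $E(z^{\alpha_k})$ and the fixed turn matrices $L,R$. For a class function $f$ on $G_\Lambda$, the derivative $\tfrac{d}{dt}|_{t=0} f(\rho(a_p) e^{tX})$ that defines $F_{a_p}$ in \eqref{eq:goldman} can be computed edge by edge exactly as in the real case, because $E(z)$ depends on the single variable $z^\alpha = x^\alpha + \ell y^\alpha$ through the same one-parameter subgroup generated by $J_1$ (now allowed an $\ell$-imaginary time), and $\tfrac{\partial}{\partial z^\alpha} E(z^\alpha) = \tfrac12 J_1 E(z^\alpha)$. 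Collecting the contributions of the intersection points $p\in a\cap b$, the real-case calculation shows that $\{f_a, g_b\}$ with respect to the Killing-form Goldman bracket over $R_\Lambda$ has the same combinatorial coefficients $\pi_{WP}^{\alpha\beta}$ as in \eqref{eq:wp_bivect}, but now as a bracket on the $R_\Lambda$-valued coordinates $z^\alpha$: schematically $\{z^\alpha, z^\beta\}_{\kappa, R_\Lambda} = \pi_{WP}^{\alpha\beta}$ with the right-hand side read inside $R_\Lambda$ (it is in fact real, hence central). Taking $\mathrm{Im}_\ell$ of this, and using $z^\alpha = x^\alpha + \ell y^\alpha$ together with the multiplication law of $R_\Lambda$ and the identity $\mathrm{Im}_\ell(\ell \cdot r) = r$ for real $r$, one extracts exactly $\{x^\alpha, y^\beta\} = \pi_{WP}^{\alpha\beta}$ and $\{x^\alpha, x^\beta\} = \{y^\alpha, y^\beta\} = 0$, which is the bracket of $\pi_\Lambda$ in \eqref{imagbivec}. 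This matching on coordinate functions, together with the fact that both sides are Poisson bivectors (already established for $\pi_\Lambda$, and by Goldman's theorem for the other side), identifies the two structures on $\Ker c_\Lambda$.

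The main obstacle I expect is making the ``analytic continuation with an extra $\ell$'' rigorous at the level of the Goldman bracket, rather than merely formal. Two points need care. First, the Goldman bracket \eqref{eq:goldman} is defined intrinsically on $\Rep_\cC(S,G_\Lambda)$, and one must check that the edge-by-edge differentiation of the holonomy word \eqref{eq:hol_shear} genuinely computes $F_{a_p}$ in the Lie algebra $\mathfrak g_\Lambda = \mathfrak{sl}(2,R_\Lambda)$, using that $P_i = \ell J_i$ and that the turn matrices $L,R$ lie in the $\mathrm{PSL}(2,\R)$-part and so commute past $\ell$ in the obvious way; the bookkeeping of based representatives $a_p$ at each intersection point is identical to the real case but must be stated. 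Second, for $\Lambda=0,-1$ the algebra $R_\Lambda$ has zero divisors, so one cannot invert elements freely; however, the computation only uses the $R_\Lambda$-module structure of $\mathfrak{sl}(2,R_\Lambda)$, the bilinear pairing $\kappa$, and its inverse over $R_\Lambda$ (which exists since $\kappa$ is nondegenerate as an $R_\Lambda$-bilinear form, i.e.\ its Gram matrix has an $R_\Lambda$-invertible determinant), so no division by zero divisors occurs. A clean way to sidestep both issues is to invoke the earthquake/grafting cocycle description: by the construction in Section~\ref{sec:general_shear}, the holonomy $\rho$ of $g\in\cG\cH_\Lambda(M)$ is $Z_{EG}^\lambda\cdot\rho_0$ with $Z_{EG}^\lambda$ given by \eqref{eq:gr_cocyle}-type products, and the variation formulas for $x^\alpha(h)$ and $y^\alpha$ under infinitesimal Hamiltonian flows reduce, via \eqref{formreal}, to the already-known variations of real shear coordinates; I would use this to reduce the claim to Theorem~\ref{thm:teich_reduc} applied to the Killing form, then take imaginary parts. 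With that reduction in hand, the identification of $\pi_\Lambda$ with the gravitational structure follows, completing the proof.
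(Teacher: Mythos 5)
Your overall strategy coincides with the paper's: compute the bracket of class functions $f_a,g_b$ of the holonomies \eqref{eq:hol_shear} and match it against Goldman's bracket \eqref{eq:goldman} for $G_\Lambda$ with the form $\langle\,,\,\rangle=2\,\mathrm{Im}_\ell\,\kappa$. The genuine gap is in the step where you pass from the ``formal $R_\Lambda$-valued bracket'' to the real coordinate brackets. From a single $R_\Lambda$-valued relation $\{z^\alpha,z^\beta\}=w^{\alpha\beta}$, bilinear expansion of $z=x+\ell y$ yields only the two real equations $\{x^\alpha,x^\beta\}-\Lambda\{y^\alpha,y^\beta\}=\mathrm{Re}_\ell\,w^{\alpha\beta}$ and $\{x^\alpha,y^\beta\}+\{y^\alpha,x^\beta\}=\mathrm{Im}_\ell\,w^{\alpha\beta}$; these do not determine the three independent families of brackets, so ``taking $\mathrm{Im}_\ell$'' cannot extract $\{x^\alpha,y^\beta\}=\pi^{\alpha\beta}_{WP}$ and $\{x^\alpha,x^\beta\}=\{y^\alpha,y^\beta\}=0$. (Indeed, the bracket relations of $\pi_\Lambda$ correspond to $\{z^\alpha,z^\beta\}$ being $\ell$-imaginary, not real as you assert, while the competing bivector $\tfrac12\sum\pi^{\alpha\beta}_{WP}(\partial_{x^\alpha}\wedge\partial_{x^\beta}+\Lambda^{-1}\partial_{y^\alpha}\wedge\partial_{y^\beta})$, which the paper shows at the end of Section \ref{subsec:symp_gen} corresponds to the \emph{other} invariant form $(\,,\,)=\mathrm{Re}_\ell\,\kappa$, satisfies the same two equations with $w=0$.) Distinguishing $\pi_\Lambda$ from this alternative is exactly the content of the choice of bilinear form, so your scheme is blind to the very point at issue. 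The paper resolves it by computing $\partial f_a/\partial x^\alpha$ and $\partial f_a/\partial y^\alpha$ \emph{separately}, as sums of $\mathrm{Im}_\ell\,\kappa(F_a,J^a_k)$ and $\mathrm{Re}_\ell\,\kappa(F_a,J^a_k)$ respectively, and then assembling $\{f_a,g_b\}_\Lambda$ as $\mathrm{Im}_\ell$ of a $\kappa$-contraction via $\mathrm{Im}_\ell(w_1w_2)=\mathrm{Re}_\ell w_1\,\mathrm{Im}_\ell w_2+\mathrm{Im}_\ell w_1\,\mathrm{Re}_\ell w_2$; it is this interplay that produces the $\partial_x\wedge\partial_y$ structure of \eqref{imagbivec}.

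A second, lesser gap is that you black-box the combinatorial heart of the argument. Most of the paper's proof is devoted to showing that the Lie-algebra-valued bivector $\sum_{k,l}\pi^{\alpha_k\beta_l}_{WP}J^a_k\otimes J^b_l$ localises at the essential intersection points of $a$ and $b$: the sum is organised over edge-path segments in $a\cap b$, telescoped using recursion relations for $J^a_k-J^a_{k+1}$ coming from the commutators of the turn matrices with $J_1$, and closed with the identity $\pi^{\alpha_1\beta_2}_{WP}+\pi^{\alpha_{s-1}\beta_s}_{WP}=2\epsilon_p(a,b)$. Your appeal to ``the real-case calculation'' is defensible in spirit, since these identities are polynomial and hold verbatim over $R_\Lambda$ (the paper itself remarks that its proof specialises to the real case), but Theorem \ref{thm:teich_reduc} as stated does not supply the localised-bivector form you need, and ``taking imaginary parts'' of a statement about $\mathrm{PSL}(2,\R)$ is not a well-defined operation, so your proposed fallback via the grafting cocycles does not close this. (Minor: with the normalisation \eqref{eq:jgen} one has $\partial_{z^\alpha}E(z^\alpha)=J_1E(z^\alpha)$, without the factor $\tfrac12$.)
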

 \begin{proof}
The general idea of the proof is to compare the Poisson structure on $R_\Lambda^E$ induced by \eqref{imagbivec} with Goldman's symplectic structure \eqref{eq:goldman} for a pair of class functions $f,g:G_\Lambda\to\R$. Given $a,b\in\pi_1(S)$ we consider the coordinate expressions of the associated functions $f_a,g_b:\cG\cH_\Lambda(M)\to\R$, defined by $f_a(\rho)=f(\rho(a))$, $g_b(\rho)=g(\rho(b)),$
where $\cG\cH_\Lambda(M)$
is viewed as a component of $\Rep(S,G_\Lambda)$ and $\rho:\pi_1(S)\to G_\Lambda$ is a group homomorphism. 
With the parametrisation \eqref{eq:hol_shear} of this group homomorphism $\rho$ in terms of generalised shear coordinates
we are able compute the Poisson bracket
$$\{f_a,g_b\}_\Lambda=(df_a\otimes dg_b)(\pi_\Lambda).$$
The combinatorial structure of the Poisson bivector $\pi_\Lambda$ then allows one to interpret each non-trivial contribution in terms of the essential intersection points of closed edge paths representing $a$ and $b$.

For this, consider an embedded fat graph $\Gamma$ dual to an ideal triangulation of $S$ and let $(\alpha_1,...,\alpha_n)$, $(\beta_1,...,\beta_m)$ be closed edge paths in $\Gamma$ that are freely homotopic to, respectively,  $a\in \pi_1(S)$ and $b\in\pi_1(S)$. Expression \eqref{eq:hol_shear} allows one to interpret their  holonomies  as analytic functions  $\rho(a),\rho(b):R^E_\Lambda\to G_\Lambda$.  The  associated functions  $f_a,g_b$ and $F_a,G_b$ in \eqref{eq:goldman}  can then be expressed in shear coordinates as
\begin{align*}
&f_a(z)=f(P^a_nE(z^{\alpha_n})\cdots P^a_1E(z^{\alpha_1}))\\
&\Im_\ell\;\kappa(F_a(z),X)=\frac{d}{dt}\Big|_{t=0}f(P^a_nE(z^{\alpha_n})\cdots P^a_1E(z^{\alpha_1})e^{tX}).\nonumber
\end{align*}
The conjugation invariance of $f$ then yields the  identities
\begin{align*}
\frac{\partial f_a}{\partial x^\alpha}=\sum_{k=1}^n\delta^{\alpha_k}_\alpha\Im_\ell\;\kappa(F_a(z),J^a_k(z)),
\qquad \frac{\partial f_a}{\partial y^\alpha}=\sum_{k=1}^n\delta^{\alpha_k}_\alpha \Re_\ell\;\kappa(F_a(z),J^a_k(z)),
\end{align*}
where $J^a_k:R_\Lambda^E\to\mathfrak{g}_\Lambda$ is given by
\begin{align}\label{eq:jabdef}
J^a_k(z)
=& \Ad_{A_k^a(z)}J_1
\end{align}
with $A_k^a$ defined as in \eqref{eq:adef} and $J_1$ as in \eqref{eq:jgen}.
This allows us to directly compute the Poisson bracket of the functions  $f_a,g_b: R^E_\Lambda\to\R$ induced by the bivector $\pi_\Lambda$ in \eqref{imagbivec}
\begin{align}\label{wpbrackcurve}
\{f_a,g_b\}_\Lambda=\Im_\ell\Big(\!\!\!\!\sum_{\alpha_k\in a,\beta_l\in b}\!\!\!\!\pi_{WP}^{\alpha_k\beta_l}\kappa(F_a,J^a_k)\kappa(G_b,J^b_l)\Big).
\end{align}
To show that this agrees with Goldman's symplectic structure \eqref{eq:goldman},  note that \eqref{wpbrackcurve} is obtained as the imaginary part of the contraction of $F_a\oo G_b\in\mathfrak{g}_\Lambda\oo\mathfrak{g}_\Lambda$ with the bivector
\begin{align}\label{bivecid}
\sum_{\alpha_k\in a,\beta_l\in b}\pi^{\alpha_k\beta_l}_{WP}J^a_k \otimes J^b_l\in \mathfrak{g}_\Lambda\wedge\mathfrak{g}_\Lambda
\end{align}
with respect to the $\Ad$-invariant bilinear form $\kappa$ on $\mathfrak{g}_\Lambda$. Similarly, Goldman's symplectic structure \eqref{eq:goldman} is obtained by contracting  $F_a\oo G_b$ with the bivector
\begin{align}\label{gold_bivec}
\sum_{p\in a\cap b}2\epsilon_p(a,b)\Ad_{A^a_p\otimes A^b_p}\Big(\sum_{i,j=0}^2\eta^{ij}J_i \otimes J_j\Big)\in \mathfrak{g}_\Lambda\wedge\mathfrak{g}_\Lambda
\end{align}
where $J_i$  is given  by \eqref{eq:jgen}.
We now compare the bivectors \eqref{gold_bivec} and \eqref{bivecid}, making use of  the combinatorial structure of $\pi_{WP}$.

First, note that $\pi^{\alpha_k\beta_l}_{WP}=0$  unless the edges $\alpha_k$ and $\beta_l$ are distinct edges sharing a common vertex. As all vertices of $\Gamma$ are trivalent, this implies non-zero contributions only arise for either
$\alpha_{k-1}=\beta_{l-1}$, $\alpha_{k-1}=\beta_{l+1}$, $\alpha_{k+1}=\beta_{l-1}$ or $\alpha_{k+1}=\beta_{l+1}$. We may thus organise the sum in \eqref{bivecid} as a sum over edge path segments in the intersection of edge paths $a$ and $b$.
Up to cyclic relabelling of the edges and orientation reversal, each contribution involves segments of the form $(\alpha_2,...,\alpha_{s-1})=(\beta_2,...,\beta_{s-1})$ with distinct initial and final edges, $\alpha_1\neq\beta_1$ and $\alpha_s\neq\beta_s$.
\begin{figure}[!ht]
\centering
\includegraphics[scale=0.52]{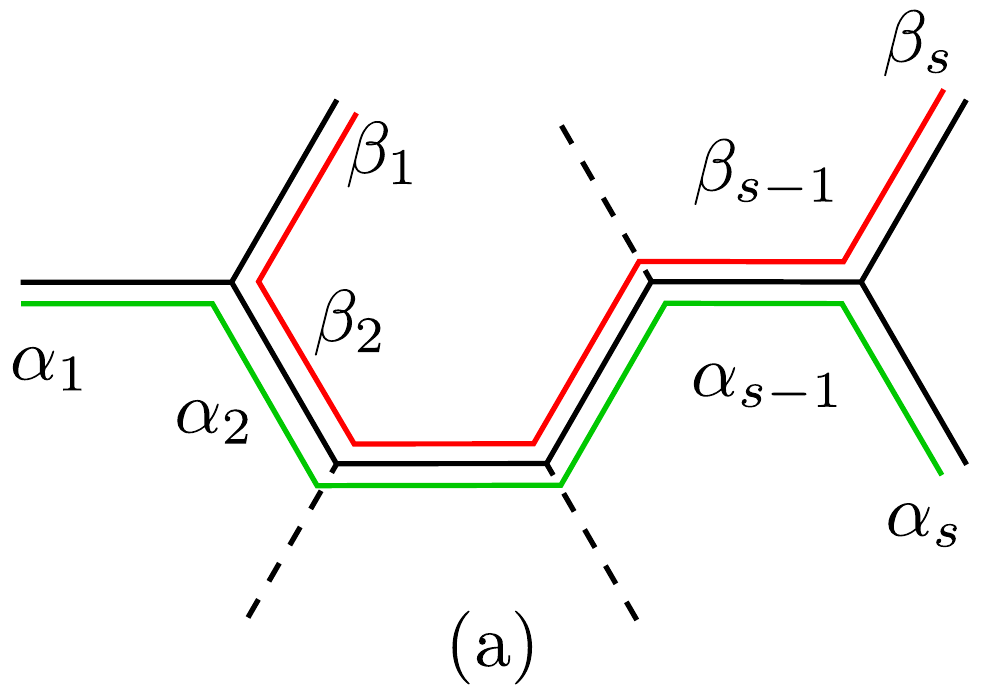}
\qquad\qquad\quad
\includegraphics[scale=0.52]{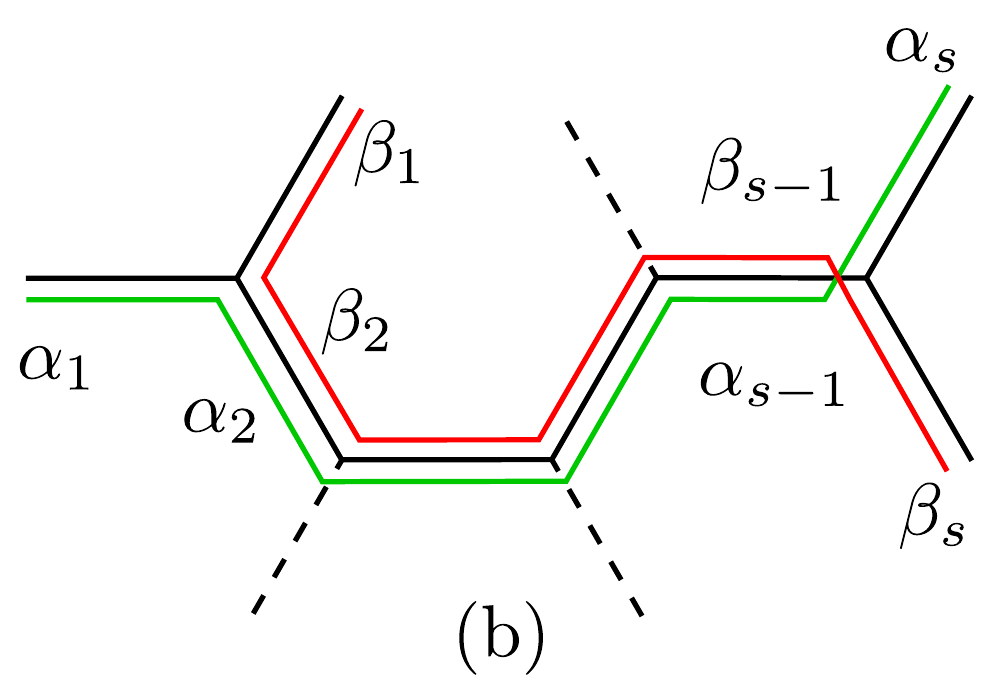}
\caption{Edge path segments contributing to the Poisson bracket.}
\label{fig:edge_sequence}
\end{figure}
We thus obtain
\begin{align*}
\sum_{\alpha_k\in a,\beta_l\in b}\!\!\!\!&\pi^{\alpha_k\beta_l}_{WP}J^a_k \otimes J^b_l=
\!\!\!\!\!\!\sum_{\text{segm.}\in a\cap b}\!\!\!\!\!\!\Big\lbrack\pi_{WP}^{\alpha_1\beta_{2}}\Big(J_1^a\otimes J_{2}^b+J_{2}^a\otimes J_1^b-J^a_1\oo J^b_1 \Big)
\\ \nonumber
&+\sum_{k=2}^{s-2}\,\pi_{WP}^{\alpha_k\beta_{k+1}}\Big(J_k^a\otimes J_{k+1}^b-J_{k+1}^a\otimes J_k^b\Big)
+\pi_{WP}^{\alpha_{s-1}\beta_s}\Big(J_{s-1}^a\otimes J_s^b+J_s^a\otimes J_{s-1}^b-J^a_s\oo J^b_s \Big)\Big\rbrack.
\end{align*} 
To simplify this expression, we use the following relations between the coefficients of the Weil-Petersson bivector
\begin{align*}
\pi_{WP}^{\alpha_k\alpha_{k+1}}=\begin{cases}
1 & \text{if } P_k=L,
\cr
-1 & \text{if } P_k=R,
\end{cases}
\qquad
\pi_{WP}^{\beta_k\beta_{k+1}}=\begin{cases}
-\pi_{WP}^{\alpha_1\alpha_2}=\pi^{\alpha_1\beta_1}_{WP} & \text{for}\;k=1,
\\
\pi_{WP}^{\alpha_k\alpha_{k+1}} & \text{for}\;k=2,...,s-2,
\\
-\pi_{WP}^{\alpha_{s-1}\alpha_s}=-\pi_{WP}^{\alpha_s\beta_s} & \text{for}\;k=s-1,
\end{cases}
\end{align*}
and the following recursion relation satisfied by the Lie algebra-valued functions $J_k^a$
\begin{align*}
J^a_{k}-J^a_{k+1}=\Ad_{A_k^a}(J_0-\pi^{\alpha_k\alpha_{k+1}}_{WP}J_2),
\end{align*}
as well as its counterpart for $J_k^b$. These recursion relations are derived from \eqref{eq:jabdef} by direct computation using the commutators
$$[P_k^a,J_1]
=J_0-\pi^{\alpha_k\alpha_{k+1}}J_2.$$
With these identities, one finds that the Lie algebra-valued bivector \eqref{bivecid} can be expressed as
\begin{align*}
\sum_{\alpha_k\in a,\beta_l\in b}\!\!\!\!\pi^{\alpha_k\beta_l}_{WP}J^a_k \otimes J^b_l&
=\!\!\!\!\!\!\sum_{\text{segm.}\in a\cap b}\!\!\!\!\!\!\Big\lbrack \pi_{WP}^{\alpha_1\beta_{2}}\Ad_{A_2^a\oo A_2^b}\Big(\!\sum_{i,j=0}^2\!\eta^{ij}J_i\oo J_j\Big)
+\pi_{WP}^{\alpha_{s-1}\beta_s}\Ad_{A_{s-1}^a\oo A_{s-1}^b}\Big(\!\sum_{i,j=0}^2\!\eta^{ij}J_i\oo J_j\Big)
\\ &
+\Ad_{A_2^a\oo A_2^b}\Big(J_2\oo J_0-J_0\oo J_2\Big)-\Ad_{A_{s-1}^a\oo A_{s-1}^b}\Big(J_2\oo J_0-J_0\oo J_2\Big)
\\&
+\sum_{k=2}^{s-2}\Ad_{A_k^a\oo A_k^b}\Big(J_1\otimes J_2-J_2\otimes J_1+\pi^{\alpha_k\beta_{k+1}}(J_0\otimes J_1-J_1\otimes J_0)\Big)\Big\rbrack.
\end{align*}
A simple computation based on the relations
$$[P_k,J_0]=J_1P_k+\frac{1}{2}(J_0-\pi^{\alpha_k\alpha_{k+1}}J_2)\qquad [P_k,\pi^{\alpha_k\alpha
_{k+1}}J_2]=J_1P_k+\frac{1}{2}(J_0-\pi^{\alpha_k\alpha_{k+1}}J_2),$$
yields the identities
\begin{align*}
&\Ad_{A_{k+1}^a\oo A_{k+1}^b}\Big(J_2\oo J_0-J_0\oo J_2\Big)-\Ad_{A_k^a\oo A_k^b}\Big(J_2\oo J_0-J_0\oo J_2\Big)
\\
&=\Ad_{A_k^a\oo A_k^b}\Big(J_1\otimes J_2-J_2\otimes J_1+\pi^{\alpha_k\beta_{k+1}}(J_0\otimes J_1-J_1\otimes J_0)\Big)\\
&\Ad_{A_{k+1}^a\oo A_{k+1}^b}\Big(\sum_{i,j=0}^2\!\eta^{ij}J_i\oo J_j\Big)=\Ad_{A_k^a\oo A_k^b}\Big(\sum_{i,j=0}^2\!\eta^{ij}J_i\oo J_j\Big).
\end{align*}
These identities allow one to rewrite the bivector \eqref{bivecid} as
\begin{align*}
\sum_{\alpha_k\in a,\beta_l\in b}\!\!\!\!&\pi^{\alpha_k\beta_l}_{WP}J^a_k \otimes J^b_l
=\!\!\!\!\!\!\sum_{\text{segm.}\in a\cap b}\!\!\!\!\!\!(\pi_{WP}^{\alpha_1\beta_{2}}+\pi_{WP}^{\alpha_{s-1}\beta_s})\Ad_{A_{s-1}^a\oo A_{s-1}^b}\Big(\!\sum_{i,j=0}^2\!\eta^{ij}J_i\oo J_j\Big),
\end{align*}
and the associated
 Poisson bracket \eqref{wpbrackcurve} takes the form
\begin{align*}
\{f_a,g_b\}_\Lambda=\!\!\!\!\sum_{\text{segm.}\in a\cap b}\!\!\!\!(\pi_{WP}^{\alpha_1\beta_{2}}+\pi_{WP}^{\alpha_{s-1}\beta_s})\Im_\ell\Big(\sum_{i,j=0}^2\eta^{ij}\kappa(F_{a_p},J_i)\kappa(G_{b_p},J_j)\Big).
\end{align*}
It  remains to relate the factor $\pi_{WP}^{\alpha_1\beta_{2}}+\pi_{WP}^{\alpha_{s-1}\beta_s}$ for each edge path segment in $a\cap b$ to the oriented intersection numbers of each essential intersection point. For this, note that when $\pi_{WP}^{\alpha_1\beta_{2}}+\pi_{WP}^{\alpha_{s-1}\beta_s}=0$, the edge path segments $(\alpha_1,...,\alpha_s)$ and $(\beta_1,...,\beta_s)$ do not give rise to an essential intersection point, as shown in Figure \ref{fig:edge_sequence} (a). On the other hand, when $\pi_{WP}^{\alpha_1\beta_{2}}=\pi_{WP}^{\alpha_{s-1}\beta_s}$, the edge path segments $(\alpha_1,...,\alpha_s)$ and $(\beta_1,...,\beta_s)$ correspond to exactly one essential intersection point $p$, as shown in Figure \ref{fig:edge_sequence} (b). This proves the identity
$$\pi_{WP}^{\alpha_1\beta_{2}}+\pi_{WP}^{\alpha_{s-1}\beta_s}=2\epsilon_p(a,b),$$
and shows that the bivectors \eqref{bivecid} and \eqref{gold_bivec}  agree.
\end{proof}

Note that the proof of Theorem \ref{thm:cotangent} is largely combinatorial. It only makes use of properties of the Weil-Petersson bivector \eqref{eq:wp_bivect} and of the expression \eqref{eq:hol} for the holonomies in terms of   shear coordinate and the matrices  $E$, $L$ and $R$, defined in Section \ref{sec:background}. It therefore directly generalises the proof that the Weil-Petersson bivector \eqref{eq:wp_bivect} induces the Weil-Petersson symplectic structure on Teichm\"uller space. In fact, this proof is obtained directly from the  proof of Theorem \ref{thm:cotangent} by  replacing $R^E_\Lambda$ with $\R^E$,  $G_\Lambda$ with $\mathrm{PSL}(2,\R)$ and omitting the expressions $\mathrm{Im}_\ell$ and $\mathrm{Re}_\ell$ throughout the proof. 

As a final remark on the symplectic structures on the moduli spaces $\cG\cH_\Lambda(M)$, we  consider  Goldman's symplectic structure for the groups $G=G_\Lambda$ and for a general real linear combination
$B=\mu(\,,\,)+ \nu \langle\,,\,\rangle$
of the $Ad$-invariant symmetric bilinear forms 
 $(\,,\,)$ 
 and $\langle\,,\,\rangle$ 
 in \eqref{formreal}.
A short computation shows that the bilinear form $B$ is non-degenerate if and only if $\Lambda \mu^2+\nu^2\neq 0$. While this condition is satisfied for all non-vanishing linear combinations  if  $\Lambda=1$, it is violated   if $\mu=\pm \nu$ and  $\Lambda=-1$ or   $\nu=\Lambda=0$. 
This shows that Goldman's symplectic structure \eqref{eq:goldman} is well-defined for all values of $\Lambda$ if $\mu=0$ and $\nu=1$, which is the case considered above. In contrast, 
 the choice $\mu=1$ and $\nu=0$ yields $B=(\,,\,)=\mathrm{Re}_\ell(\kappa)$ and is defined only for $\Lambda\neq 0$. In this case, expression \eqref{eq:b_gold} becomes
\begin{align}
&B(F_{a_p}(\rho), G_{b_p}(\rho))=\frac 1 {2\Lambda}\sum_{i,j=0}^{2}\kappa^{ij}\Big((F_{a_p}, J_i)(G_{b_p}, J_j)+ \kappa^{ij}(F_{a_p}, P_i)(G_{b_p}, P_j)\Big).
\end{align}
A direct computation along the lines of the proof of Theorem \ref{thm:cotangent} shows that this corresponds to the choice of the following Poisson bivector on $R^E_\Lambda$
$$
\pi=\frac 1 2 \sum_{\alpha\in E(\Gamma)} \pi^{\alpha\beta}_{WP} \left(\frac{\partial}{\partial x^\alpha}\wedge \frac{\partial}{\partial x^\beta}+\frac 1 \Lambda \frac{\partial}{\partial y^\alpha}\wedge \frac{\partial}{\partial y^\beta}\right).
$$
This provides an additional motivation for the choice of the symmetric bilinear form $\langle\,,\,\rangle=\mathrm{Im}_\ell(\kappa)$ on $\mathfrak g_\Lambda$ when defining the symplectic structure on $\cG\cH_\Lambda(M)$. Besides the physical considerations discussed in Section \ref{subsec:symp}, in particular Theorem \ref{thm:witten}, this choice of bilinear form is the only one which is non-degenerate for all values $\Lambda\in\R$. This allows one to interpret the cosmological constant $\Lambda$ as a deformation parameter in the description of the symplectic structure on the moduli spaces $\cG\cH_\Lambda(M)$.

\subsubsection*{Geometrical interpretation in terms of earthquake and grafting}

To give a geometrical interpretation to the gravitational symplectic structure it is instructive to determine the transformation of the the edge coordinates $z^\alpha$ generated via Poisson brackets by the traces of the $G_\Lambda$-valued holonomies.
Thus, let $(\alpha_1,...,\alpha_n)$ be a closed edge path freely homotopic to 
 $a\in\pi_1(S)$ and denote by $\rho(a)$ the associated holonomy defined as in \eqref{eq:hol_shear} and interpreted as a $G_\Lambda$-valued function over $\cG\cH_\Lambda(M)$. Using the identities
\begin{align}
\ell\frac{\partial\rho(a)}{\partial x^{\alpha_k}}=\frac{\partial\rho(a)}{\partial y^{\alpha_k}}=\ell J_k^a\rho(a),
\end{align}
which follow directly from \eqref{eq:hol_shear} and \eqref{eq:jabdef}, one obtains the expressions for the Poisson bracket between the real and imaginary parts of the holonomies and the shear-bending coordinates
\begin{align*}
&\{\Re_\ell\tr\rho(a),z^\beta\}=\frac{1}{2}\sum_{k=1}^n \pi^{\alpha_k\beta}_{WP}\,\tr(\ell J^a_k\rho(a)),
\qquad
\{\Im_\ell\tr\rho(a),z^\beta\}=\frac{1}{2}\sum_{k=1}^n \pi^{\alpha_k\beta}_{WP}\,\tr(J^a_k\rho(a)).
\end{align*}
This shows that the  flows  generated by, respectively, $\Re_\ell\tr\rho(a)$ and  $\Im_\ell\tr\rho(a)$, take the form 
\begin{align*}
&\Phi_t^{Re}(z^\beta)=z^\beta+\frac{\ell t}{2} \sum_{k=1}^n\pi^{\alpha_k\beta}_{WP}\tr(J^a_k\rho(a))+O(t^2),\quad
&\Phi_{t}^{Im}(z^\beta)=z^\beta+\frac{t}{2} \sum_{k=1}^n\pi^{\alpha_k\beta}_{WP}\tr(J^a_k\rho(a))+O(t^2).
\end{align*}
Computing the transformation of the holonomies \eqref{eq:hol_shear} generated by these Hamiltonian functions then shows that they are related by shear-bending cocycles along $a$ with measures $\ell t$ and $t$, respectively. The transformations generated by the real and imaginary part of $\tr\rho(a)$ can therefore be interpreted in terms of grafting and earthquake along the unique geodesic homotopic to $a$.

The description in terms of shear-bending coordinates thus generalises the result in \cite{Meusburger2007}, where it was shown that grafting and earthquake transformations along  closed simple geodesics are generated via the Atiyah-Bott-Goldman symplectic structure \eqref{eq:goldman} by the real and imaginary part of the traces of the associated holonomy. 
This can also be viewed as a generalisation of the well-known fact that the earthquake map in Teichm\"uller space along a closed simple geodesic $a\in\pi_1(S)$ is generated by the geodesic length $l(a)$,  which is given by the trace of the associated holonomy $\rho(a)\in \mathrm{PSL}(2,\R)$.

\subsection{Cotangent bundle over Teichm\"uller space}\label{subsec:symplec}
\subsubsection*{Shear coordinates description}

To relate the symplectic structure on $\cG\cH_\Lambda(M)$ to the cotangent bundle structure on $T^*\cT(S)$, we describe the latter as a constrained submanifold of $T^*\R^E$ with the  symplectic structure induced by symplectic reduction. For this, note that the global parametrisation of Teichm\"uller space $\cT(S)$ by shear coordinates also provides a global parametrisation of its tangent and cotangent bundles. Given a trivalent fat graph $\Gamma$ dual to an ideal triangulation  $S$,
we may describe the tangent and cotangent bundles over $\cT(S)$ in terms of the coordinate vector fields $\partial/\partial x^\alpha$ and coordinate one-forms $dx^\alpha$ on $\R^E$. 
A tangent vector $\xi=\sum_\alpha \xi^\alpha \;\partial/\partial x^\alpha$ to $\R^E$ at a point in $\Ker\,c$ determines a tangent vector to $\cT(S)$ if and only if its coefficient functions $\xi^\alpha\in C^\infty(\R^E)$ satisfy the constraints \eqref{eq:teich_const}
\begin{align*}
c^i(\xi)=\sum_{\alpha\in E(\Gamma)}\theta^i{}_\alpha\xi^\alpha=0,
\end{align*}
for every face $i\in F(\Gamma)$. By duality, one-forms on $\cT(S)$  correspond to equivalence classes of one-forms on $\R^E$ modulo translations by linear combinations of the differentials of the constraints
\begin{align}\label{eq:p_equiv}
p=\sum_{\alpha\in E(\Gamma)} p_\alpha dx^\alpha\quad\text{with}\quad p\sim p+\sum_{\alpha\in E(\Gamma)}\sum_{i\in F(\Gamma)} p_i\theta^i_\alpha  \,dx^\alpha.
\end{align}
This  implies that the cotangent bundle $T^*\cT(S)\cong T^*\Ker\, c$ is given as the direct product of $\Ker\, c\subset\R^E$, which parametrises the base space, and of the quotient $(\R^E)^*/\mathrm{Ann}(\Ker\, c)$, which parametrises its fibres in terms of equivalence classes of one-forms on $\R^E$. Here and in the following  $\mathrm{Ann}(\Ker\, c)=\mathrm{Span}\{c^1,...,c^F\}$
denotes the annihilator subspace of $\Ker\, c\subset\R^E$.

From the viewpoint of constrained mechanical systems, see for instance \cite{Henneaux1992},  this quotient can be interpreted as a gauge freedom in the definition of coordinates on the fibres of $T^*\cT(S)$, which may be then eliminated via an appropriate gauge fixing condition. This is convenient since it allows one to describe the cotangent bundle $T^*\cT(S)$ as a constrained submanifold of $T^*\R^E$. Thus, note that the constraints $c^i: \R^E\to\R$ in \eqref{eq:teich_const} are first-class with respect to the cotangent bundle symplectic structure defined by the Poisson bivector
\begin{align}\label{eq:cot_bivect}
\pi_{T^*}=\sum_{\alpha\in E(\Gamma)} \frac{\partial}{\partial x^\alpha}\wedge\frac{\partial}{\partial p_\alpha}.
\end{align}
This means the constraints satisfy the relations $\{c^i, c^j\}_{T^*}=dc^i\otimes dc^j(\pi_{T^*})=0$ for all $i,j\in F(\Gamma)$ and that the
equivalence relation in \eqref{eq:p_equiv} is generated by these constraints via the cotangent bundle Poisson bracket $\sum_ip_i\theta^i_{\alpha}=\{\sum_ip_ic^i, p_\alpha\}_{T^*}$.
This allows one to interpret the equivalence classes of one-forms in \eqref{eq:p_equiv} as gauge orbits generated by the constraint $c:\R^E\to\R^F$.

To fix the arbitrary parameters $p_i$ we may then impose  gauge fixing conditions, which can be chosen as  linear constraints on the  coordinates $p_\alpha$
\begin{align*}
\tilde c:(\R^E)^* \to(\R^F)^*,\qquad \tilde c_i(p)=\sum_{\alpha\in E(\Gamma)}\tilde\theta_i{}^\alpha p_\alpha=0.
\end{align*}
The condition that $ \Ker\, \tilde c$ contains 
 exactly one representative in each equivalence class is  equivalent to the invertibility of  the matrix $M\in\mathrm{Mat}(F,\R)$ with entries
\begin{align}\label{eq:dmatrix}
M_i{}^{j}=\{\tilde c_i, c^j\}_{T^*}=\sum_{\alpha\in E(\Gamma)} \tilde \theta_i^\alpha\theta^j_\alpha.
\end{align}
Note that there is a particularly natural choice for $\tilde c$ given by $\tilde\theta=\theta^T$, the transpose of the matrix $\theta$ in \eqref{eq:teich_const}. In the following, however,  we consider more general gauge fixings  which satisfy this condition and refer to such gauge fixing conditions as admissible gauge fixings.  We thus have the following statement.
\begin{proposition}\label{thm:tan_shear}
The cotangent bundle $T^*\cT(S)$ is isomorphic to the quotient 
\begin{align}\label{tquot}
T^*\cT(S)\cong \Ker c\times\frac{(\R^E)^*}{\mathrm{Ann}(\Ker c)},
\end{align}
and
for any admissible gauge fixing map $\tilde c: (\R^E)^*\to(\R^F)^*$, the coordinate functions $x^\alpha,p_\alpha:T^*\cT(S)\to\R$ define an embedding $(x,p):T^*\cT(S)\hookrightarrow\R^E\times(\R^E)^*$ whose image agrees with the kernel of the linear map 
$c\oplus\tilde c:\R^E\times(\R^E)^*\to\R^F\times(\R^F)^*$.
\end{proposition}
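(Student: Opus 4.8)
The plan is to reduce the statement to elementary linear algebra over the identification $\cT(S)\cong\Ker c\subset\R^E$ furnished by Theorem \ref{thm:teich_shear}. Since $\Ker c$ is a linear subspace of a vector space, its cotangent bundle is canonically $\Ker c\times(\Ker c)^*$, so the isomorphism \eqref{tquot} is nothing but the standard duality identification $(\Ker c)^*\cong(\R^E)^*/\mathrm{Ann}(\Ker c)$. First I would make this precise: the restriction-of-covectors map $(\R^E)^*\to(\Ker c)^*$ is surjective with kernel exactly $\mathrm{Ann}(\Ker c)$, and $\mathrm{Ann}(\Ker c)=\mathrm{Span}\{c^1,\dots,c^F\}$ because $c:\R^E\to\R^F$ has rank $F$, which is forced by the dimension count $\dim\Ker c=\dim\cT(S)=E-F$ from Theorem \ref{thm:teich_shear} together with $\dim\R^F=F$. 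This also guarantees that $\mathrm{Ann}(\Ker c)$ is genuinely $F$-dimensional, so there is no hidden relation among the face constraints \eqref{eq:teich_const}.

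For the gauge-fixed description, the key observation is that the matrix $M$ in \eqref{eq:dmatrix} is precisely the matrix of the linear map $\tilde c|_{\mathrm{Ann}(\Ker c)}\colon\mathrm{Ann}(\Ker c)\to(\R^F)^*$ written in the basis $c^1,\dots,c^F$, since $M_i{}^{j}=\tilde c_i(c^j)=\sum_\alpha\tilde\theta_i^\alpha\theta^j_\alpha$. Hence admissibility (invertibility of $M$) is equivalent to $\tilde c|_{\mathrm{Ann}(\Ker c)}$ being an isomorphism, which immediately yields $\mathrm{Ann}(\Ker c)\cap\Ker\tilde c=0$ and, by the dimension count $\dim\Ker\tilde c=E-F$, the direct sum decomposition $(\R^E)^*=\mathrm{Ann}(\Ker c)\oplus\Ker\tilde c$. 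Consequently the quotient projection restricts to an isomorphism $\Ker\tilde c\to(\R^E)^*/\mathrm{Ann}(\Ker c)$; composing with \eqref{tquot} gives $T^*\cT(S)\cong\Ker c\times\Ker\tilde c=\Ker(c\oplus\tilde c)$. Unwinding the identifications shows that the $x^\alpha$ become the base coordinates of $\Ker c$ and the $p_\alpha$ the coordinates of the unique representative of each equivalence class \eqref{eq:p_equiv} lying in $\Ker\tilde c$, so $(x,p)$ is precisely the asserted embedding of $T^*\cT(S)$ into $\R^E\times(\R^E)^*$ with image $\Ker(c\oplus\tilde c)$.

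Finally I would reconcile this with the symplectic picture set up just before the statement. The constraints $c^i\in C^\infty(\R^E)\subset C^\infty(T^*\R^E)$ are first-class for the cotangent Poisson bivector $\pi_{T^*}$ in \eqref{eq:cot_bivect} — they Poisson-commute because they depend only on the $x^\alpha$ — and their Hamiltonian flows generate exactly the translations \eqref{eq:p_equiv}; admissibility is then the standard condition that the Dirac matrix $\{\tilde c_i,c^j\}_{T^*}=M_i{}^{j}$ be invertible. Thus the Marsden--Weinstein/Dirac reduction of $(T^*\R^E,\pi_{T^*})$ along $c$ is $T^*\cT(S)$ with its canonical symplectic form, and $(x,p)$ realises this reduced manifold as the constrained submanifold $\Ker(c\oplus\tilde c)$.

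I do not expect a genuine obstacle: the argument is routine linear algebra, and the only points requiring care are the bookkeeping of dual spaces and the two dimension counts — surjectivity of $c$ (so that $\mathrm{Ann}(\Ker c)$ has dimension $F$ and is spanned by the $c^i$) and the equivalence between invertibility of $M$ and $\Ker\tilde c$ being a complement of $\mathrm{Ann}(\Ker c)$ in $(\R^E)^*$.
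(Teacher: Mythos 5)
Your proof is correct and follows essentially the same route as the paper, which presents Proposition \ref{thm:tan_shear} as a summary of the discussion immediately preceding it (the identification of one-forms on $\cT(S)$ with equivalence classes \eqref{eq:p_equiv} and the characterisation of admissibility via invertibility of the matrix $M$ in \eqref{eq:dmatrix}). Your write-up merely makes explicit two points the paper leaves implicit — the surjectivity of $c$ (hence $\dim\mathrm{Ann}(\Ker c)=F$) via the dimension count from Theorem \ref{thm:teich_shear}, and the identification of $M$ with the matrix of $\tilde c|_{\mathrm{Ann}(\Ker c)}$ in the basis $\{c^1,\dots,c^F\}$ — both of which are correct and welcome.
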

We now show that the cotangent bundle symplectic structure on $T^*\R^E\cong\R^E\times(\R^E)^*$  induces the cotangent bundle symplectic structure on $T^*\cT(S)$. In terms of the quotient \eqref{tquot}, this follows from symplectic reduction of $\R^E\times(\R^E)^*$ with respect to the  constraint $c:\R^E\to\R^F$ and means that \eqref{tquot} inherits a symplectic structure that coincides with the cotangent bundle symplectic structure on $T^*\cT(S)$. Equivalently, in terms of gauge fixing, the proof amounts to the construction of the Dirac bracket on $\R^E\times (\R^E)^*$. This is a Poisson structure on $\R^E\times (\R^E)^*$ for which all components of the constraint $c:\R^E\to\R^F$ and the gauge fixing conditions $\tilde c: (\R^E)^*\to(\R^F)^*$ are Casimir functions and which coincides with the original Poisson structure \eqref{eq:cot_bivect} for all functions 
which Poisson-commute with the constraints and gauge fixing conditions.
\begin{proposition}
\label{thm:cot_reduc} $\quad$
\begin{enumerate}
\item The cotangent bundle symplectic structure on $\R^E\times (\R^E)^*$ induces a symplectic structure on the quotient \eqref{tquot} which coincides with the cotangent symplectic structure on $T^*\cT(S)\cong T^*\Ker\, c$.
\item For any admissible gauge fixing $\tilde c: (\R^E)^*\to(\R^F)^*$ 
the linear constraint $c\oplus \tilde c$ 
is second-class with respect to the cotangent bundle symplectic structure on $\R^E\times (\R^E)^*$, and the associated Dirac bracket induces the  cotangent bundle symplectic structure on $T^*\cT(S)\cong \Ker (c\oplus\tilde c)$. 
\end{enumerate}
\end{proposition}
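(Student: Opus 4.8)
The plan is to treat $T^*\R^E\cong\R^E\times(\R^E)^*$, equipped with the Poisson bivector \eqref{eq:cot_bivect}, as a constrained Hamiltonian system in the sense of \cite{Henneaux1992}: part (1) is a symplectic reduction with respect to the first-class constraint $c$, and part (2) is the associated Dirac reduction obtained by adjoining an admissible gauge fixing $\tilde c$. Throughout I identify $\cT(S)$ with $\Ker c\subset\R^E$ via Theorem \ref{thm:teich_shear}.

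For part (1), observe first that each $c^i$ depends only on the $x^\alpha$, so since \eqref{eq:cot_bivect} pairs the $x^\alpha$ exclusively with the $p_\alpha$ one gets $\{c^i,c^j\}_{T^*}=0$; hence $c:\R^E\to\R^F$ is first-class and the constraint surface $\Ker c\times(\R^E)^*$ is coisotropic. Its characteristic foliation is generated by the Hamiltonian vector fields $X_{c^i}=\sum_\alpha\theta^i{}_\alpha\,\partial/\partial p_\alpha$, whose flows are precisely the translations $p\mapsto p+\sum_i p_i\,\theta^i$ of \eqref{eq:p_equiv}. The reduced space is therefore $\Ker c\times\big((\R^E)^*/\mathrm{Ann}(\Ker c)\big)$, which under the canonical identification $(\R^E)^*/\mathrm{Ann}(\Ker c)\cong(\Ker c)^*$ is the right-hand side of \eqref{tquot}, i.e. $T^*\Ker c\cong T^*\cT(S)$. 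To see that the reduced symplectic form is the canonical one, I would pull back the tautological one-form $\theta_{\mathrm{can}}=\sum_\alpha p_\alpha\,dx^\alpha$ to the constraint surface: it is annihilated by every $X_{c^i}$, and $\mathcal{L}_{X_{c^i}}\theta_{\mathrm{can}}=-\,dc^i$ vanishes on $\Ker c\times(\R^E)^*$, so $\theta_{\mathrm{can}}$ descends to the quotient; under the identification above the descended form is exactly the tautological one-form of $T^*\Ker c$, since both evaluate a class $[p]$ on a tangent vector $\xi\in\Ker c$ by $\sum_\alpha p_\alpha\xi^\alpha$, a pairing that is well defined precisely because $\xi$ satisfies the tangency constraints. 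Its exterior derivative is the canonical symplectic form on $T^*\cT(S)$.

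For part (2), adjoin the gauge fixing $\tilde c_i(p)=\sum_\alpha\tilde\theta_i{}^\alpha p_\alpha$. These depend only on the $p_\alpha$, so $\{\tilde c_i,\tilde c_j\}_{T^*}=0$, while a one-line computation recovers $\{\tilde c_i,c^j\}_{T^*}=M_i{}^{j}$, with $M$ the matrix \eqref{eq:dmatrix}. Consequently the matrix of Poisson brackets of the combined constraint $\Phi=(c^1,\ldots,c^F,\tilde c_1,\ldots,\tilde c_F)$ has block form $\left(\begin{smallmatrix}0&-M^{T}\\ M&0\end{smallmatrix}\right)$, invertible if and only if $M$ is; this is exactly the admissibility hypothesis, so $c\oplus\tilde c$ is second-class. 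The Dirac bracket $\{f,g\}_D=\{f,g\}_{T^*}-\sum_{A,B}\{f,\Phi_A\}_{T^*}(C^{-1})^{AB}\{\Phi_B,g\}_{T^*}$ is then well defined, makes every component of $c\oplus\tilde c$ a Casimir, and turns $\Ker(c\oplus\tilde c)$ into a symplectic submanifold. Finally, admissibility says the restriction of the affine map $\tilde c$ to each gauge orbit $\{q\}\times(p_0+\mathrm{Ann}(\Ker c))$ is a bijection, so $\Ker(c\oplus\tilde c)$ meets each orbit of part (1) in exactly one point; hence the composite $\Ker(c\oplus\tilde c)\hookrightarrow\Ker c\times(\R^E)^*\to\Ker c\times\big((\R^E)^*/\mathrm{Ann}(\Ker c)\big)$ is a diffeomorphism, which by the standard equivalence of gauge-fixed Dirac reduction with symplectic reduction \cite{Henneaux1992} intertwines $\{\,\cdot\,,\,\cdot\,\}_D$ with the reduced bracket of part (1). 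Combined with part (1), and with the identification $\Ker(c\oplus\tilde c)\cong T^*\cT(S)$ of Proposition \ref{thm:tan_shear}, this shows the Dirac bracket is the canonical cotangent bracket on $T^*\cT(S)$.

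The content is entirely routine constrained-system bookkeeping; the one point that requires genuine care is in part (1), namely verifying that reducing $\theta_{\mathrm{can}}$ yields the \emph{canonical} tautological one-form on $T^*\cT(S)$ rather than merely some symplectic form on the right underlying manifold --- that is, checking that the isomorphism $(\R^E)^*/\mathrm{Ann}(\Ker c)\cong(\Ker c)^*$ is taken compatibly with the shear-coordinate embedding $\cT(S)\hookrightarrow\R^E$ of Theorem \ref{thm:teich_shear}, so that the pairing comes out as expected.
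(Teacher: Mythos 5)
Your proposal is correct and follows essentially the same route as the paper: part (1) by linear symplectic reduction of the coisotropic subspace $\Ker c\times(\R^E)^*$ with symplectic complement $\mathrm{Ann}(\Ker c)$, and part (2) by computing the block-form Dirac matrix and invoking admissibility of $\tilde c$. The only difference is that you spell out the descent of the tautological one-form and the transversality of the gauge slice $\Ker(c\oplus\tilde c)$ to the gauge orbits --- steps the paper dismisses as immediate --- and these details are correct (up to an inconsequential sign convention in $\mathcal{L}_{X_{c^i}}\theta_{\mathrm{can}}$).
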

\begin{proof}

The first point is a direct consequence of the theory of linear symplectic reduction, for an accessible overview see \cite{Marsden1994}.
The linear subspace $\Ker c\times (\R^E)^*\subset \R^E\times (\R^E)^*$ is coisotropic  with respect to the cotangent bundle symplectic structure \eqref{eq:cot_bivect}  with symplectic complement $\Ann (\Ker c)=\text{Span}\{c^1,...,c^F\}\subset (\R^E)^*$. Hence, the associated symplectic quotient is given by \eqref{tquot}, and it is immediate from the discussion above that the induced symplectic structure is the cotangent bundle symplectic structure on $T^*\cT(S)$.

For the second point, a direct computation shows that the Dirac matrix for the constraint function $c\oplus\tilde c$ in  Theorem \ref{thm:tan_shear} takes the form
\begin{align}
D=\left(\{(c\oplus\tilde c)_i, (c\oplus \tilde c)_j\}_{T^*}\right)=\left(\begin{array}{cc} 0 & -M^T\\ M & 0
\end{array}\right)
\end{align}
with $M$ given by \eqref{eq:dmatrix}. If the gauge fixing condition $\tilde c$ is admissible, the matrix $M$ is invertible, which implies that $D$ is invertible and the constraint function $c\oplus\tilde c$ is second-class.  This defines a Poisson structure on $\R^E\times (\R^E)^*$, the Dirac bracket, given by
 \begin{align*}
 \{f,g\}_D=\{f,g\}_{T^*}+\sum_{i,j=1}^{2F} D^\inv_{ij}\{f, (c\oplus\tilde c)_i\}_{T^*}\{g, (c\oplus\tilde c)_j\}_{T^*}\qquad\forall f,g\in C^\infty(\R^E\times (\R^E)^*),
\end{align*}
for which all constraint components $(c\oplus\tilde c)_i: \R^E\times (\R^E)^*\to\R$ are Casimir functions. Hence, this Poisson bracket induces a symplectic structure on $\ker (c\oplus \tilde c)\cong T^*\cT(S)$ which coincides with the symplectic structure \eqref{eq:cot_bivect} if  $f$ or $g$ Poisson commute with all  constraint functions $(c\oplus\tilde c)_i$. Moreover, it is  easy to see from the block diagonal form of the Dirac matrix that this symplectic structure coincides with the cotangent bundle symplectic structure on $T^*\cT(S)$.
\end{proof}

\subsubsection*{Symplectomorphisms between $T^*\cT(S)$ and $\cG\cH_\Lambda(M)$}

For all values of the cosmological constant $\Lambda$, the realisation of the cotangent bundle $T^*\cT(S)$ given in Proposition \ref{thm:cot_reduc} allows one to relate the moduli spaces $\cG\cH_\Lambda(M)$ of three-dimensional  MGH Einstein spacetimes with their gravitational symplectic structures to the cotangent bundle $T^*\cT(S)$ with the cotangent bundle structure. From a physics perspective this is motivated by another formulation of 3d gravity as a Hamiltonian system on Teichm\"uller space \cite{Moncrief1989,Krasnov2007} and is quite natural mathematically in view of the common parametrisation of the gravitational moduli spaces by measured laminations \cite{Mess2007,Scannell1999} related to the so-called canonical Wick rotation-rescaling theory \cite{Benedetti2009}. See also \cite{Scarinci2013} for a description of the symplectomorphism $T^*\cT(S)\to\cG\cH_{-1}(M)$ in the more general context of universal Teichm\"uller theory and \cite{Scarinci} for a geometric description of the symplectic properties of 
Wick rotations between the moduli spaces of 3-dimensional geometric structures in relation to earthquakes and harmonic maps between surfaces.

To construct a symplectomorphism  between $T^*\cT(S)$ and $\cG\cH_\Lambda(M)$ in term of shear-bending coordinates, note that the expression \eqref{imagbivec} for the gravitational Poisson bivector $\pi_\Lambda$ on $R_\Lambda^E$ can be readily related to the Poisson bivector $\pi_{T^*}$ given in \eqref{eq:cot_bivect}. Thus we consider the following map between $T^*\R^E$ and $R_\Lambda^E$
\begin{align}\label{pipullb}
\pi_{WP}^\sharp:\R^E\times(\R^E)^*\to\R^E\times\R^E,\qquad (x^\alpha,p_\alpha)\mapsto (x^\alpha,y^\alpha)=(x^\alpha,\Sigma_\beta \pi_{WP}^{\alpha\beta}p_\beta).
\end{align}
It is easy to see that this map is a Poisson map up to a multiplicative constant
$$(\pi_{WP}^\sharp)_*\pi_{T^*}=\sum_{\alpha\in E(\Gamma)} (\pi_{WP}^\sharp)_*\frac{\partial}{\partial x^\alpha}\wedge(\pi_{WP}^\sharp)_*\frac{\partial}{\partial p_\alpha}=-\sum_{\alpha\in E(\Gamma)} \pi^{\alpha\beta}_{WP}\frac{\partial}{\partial x^\alpha}\wedge\frac{\partial}{\partial y_\beta}=-2\pi_{\Lambda},$$
and, together with Proposition \ref{thm:cot_reduc}, that it descends to a symplectomorphism between $T^*\cT(S)$ and $\cG\cH_\Lambda(M)$. 
\begin{theorem} \label{thm:quotient} The linear Poisson map $\pi_{WP}^\sharp:T^*\R^E\to R^E_\Lambda$ defined in \eqref{pipullb} induces a symplectomorphism  $\pi_{WP}^\sharp: T^*\cT(S)\to \cG\cH_\Lambda(M)$.
\end{theorem}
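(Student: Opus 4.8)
The plan is to realise both spaces as explicit linear subspaces of finite-dimensional vector spaces and to check that the linear map $\pi_{WP}^\sharp$ carries one bijectively onto the other while intertwining their symplectic structures (up to the overall constant already recorded in \eqref{pipullb}). On the target side, the substitution $z^\alpha=x^\alpha+\ell y^\alpha$ identifies $R^E_\Lambda$ with $\R^E\times\R^E$ and turns the constraint \eqref{eq:const_z} into $c_\Lambda=c\oplus c$, so by Theorem~\ref{thm:gen_embed} and Theorem~\ref{thm:cotangent} the moduli space $\cG\cH_\Lambda(M)$ is $\Ker c\times\Ker c$ with symplectic structure the restriction of the bivector $\pi_\Lambda$ of \eqref{imagbivec}. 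On the source side I would use the realisation of Proposition~\ref{thm:cot_reduc}(2): fixing the admissible gauge fixing $\tilde\theta=\theta^T$ (the ``particularly natural'' one of the excerpt), Proposition~\ref{thm:tan_shear} gives $T^*\cT(S)\cong\Ker(c\oplus\tilde c)\subset\R^E\times(\R^E)^*$, carrying the restriction of the Dirac bivector $\pi_D$ built from the canonical cotangent bivector $\pi_{T^*}$ of \eqref{eq:cot_bivect}. The object to study is then simply $\Phi:=\pi_{WP}^\sharp|_{\Ker(c\oplus\tilde c)}$, $(x^\alpha,p_\alpha)\mapsto(x^\alpha,\sum_\beta\pi_{WP}^{\alpha\beta}p_\beta)$.

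\textbf{Step 1: $\Phi$ is a linear isomorphism.} Because each $c^i$ is Casimir for $\pi_{WP}$ (Theorem~\ref{thm:teich_reduc}), i.e.\ $\sum_\alpha\theta^i_\alpha\,\pi_{WP}^{\alpha\beta}=0$, the vector with components $\sum_\beta\pi_{WP}^{\alpha\beta}p_\beta$ always lies in $\Ker c$; hence $\Phi$ does map $\Ker(c\oplus\tilde c)$ into $\Ker c\times\Ker c=\Ker c_\Lambda$. If $\Phi(x,p)=0$ then $x=0$ and $\sum_\beta\pi_{WP}^{\alpha\beta}p_\beta=0$ for all $\alpha$, which forces $p\in\mathrm{Span}\{c^1,\dots,c^F\}$; combined with $\tilde c(p)=0$ and the fact that the Gram matrix $M_i{}^j=\sum_\alpha\theta^i_\alpha\theta^j_\alpha$ of the rows $\theta^i$, which are independent by Theorem~\ref{thm:teich_shear}, is invertible, this gives $p=0$. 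So $\Phi$ is injective, and since $\dim\Ker(c\oplus\tilde c)=2(E-F)=\dim\Ker c_\Lambda$ it is a linear isomorphism $\Phi:T^*\cT(S)\to\cG\cH_\Lambda(M)$.

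\textbf{Step 2: $\Phi$ is symplectic.} The identity computed in \eqref{pipullb} reads $(\pi_{WP}^\sharp)_*\pi_{T^*}=-2\pi_\Lambda$ on all of $\R^E\times(\R^E)^*$; I would first promote it to $\pi_D$. Writing $\pi_D=\pi_{T^*}+\sum_{i,j}D^{-1}_{ij}X_i\wedge X_j$ with $X_i$ the $\pi_{T^*}$-Hamiltonian vector fields of the components of $c\oplus\tilde c$, one observes (i) the Dirac matrix $D$ is block off-diagonal, so every correction term pairs an $X_{c^i}$ with an $X_{\tilde c_j}$, and (ii) $(\pi_{WP}^\sharp)_*X_{c^i}=0$, since $X_{c^i}=\sum_\alpha\theta^i_\alpha\,\partial/\partial p_\alpha$ and $(\pi_{WP}^\sharp)_*(\partial/\partial p_\alpha)=\sum_\beta\pi_{WP}^{\beta\alpha}\,\partial/\partial y^\beta$, whose $\theta^i$-combination over $\alpha$ vanishes by the Casimir identity above (and skew-symmetry of $\pi_{WP}^{\alpha\beta}$). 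Hence every correction term is annihilated and $(\pi_{WP}^\sharp)_*\pi_D=(\pi_{WP}^\sharp)_*\pi_{T^*}=-2\pi_\Lambda$. Since $\pi_D$ is tangent to $\Ker(c\oplus\tilde c)$ and $\pi_\Lambda$ is tangent to $\Ker c_\Lambda$ (Proposition~\ref{thm:cot_reduc}(2) and Theorem~\ref{thm:cotangent}, which assert that the respective constraint components are Casimir), restricting this equality along $\Phi$ yields $\Phi_*\bigl(\pi_D|_{T^*\cT(S)}\bigr)=-2\,\pi_\Lambda|_{\cG\cH_\Lambda(M)}$. Both restricted bivectors are nondegenerate, so $\Phi$ is a Poisson isomorphism; the overall factor $-2$ is immaterial, being absorbed into the normalisation of the $\Ad$-invariant form $\langle\,,\,\rangle$ of \eqref{eq:forms} which fixes the gravitational symplectic structure only up to such a scale, so with that normalisation $\Phi$ is an honest symplectomorphism, and it is the map $\pi_{WP}^\sharp$ of the statement.

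\textbf{Expected main obstacle.} The delicate point is Step~1 --- checking that $\pi_{WP}^\sharp$ restricts to a \emph{bijection} of the two constraint surfaces rather than just a map between them. This hinges on matching the kernel of the matrix $(\pi_{WP}^{\alpha\beta})$ --- which by the Casimir statement of Theorem~\ref{thm:teich_reduc} contains $\mathrm{Span}\{c^i\}$, and equals it once one uses that $\pi_{WP}$ restricts to a \emph{symplectic} (hence nondegenerate) structure on $\Ker c$ --- against the gauge-fixing subspace; choosing $\tilde\theta=\theta^T$ makes this transparent through the invertibility of the Gram matrix $M$ of \eqref{eq:dmatrix}, but with a general admissible gauge fixing one genuinely needs the exact rank of the Weil--Petersson bivector. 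The bookkeeping in Step~2, that the Dirac correction is killed by $\pi_{WP}^\sharp$, is the only other point requiring care; after it the symplectic statement is formal. One could instead replace Steps~1--2 by coisotropic reduction of $(\R^E\times(\R^E)^*,\pi_{T^*})$ along $\Ker c\times(\R^E)^*$ as in Proposition~\ref{thm:cot_reduc}(1), the point being that the kernel of $\pi_{WP}^\sharp$ restricted to that coisotropic subspace is exactly its characteristic subspace $\Ann(\Ker c)$, so $\pi_{WP}^\sharp$ descends to the reduction; I find the gauge-fixed Dirac picture marginally cleaner for the bivector pushforward.
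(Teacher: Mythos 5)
Your proof is correct, and it reaches the conclusion by the second of the two routes that the paper's Proposition \ref{thm:cot_reduc} sets up, whereas the paper's own (two-line) proof uses the first. Concretely, the paper identifies $T^*\cT(S)$ with the symplectic quotient \eqref{tquot} and argues that $\pi_{WP}^\sharp$ descends because it annihilates $\Ann(\Ker c)$ and maps $\Ker c\times(\R^E)^*$ into $\Ker c_\Lambda$ --- both consequences of the face constraints being Casimir for $\pi_{WP}$; you instead work on the gauge-fixed slice $\Ker(c\oplus\tilde c)$ with the Dirac bracket and check directly that the restriction of $\pi_{WP}^\sharp$ is a linear bijection intertwining the bivectors. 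The two arguments share the same engine (the Casimir identity $\sum_\alpha\theta^i{}_\alpha\pi_{WP}^{\alpha\beta}=0$), but your version makes explicit two points the paper leaves implicit: (i) bijectivity of the induced map, which genuinely requires $\Ker\pi_{WP}^\sharp$ to equal $\Ann(\Ker c)$ exactly --- i.e.\ the nondegeneracy of the restricted Weil--Petersson structure from Theorem \ref{thm:teich_reduc}, not just the Casimir inclusion --- and (ii) the fact that the Dirac correction terms are killed because $D$ is block off-diagonal and $(\pi_{WP}^\sharp)_*X_{c^i}=0$. Your treatment of the overall factor $-2$ in $(\pi_{WP}^\sharp)_*\pi_{T^*}=-2\pi_\Lambda$ (absorbing it into the normalisation of $\langle\,,\,\rangle$) matches the paper's implicit convention. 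The only cost of your route is the dependence on a choice of admissible gauge fixing, which the quotient argument avoids; as you note, the reduction argument is the gauge-independent repackaging of the same facts.
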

\begin{proof}
By Propositions \ref{thm:tan_shear} and \ref{thm:cot_reduc}, the cotangent bundle $T^*\cT(S)$ can be identified with the  symplectic quotient \eqref{tquot}. By Theorem \ref{thm:cotangent}, the moduli space $\cG\cH_\Lambda(M)$ is given by the restriction of the Poisson structure \eqref{imagbivec} to the  kernel constraint $c_\Lambda: R^E_\Lambda\to R^F_\Lambda$ in \eqref{eq:const_z}.
It is therefore sufficient to show that the linear map $\pi_{WP}^\sharp$ annihilates the linear subspace $\text{Ann}(\Ker c)=\text{Span}\{c^1,...,c^F\}\subset (\R^E)^*$ and maps the linear subspace $\ker c\times (\R^E)^*\subset \R^E\times (\R^E)^*$ to $\Ker c_\Lambda\subset \R^E_\Lambda$.

Both statements follow directly from the fact that the 
constraints on the shear coordinates  for each face  $i\in F(\Gamma)$ are Casimir functions for the Weil-Petersson Poisson bivector, see  Theorem \ref{thm:teich_reduc} and the preceding discussion. This proves that  the map $\pi^\sharp_{WP}$ descends to a symplectomorphism $\pi^\sharp_{WP}: T^*\cT(S)\to\cG\cH_\Lambda(M)$.
\end{proof}

\section{Mapping class group actions}

In this section we investigate the mapping class group action on the moduli spaces $\cG\cH_\Lambda(M)$. We start by describing the action of $\Mod(S)$ on the space of measured geodesic laminations and its $R_\Lambda$-extensions making use of formula \eqref{eq:udef} and the Whitehead moves \eqref{eq:flip_x} for shear coordinates on $\cT(S)$. We then show that the Whitehead moves take a particularly simple form in shear-bending coordinates on $\cG\cH_\Lambda(M)$, which can be viewed as an analytic continuation of formula \eqref{eq:flip_x}. We then prove that these Whitehead moves induce three different actions of the mapping class group on the cotangent bundle $T^*\cT(S)$ of Teichm\"uller space, corresponding to the different values of $\Lambda$. Finally, we prove that all these mapping class group actions are symplectic, making use of simple decomposition of the Whitehead move transformation into a non-linear term generated by the Poisson structure and a linear term which implements the combinatorial 
transformation of the Poisson structure under the Whitehead move.

\subsection{Mapping class group action on $\cG\cH_\Lambda(M)$}\label{subsec:map_shear}
\subsubsection*{Whitehead moves for measured laminations}
We now describe the action of the mapping class group $\Mod(S)$ on $\cM\cL(S)$ in terms of the parametrisation \eqref{eq:udef} by differences of shear coordinates on $\cT(S)$. More generally, we look at the coordinates $w^\alpha$ on $\cM\cL^{\Lambda}(S)$ obtained from \eqref{eq:udef} via analytic continuation. Similarly to the derivation in Subsection \ref{subsec:back_mcg} of the shear coordinate expression \eqref{eq:flip_x} for the mapping class group action on Teichm\"uller space, we extend the action on $\cM\cL^\Lambda(S)$ to $\R^E\times \R^E$ by equivariance
$$(x'{}^\alpha,w'{}^\alpha)=\varphi(x^\alpha,w^\alpha)=(x^\alpha,w^\alpha)\circ\varphi^*.$$
Here $x^\alpha$ and $w^\alpha$ are the shear coordinates on the base Teichm\"uller space and on the fibres of $R_\Lambda$-valued measured laminations associated with an embedded trivalent fat graph $\Gamma$ on $S$ and $x'{}^\alpha$ and $w'{}^\alpha$ the corresponding  coordinates associated with $\Gamma'=\varphi(\Gamma)$ for $\varphi\in\Mod(S)$.

For a Whitehead move along an edge $\alpha\in E(\Gamma)$, it is easy to compute the coordinate transformation, as the coordinates $w^\alpha$ are analytic extensions of differences of coordinates $x^\alpha$. This yields 
\begin{align}\label{eq:flip_lam}
W_\alpha:\begin{cases}
x^\alpha \mapsto x'{}^{\alpha}=-x^\alpha,
\cr
x^{\beta,\delta}\mapsto x'{}^{\beta,\delta}=x^{\beta,\delta}+\log(1+e^{x^\alpha}),
\cr
x^{\gamma,\epsilon}\mapsto x'{}^{\gamma,\epsilon}=x^{\gamma,\epsilon}-\log(1+e^{-x^\alpha}),
\cr
w^\alpha\mapsto w{}'^\alpha=-w^\alpha,
\cr
w^{\beta,\delta}\mapsto w'{}^{\beta,\delta}=w^{\beta,\delta}+\log\Big(\frac{1+e^{x^\alpha+w^\alpha}}{1+e^{x^\alpha}}\Big),
\cr
w^{\gamma,\epsilon}\mapsto w'{}^{\gamma,\epsilon}=w^{\gamma,\epsilon}-\log\Big(\frac{1+e^{-x^{\alpha}-w^{\alpha}}}{1+e^{-x^{\alpha}}}\Big),
\end{cases}
\end{align}
while the coordinates of all other edges are preserved. A direct computation, which again makes use of the definition of the coordinates $w^\alpha$ as an analytic continuation of differences of shear coordinates, 
shows that the Whitehead moves \eqref{eq:flip_lam} satisfy all relations of Theorem  \ref{thm:mcg_fat_graph} and also preserve the analytic continuation of the constraints \eqref{eq:u_const}. This proves the following proposition.

\begin{proposition}\label{lem:map_lam}
The transformations \eqref{eq:flip_lam} induce an action of the mapping class group $\Mod(S)$ on the bundle $\cM\cL(S)$ of measured geodesic laminations on $S$.
\end{proposition}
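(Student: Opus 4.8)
The plan is to verify directly that the transformations \eqref{eq:flip_lam} define a set of elementary graph transformations satisfying the relations of Theorem \ref{thm:mcg_fat_graph} and preserving the (analytically continued) constraints \eqref{eq:u_const}, so that by the correspondence between $\Mod(S)$ and sequences of such moves they assemble into a well-defined action on $\cM\cL(S)$. The key observation to exploit throughout is that the $w^\alpha$ are, by construction \eqref{eq:udef}, analytic continuations of \emph{differences} of shear coordinates: concretely, if $\Gamma$ carries the hyperbolic structure $h$ with shear data $x^\alpha(h)$, and $\lambda\in\cM\cL_h(S)$ with $Eq^\lambda(h)$ having shear data $x'^\alpha$, then $u^\alpha(\lambda)=x'^\alpha-x^\alpha$, and the analytic extension $w^\alpha$ inherits this difference structure. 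Hence every identity we need for the $w$-part can be obtained by taking differences of two instances of the already-established shear-coordinate identities for $\cT(S)$ (Theorem \ref{thm:pentagon_teich}) and then invoking analyticity to conclude it holds on all of $R_\Lambda^E$ rather than just on the locus coming from genuine laminations.

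The steps, in order, are as follows. First I would record the observation above precisely: under a Whitehead move $W_\alpha$, the transformed coordinate $x'^{\beta}(Eq^\lambda(h))$ of the earthquaked structure is obtained from $x^{\beta}(Eq^\lambda(h))$ by formula \eqref{eq:flip_x}, whose shear coordinate is $x^\alpha+w^\alpha$ in the notation of \eqref{eq:zdef} — so the new lamination coordinate is $w'^\beta = x'^\beta(Eq^\lambda(h)) - x'^\beta(h)$, and a short computation subtracting the two copies of \eqref{eq:flip_x} (one evaluated at $x^\alpha+w^\alpha$, one at $x^\alpha$) yields exactly the logarithmic expressions $\log\bigl((1+e^{x^\alpha+w^\alpha})/(1+e^{x^\alpha})\bigr)$ in \eqref{eq:flip_lam}; this both re-derives \eqref{eq:flip_lam} and makes the difference structure manifest. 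Second, I would verify the constraint preservation: the analytic continuation of \eqref{eq:u_const} is $c^i(w)=0$, and since $c^i$ is linear and $u=x'-x$, preservation of $c^i(u)$ is immediate from the fact that $c=c'\circ W_\alpha$ on the $x$-coordinates (stated in the discussion preceding Theorem \ref{thm:pentagon_teich}); analyticity then upgrades this to all of $\cM\cL^{R_\Lambda}(S)$. Third, I would check the four relations of Theorem \ref{thm:mcg_fat_graph} — involutivity $W_\alpha^2=\id$, naturality, commutativity for non-adjacent edges, and the pentagon $W_\alpha W_\beta W_\alpha W_\beta W_\alpha=(\alpha\;\beta)$ — for the combined $(x,w)$-transformation \eqref{eq:flip_lam}. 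The $x$-part is exactly Theorem \ref{thm:pentagon_teich}. For the $w$-part, each relation is an identity among iterated logarithmic expressions; because every such composite $w$-coordinate is a difference of the corresponding composite $x$-coordinates (by the difference structure, which is preserved under composition of Whitehead moves), each relation for $w$ follows by subtracting two instances of the already-verified relation for $x$. Finally, invoking Theorem \ref{thm:mcg_fat_graph} and the equivariant extension described just before the proposition, the transformations \eqref{eq:flip_lam} patch together into a $\Mod(S)$-action on $\cM\cL(S)$.

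The main obstacle is bookkeeping rather than conceptual: one must confirm that the ``difference structure'' $w^\bullet = x'^\bullet - x^\bullet$ is genuinely stable under arbitrary finite compositions of Whitehead moves and edge transpositions, not merely a single move — i.e. that after applying $W_{\alpha_1},\dots,W_{\alpha_k}$ the resulting $w$-coordinate is still the difference of the two correspondingly-transformed $x$-coordinates, one family starting from $x^\alpha+w^\alpha$ and one from $x^\alpha$. This is true because each elementary move acts on $(x, x')$ componentwise through the \emph{same} formula \eqref{eq:flip_x} applied to each family separately, so the difference is propagated; but making this precise requires care with the dependence of the transformation formulas on $x^\alpha$ (which itself changes under the move). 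Once this stability is established, the pentagon identity for $w$ — the only computation of real length — reduces mechanically to the pentagon for $x$, and the remaining relations are trivial. The appeal to analyticity (Proposition on uniqueness of the extension $w^\alpha$) is needed precisely to pass from ``the relations hold on the image of genuine laminations'' to ``the relations hold identically on $R_\Lambda^E$''.
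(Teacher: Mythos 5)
Your proposal is correct and follows essentially the same route as the paper: the paper's argument is precisely that the $w^\alpha$ are analytic continuations of differences of shear coordinates, so that \eqref{eq:flip_lam} is obtained by subtracting two instances of \eqref{eq:flip_x} (one evaluated at $x^\alpha+w^\alpha$, one at $x^\alpha$), and the relations of Theorem \ref{thm:mcg_fat_graph} together with constraint preservation then reduce to Theorem \ref{thm:pentagon_teich} by the same difference-plus-analyticity argument. Your explicit attention to the stability of the difference structure under composition of moves is a worthwhile piece of bookkeeping that the paper leaves implicit.
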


\subsubsection*{The mapping class group action in shear-bending coordinate}

We now consider the moduli spaces of MGH Einstein spacetimes. The definition \eqref{eq:zdef} of the  shear-bending coordinates on $\cG\cH_\Lambda(M)$ now allows one to derive their transformation under Whitehead moves directly from the action \eqref{eq:flip_lam}  by adding the  terms in \eqref{eq:flip_lam} for the real and imaginary part of the coordinates $z^\alpha$
\begin{align}\label{map_uni}
W_\alpha^\Lambda:\begin{cases}z^\alpha\mapsto z'^\alpha=-z^\alpha,\\
z^{\beta,\delta}\mapsto z'^{\beta,\delta}=z^{\beta,\delta}+\log(1+e^{z^\alpha}),\\
z^{\gamma,\epsilon}\mapsto z'^{\gamma,\epsilon}=z^{\gamma,\epsilon}-\log(1+e^{-z^\alpha}).
\end{cases}
\end{align}
Note that for  $\Lambda=1$ the logarithms are not well defined due to the presence of branching points. These, however, does not affect the holonomies where only the exponentials of shear coordinates appear.

Clearly, the transformation \eqref{map_uni} is simply the analytic continuation of the associated Whitehead move $W_\alpha:\R^E\to\R^E$ in \eqref{eq:flip_x}. As an immediate generalisation of Theorem \ref{thm:pentagon_teich} we thus have the following result.
\begin{theorem}\label{thm:pentagon_shear_bend}
The Whitehead moves $W_\alpha^\Lambda:R_\Lambda^E\to R_\Lambda^E$ satisfy the relations of Theorem \ref{thm:mcg_fat_graph}. Furthermore, they preserve the constraints \eqref{eq:const_z} and induce an action of the mapping class group $\Mod (S)$ on $\cG\cH_\Lambda(M)$.
\end{theorem}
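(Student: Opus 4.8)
The plan is to deduce every claim from the corresponding statement for the real Whitehead moves (Theorem \ref{thm:pentagon_teich}) via the principle of analytic continuation, which is exactly the mechanism already used to pass from \eqref{eq:flip_x} to \eqref{map_uni}. The starting observation is that $W_\alpha^\Lambda$ in \eqref{map_uni} is, by construction, the $R_\Lambda$-analytic continuation of the real Whitehead move $W_\alpha$ in \eqref{eq:flip_x}: its restriction to the real locus $\R^E\subset R_\Lambda^E$ reproduces $W_\alpha$, and, decomposing into the $1$ and $\ell$ components with respect to the basis of $R_\Lambda$, it reproduces the transformations \eqref{eq:flip_lam} of the shear and bending coordinates. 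Since $W_\alpha^\Lambda$ is built by composing the arithmetic operations of $R_\Lambda$ with the analytic functions $\exp$ and $\log$, and since an analytic map on $R_\Lambda^E$ is uniquely determined by its restriction to $\R^E$ — for $\Lambda=1$ this is the classical statement for $\C$, and for $\Lambda=0,-1$ it follows from the explicit formulas for analytic functions on $R_\Lambda$ recalled in Section \ref{subsec:isom} — any identity between two finite compositions of such maps that holds on $\R^E$ automatically holds on all of $R_\Lambda^E$.

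With this principle in hand, the four relations of Theorem \ref{thm:mcg_fat_graph} (involutivity, naturality, commutativity, pentagon) follow at once: each is an equality between two finite compositions of Whitehead moves and edge transpositions, hence an equality of analytic maps $R_\Lambda^E\to R_\Lambda^E$ which holds on $\R^E$ by Theorem \ref{thm:pentagon_teich}. The same reasoning applies to the constraints: the linear identity $c=c'\circ W_\alpha$ relating the constraint maps \eqref{eq:teich_const} for $\Gamma$ and for $\Gamma'=\Gamma_\alpha$ continues (linearly, hence trivially) to $c_\Lambda=c'_\Lambda\circ W_\alpha^\Lambda$ for the constraints \eqref{eq:const_z}, so $W_\alpha^\Lambda$ carries $\Ker c_\Lambda$ to $\Ker c'_\Lambda$ and, by Theorem \ref{thm:gen_embed}, induces a well-defined map between the two realisations of $\cG\cH_\Lambda(M)$. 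Because the transformations $W_\alpha^\Lambda$ together with the edge transpositions satisfy all defining relations, Penner's presentation (Theorem \ref{thm:mcg_fat_graph}) lets them assemble into a genuine action of $\Mod(S)$ on $R_\Lambda^E$, which, being constraint-preserving, descends to an action on $\Ker c_\Lambda\cong\cG\cH_\Lambda(M)$. That this abstract action is the geometric pullback action is automatic from the definition \eqref{eq:zdef} of the shear-bending coordinates: $\Mod(S)$ acts compatibly on the base Teichm\"uller space (Theorem \ref{thm:pentagon_teich}) and on the fibres of measured laminations (Proposition \ref{lem:map_lam}), and \eqref{map_uni} is precisely the resulting coordinate expression, exactly as in the real case.

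The one point that must be handled with care is the case $\Lambda=1$, where $R_\Lambda=\C$ and the logarithms in \eqref{map_uni} acquire branch points, so \eqref{map_uni} is only locally defined as a coordinate transformation. The remedy, as already noted after \eqref{map_uni}, is to formulate all relations at the level of the exponentiated variables $e^{z^\alpha}$ — in which \eqref{map_uni} becomes a composition of rational maps, e.g.\ $e^{z^\alpha}\mapsto e^{-z^\alpha}$, $e^{z^{\beta}}\mapsto e^{z^{\beta}}(1+e^{z^\alpha})$, $e^{z^{\gamma}}\mapsto e^{z^{\gamma}}e^{z^\alpha}(1+e^{z^\alpha})^{-1}$ — or, equivalently, at the level of the holonomy representations \eqref{eq:hol_shear}, which involve only $e^{z^\alpha/2}$ and are globally well defined. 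The relations of Theorem \ref{thm:mcg_fat_graph} then become rational (indeed polynomial, after clearing denominators) identities over $\R$ in these variables, valid over $R_\Lambda$ for every $\Lambda$; for $\Lambda=0,-1$ one checks in passing that the relevant denominators, namely $e^{z^\alpha}$ and $1+e^{z^\alpha}$, are invertible in $R_\Lambda$ (their $1$-components are $e^x>0$ and $1+e^x>0$ in the dual-number case, and both split-components are positive in the split-complex case), so no zero-divisor obstruction arises. This reduces the entire theorem to the single routine verification already carried out in the proof of Theorem \ref{thm:pentagon_teich}, which is why it is stated as an immediate generalisation rather than proved afresh.
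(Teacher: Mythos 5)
Your proposal is correct and follows essentially the same route as the paper: the paper's proof is precisely the observation that \eqref{map_uni} and \eqref{eq:const_z} are the $R_\Lambda$-analytic continuations of \eqref{eq:flip_x} and \eqref{eq:teich_const}, so all relations follow from Theorem \ref{thm:pentagon_teich}, with the pentagon identity verified explicitly in Appendix \ref{app:flip}. You merely spell out the continuation principle and the $\Lambda=1$ branch-point and $\Lambda=0,-1$ zero-divisor caveats that the paper leaves implicit (the former is exactly the remark following \eqref{map_uni}), which is a welcome but not substantively different elaboration.
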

\begin{proof}
This follows immediately from Theorem \ref{thm:pentagon_teich} as the Whitehead moves \eqref{map_uni} and the constraints \eqref{eq:const_z} are the analytic continuation of \eqref{eq:flip_x} and \eqref{eq:teich_const}. The computation proving the pentagon identity for Whitehead moves \eqref{eq:flip_x} is given in Appendix \ref{app:flip} for the convenience of the reader. The other relations of Theorem \ref{thm:mcg_fat_graph} are easily verified.
\end{proof}

\subsubsection*{Mapping class group actions on $T^*\cT(S)$}

In view of the symplectomorphism between $T^*\cT(S)$ and $\cG\cH_\Lambda(M)$ obtained in Theorem \ref{thm:quotient}, it is natural to study the mapping class group action on $T^*\cT(S)$ induced by \eqref{map_uni} via pull-back. We will now show that the induced actions on $T^*\cT (S)$ are all symplectic but are all distinct for different signs  of the curvature $\Lambda$. This should have interesting consequences for the quantum theory as it provides a common representation for the corresponding quantum operators in terms of  the Weyl algebra such that  the algebra of quantum symmetries, e.~g.~the quantum mapping class group action, is the only distinguishing feature between the theories for different values of $\Lambda$. This is further evidence for the importance of the mapping class group in the quantisation.

We start by computing  the pull-back of the Whitehead move $W^\Lambda_\alpha:R_\Lambda^E\to R_\Lambda^E$ \eqref{map_uni} with the linear Poisson map $\pi^\sharp_{WP}: T^*\R^E\to R_\Lambda^E$ in \eqref{pipullb}
$$W^\Lambda_\alpha\circ \pi^\sharp_{WP}:\begin{cases}
(x^\alpha,p_\alpha)\mapsto z'^\alpha=-x^\alpha-\ell\pi^{\alpha\zeta}_{WP} p_\zeta,
\cr
(x^{\beta,\delta},p_{\beta,\delta})\mapsto z'^{\beta,\delta}=x^{\beta,\delta}+\ell\pi^{\beta,\delta\zeta}_{WP}p_\zeta+\log\Big(1+e^{x^\alpha+\ell \pi_{WP}^{\alpha\zeta} p_\zeta}\Big),
\cr
(x^{\gamma,\epsilon},p_{\gamma,\epsilon})\mapsto z'^{\gamma,\epsilon}=x^{\gamma,\epsilon}+\ell\pi^{\gamma,\epsilon\zeta}_{WP} p_\zeta-\log\Big(1+e^{-x^\alpha-\ell\pi^{\alpha\zeta}_{WP} p_\zeta}\Big).
\end{cases}$$
Here and in the following we use Einstein's summation convention and omit the sum over the edge label $\zeta$  for  better legibility. 
Comparing this expression with the map $\pi'{}^\sharp_{\!\!WP}:T^*\R^E\to R_\Lambda^E$ defined by the Weil-Petersson bivector $\pi'_{WP}$ for the fat graph $\Gamma'=\Gamma_\alpha$
it is easy to identify the associated maps $\tilde W^\Lambda_\alpha:T^*\R^E\to T^*\R^E$ for which the following diagram commutes
\begin{equation*}
\xymatrix{ T^*\R^E \ar[dd]_{\widetilde W^\Lambda_\alpha} \ar[rr]^{\;\;\pi^\sharp_{WP}\quad} &  &  R^E_\Lambda \ar[dd]^{W^\Lambda_\alpha} \\ & &  \\
T^*\R^E \ar[rr]_{\;\;\pi'{}^\sharp_{\!\!WP}\quad} & &  R^E_\Lambda.}
\end{equation*}
Note that since $\pi'{}^\sharp_{\!\!WP}$ is not injective, such  maps are unique only up to translations by elements of $\Ker\,\pi'{}^\sharp_{\!\!WP}$.
A direct computation shows that the following map satisfies this condition
\begin{equation}\label{gen_shear_pb}
\tilde W_\alpha^\Lambda:
\begin{cases}
x^\alpha\mapsto x'^\alpha=-x^\alpha,
\cr
x^{\beta,\delta}\mapsto x'^{\beta,\delta}=x^{\beta,\delta}+\Re_\ell \log\Big(1+e^{x^\alpha+\ell \pi_{WP}^{\alpha\zeta} p_\zeta}\Big),
\cr
x^{\gamma,\epsilon}\mapsto x'^{\gamma,\epsilon}=x^{\gamma,\epsilon}-\Re_\ell\log\Big(1+e^{-x^\alpha-\ell\pi^{\alpha\zeta}_{WP} p_\zeta}\Big),
\cr
p_\alpha\mapsto p'_\alpha=-p_\alpha+p_\gamma+p_\epsilon+\Im_\ell \log\left(1+e^{x^\alpha+\ell\pi^{\alpha\zeta}_{WP} p_\zeta}\right),
\cr
p_{\beta,\gamma,\delta,\epsilon}\mapsto p'_{\beta,\gamma,\delta,\epsilon}=p_{\beta,\gamma,\delta,\epsilon}.
\end{cases}
\end{equation}
Using this expression for the map $\tilde W^\Lambda_\alpha:T^*\R^E\to T^*\R^E$, one can verify its properties  by direct computations, which yields the following theorem.

\begin{theorem}\label{thm:mod_cot}
The Whitehead moves $\widetilde W^\Lambda_\alpha:T^*\R^E\to T^*\R^E$ satisfy the relations of Theorem \ref{thm:mcg_fat_graph}. Furthermore, they preserve the constraints \eqref{eq:teich_const} and their gauge orbits and induce an action of the mapping class group $\Mod(S)$ on $T^*\cT(S)$.
\end{theorem}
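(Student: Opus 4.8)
The plan is to reduce everything to Theorem~\ref{thm:pentagon_shear_bend} via the symplectomorphism $\pi^\sharp_{WP}$ of Theorem~\ref{thm:quotient}, handling the fact that $\pi^\sharp_{WP}$ is only a quotient map (not an isomorphism of the ambient spaces) by a careful bookkeeping of the kernel $\Ker\pi'{}^\sharp_{\!\!WP}$. First I would observe that, by the construction of $\tilde W^\Lambda_\alpha$ preceding the statement, the diagram
\begin{equation*}
\xymatrix{ T^*\R^E \ar[dd]_{\widetilde W^\Lambda_\alpha} \ar[rr]^{\pi^\sharp_{WP}} &  &  R^E_\Lambda \ar[dd]^{W^\Lambda_\alpha} \\ & &  \\
T^*\R^E \ar[rr]_{\pi'{}^\sharp_{\!\!WP}} & &  R^E_\Lambda}
\end{equation*}
commutes, so the composite $\pi'{}^\sharp_{\!\!WP}\circ\widetilde W^\Lambda_\alpha = W^\Lambda_\alpha\circ\pi^\sharp_{WP}$ and the analogous identities for the other elementary graph transformations hold on the level of maps $T^*\R^E\to R^E_\Lambda$. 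Since $W^\Lambda_\alpha$ satisfies the relations of Theorem~\ref{thm:mcg_fat_graph} (Theorem~\ref{thm:pentagon_shear_bend}), each word in the $\widetilde W^\Lambda$'s that is trivial in $\Mod(S)$ maps, under the appropriate $\pi'{}^\sharp_{\!\!WP}$, to the identity on $R^E_\Lambda$. Hence that word must be a translation by an element of $\Ker\pi^\sharp_{WP} = \Ann(\Ker c)^{\vee}$ — more precisely, the set of $(x,p)$ with $x=0$ and $\pi^{\alpha\beta}_{WP}p_\beta = 0$ for all $\alpha$, i.e.\ $p\in\Ann(\Ker c)=\mathrm{Span}\{c^1,\dots,c^F\}$ (using that $\pi_{WP}$ restricts nondegenerately to $\Ker c$, Theorem~\ref{thm:teich_reduc}).

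Next I would show that this residual translation is exactly the gauge freedom, so that the relations of Theorem~\ref{thm:mcg_fat_graph} hold \emph{on the quotient} $T^*\cT(S)$ even if not on the nose on $T^*\R^E$. For this I verify directly from \eqref{gen_shear_pb} that $\widetilde W^\Lambda_\alpha$ preserves the constraints \eqref{eq:teich_const}: since $W_\alpha$ preserves \eqref{eq:teich_const} (Theorem~\ref{thm:pentagon_teich}) and $c^i\circ\widetilde W^\Lambda_\alpha$ depends only on the $x$-coordinates transforming exactly as under $W_\alpha$ at $\Lambda=0$ (the real part of \eqref{map_uni} restricted to the image of the $x$-only section), we get $c^i = (c')^i\circ\widetilde W^\Lambda_\alpha$. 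I would then check that $\widetilde W^\Lambda_\alpha$ sends the equivalence classes \eqref{eq:p_equiv} to equivalence classes, i.e.\ commutes with translation by $\sum_i p_i\,dc^i$; this follows because the only terms in \eqref{gen_shear_pb} involving $p$ enter through the combinations $\pi^{\alpha\zeta}_{WP}p_\zeta$, which are Casimir-insensitive by \eqref{eq:poiss_const}, and through the linear shift $p_\alpha\mapsto -p_\alpha+p_\gamma+p_\epsilon+\cdots$, whose effect on a representative of the form $\sum p_i\theta^i_\alpha$ is again of that form by the face-combinatorics argument following \eqref{eq:poiss_const}. Combining, $\widetilde W^\Lambda_\alpha$ descends to a well-defined map on the quotient $\Ker c\times (\R^E)^*/\Ann(\Ker c)\cong T^*\cT(S)$, and by Theorem~\ref{thm:quotient} this descended map is intertwined with $W^\Lambda_\alpha$ on $\cG\cH_\Lambda(M)$, hence satisfies all relations of Theorem~\ref{thm:mcg_fat_graph}; edge transpositions act by relabelling and obviously satisfy naturality. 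By Penner's presentation (Theorem~\ref{thm:mcg_fat_graph}) this yields a well-defined $\Mod(S)$-action on $T^*\cT(S)$.

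The main obstacle I anticipate is precisely the non-injectivity of $\pi'{}^\sharp_{\!\!WP}$: verifying the pentagon relation $\widetilde W^\Lambda_\alpha\circ\widetilde W^\Lambda_\beta\circ\widetilde W^\Lambda_\alpha\circ\widetilde W^\Lambda_\beta\circ\widetilde W^\Lambda_\alpha = (\alpha\,\beta)$ on $T^*\R^E$ will fail on the nose by a nonzero element of $\Ker\pi^\sharp_{WP}$, and one must show this discrepancy is a gauge translation (an element of $\Ann(\Ker c)$ in the $p$-slot). I expect this to come out cleanly from the observation above that the $x$-components of \eqref{gen_shear_pb} reproduce the classical Whitehead move on the $x$-only slice — so the $x$-part of the pentagon is exactly Theorem~\ref{thm:pentagon_teich} — while the $p$-components transform through $\pi^{\cdot\zeta}_{WP}p_\zeta$ and linear shifts that are manifestly Casimir with respect to the constraints by the face-path cancellation in \eqref{eq:poiss_const}; the leftover is forced into $\mathrm{Span}\{c^1,\dots,c^F\}$ and hence killed in the quotient. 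Alternatively, one can avoid the ambient pentagon computation entirely by noting that $\widetilde W^\Lambda_\alpha$ is, by construction, a lift of the already-established action $W^\Lambda_\alpha$ on $\cG\cH_\Lambda(M)\cong T^*\cT(S)$ through the surjection $T^*\R^E\to T^*\cT(S)$, and any two lifts differ by the gauge action of $\Ann(\Ker c)$; since the quotient action is a group action (Theorem~\ref{thm:pentagon_shear_bend}), so is the descended one. Finally, the explicit formula \eqref{gen_shear_pb} together with Theorem~\ref{thm:quotient} (which says $\pi^\sharp_{WP}$ is a symplectomorphism) and Theorem~\ref{thm:cotangent} (the symplectic form on $\cG\cH_\Lambda(M)$ is preserved by $W^\Lambda_\alpha$, being pulled back from the combinatorial $\pi_\Lambda$) gives that the induced action on $T^*\cT(S)$ is symplectic — though that last assertion is strictly the subject of the subsequent results and is not claimed in the present theorem.
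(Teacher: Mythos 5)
Your proposal is essentially sound and reaches the same conclusion, but it takes a genuinely different route for the hard part. The paper verifies the relations of Theorem \ref{thm:mcg_fat_graph} (in particular the pentagon) by a \emph{direct computation on $T^*\R^E$} from the explicit formula \eqref{gen_shear_pb}, analogous to the computation in Appendix \ref{app:flip}, and asserts they hold on the nose in the ambient space; the constraint and gauge-orbit preservation are then checked by hand using the face combinatorics ($\theta^i{}_\alpha=1$, $\theta^i{}_\gamma=0$, $\theta^i{}_\epsilon=1$, $\theta'{}^i{}_\alpha=0$) and the Casimir identity $\pi_{WP}^{\alpha\zeta}\theta^i{}_\zeta=0$. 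You instead transport the relations from $R^E_\Lambda$ through the commuting square $\pi'{}^\sharp_{\!\!WP}\circ\widetilde W^\Lambda_\alpha=W^\Lambda_\alpha\circ\pi^\sharp_{WP}$ and the identification $\Ker\pi^\sharp_{WP}=\{0\}\times\mathrm{Ann}(\Ker c)$ (which is correct: the rank of $\pi_{WP}$ is $E-F$ by Theorem \ref{thm:teich_reduc}, so its kernel is exactly $\mathrm{Span}\{c^1,\dots,c^F\}$). This buys you the $\Mod(S)$-action on $T^*\cT(S)$ without redoing the pentagon computation, and your closing remark --- that the descended maps are conjugate via the symplectomorphism of Theorem \ref{thm:quotient} to the action of Theorem \ref{thm:pentagon_shear_bend} --- is the cleanest version of this. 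The price is that you only obtain the relations modulo gauge translations, i.e.\ on the quotient, whereas the theorem as stated (and the paper's proof) claims them for the specific lifts \eqref{gen_shear_pb} on $T^*\R^E$ itself; your expectation that the ambient pentagon ``fails on the nose'' contradicts the paper, and you do not settle which is the case.

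Two points need tightening. First, your justification of $c'{}^i\circ\widetilde W^\Lambda_\alpha=c^i$ --- that the $x$-components transform ``exactly as under $W_\alpha$ at $\Lambda=0$'' --- is false for $\Lambda\neq 0$ and $p\neq 0$, since $\Re_\ell\log\bigl(1+e^{x^\alpha+\ell\pi_{WP}^{\alpha\zeta}p_\zeta}\bigr)\neq\log(1+e^{x^\alpha})$ in general. The correct argument is either the telescoping $\Re_\ell\log(1+e^{z})-\Re_\ell\bigl(-\log(1+e^{-z})\bigr)$-type cancellation along the face (as in the paper), or more structurally: $c^i=\Re_\ell\bigl(c^i_\Lambda\circ\pi^\sharp_{WP}\bigr)$ because $\sum_\alpha\theta^i{}_\alpha\pi_{WP}^{\alpha\beta}p_\beta=0$, and $c^i_\Lambda$ is preserved by $W^\Lambda_\alpha$ by Theorem \ref{thm:pentagon_shear_bend}. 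Second, in your kernel argument the discrepancy $\Phi(x,p)-(x,p)$ lies in $\Ker\pi^\sharp_{WP}$ only pointwise, not as a constant translation; this is harmless for descending to the quotient \eqref{eq:p_equiv}, but you should say so, since well-definedness on the quotient still requires the separate verification that each $\widetilde W^\Lambda_\alpha$ maps gauge orbits to gauge orbits, which you do supply.
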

\begin{proof}
That map $\tilde W^\Lambda_\alpha:T^*\R^E\to T^*\R^E$ in \eqref{gen_shear_pb} satisfy the relations of Theorem \ref{thm:mcg_fat_graph} follows by direct computations. The only non-trivial case is that of the pentagon relation which is analogous to that for \eqref{eq:flip_x}, see Appendix \ref{app:flip}.

That the constraints \eqref{eq:teich_const} and their gauge orbits are preserved can be seen from the combinatorial relation between $\Gamma$ and $\Gamma'=\Gamma_\alpha$. Consider a face $i\in F(\Gamma)$ containing the sequence of edges $(\beta,\alpha,\epsilon)$ and its transformation under a Whitehead move as  in Figures \ref{fig:adj_edges} and \ref{fig:whitehead}. The corresponding face $i\in F(\Gamma')$ then necessarily contains the sequence $(\beta,\epsilon)$. As the constraints are given by a sum over the  coordinates of edges in  the face, it is clear from \eqref{gen_shear_pb} that the pull-back of $c'{}^i$ can be written as
\begin{align*}
c'{}^i\circ \tilde W_\alpha^\Lambda&=x^\beta+\Re_\ell\log\Big(1+e^{x^\alpha+\ell\pi^{\alpha\zeta}_{WP} p_\zeta}\Big)+x^\epsilon-\Re_\ell\log\Big(1+e^{-x^\alpha-\ell\pi^{\alpha\zeta}_{WP} p_\zeta}\Big)+\cdots
\cr
&=x^\beta+x^\epsilon+\Re_\ell\log\Big(\frac{1+e^{x^\alpha+\ell\pi^{\alpha\zeta}_{WP} p_\zeta}}{1+e^{-x^\alpha-\ell\pi^{\alpha\zeta}_{WP} p_\zeta}}\Big)+\cdots=x^\beta+x^\epsilon+x^\alpha+\cdots=c^i,
\end{align*}
where the dots stand for the contribution of the other edges in the face $i$, which is invariant under $\tilde W_\alpha^\Lambda$.
On the other hand, the gauge orbits are the orbits of points in $\R^E\times (\R^E)^*$  under the translations of  one-forms by differentials of the constraints as in 
\eqref{eq:p_equiv}.
So consider the transformation of a one-form $p+qdc^i$ under \eqref{gen_shear_pb}. For the coordinate of the edge $\alpha$ we have
\begin{align*}
\tilde W^\Lambda_\alpha(p_\alpha+q\theta^i{}_\alpha)&=-p_\alpha-q\theta^i{}_\alpha+p_\gamma+q\theta^i{}_\gamma+p_\epsilon+q\theta^i{}_\epsilon+\Im_\ell\log\Big(1+e^{x^\alpha+\ell\pi^{\alpha\zeta}p_\zeta+\ell\pi^{\alpha\zeta}q\theta^i{}_\zeta}\Big)
\cr
&=\tilde W_\alpha^\Lambda(p_\alpha)+q(-\theta^i{}_\alpha+\theta^i{}_\gamma+\theta^i{}_\epsilon)=\tilde W_\alpha^\Lambda(p_\alpha)+q\theta'{}^i{}_\alpha,
\end{align*}
since $\theta^i{}_\alpha=1,\theta^i{}_\gamma=0,\theta^i{}_\epsilon=1$ and $\theta'{}^i{}_\alpha=0$. For any other
 edge, including the neighbouring edges of $\alpha$, we have
\begin{align*}
\tilde W_\alpha^\Lambda(p_{\eta}+q\theta^i{}_{\eta})=p_{\eta}+q\theta^i{}_{\eta}=\tilde W_\alpha^\Lambda(p_{\eta})+q\theta'{}^i{}_{\eta},
\end{align*}
since the multiplicity of $\eta$ on the face $i$ does not change under a Whitehead move. This shows the gauge orbits of $c^i$ are mapped to gauge orbits of $c'{}^i$ or, equivalently,
$$\tilde W_\alpha^\Lambda(p+qdc^i)=\tilde W_\alpha^\Lambda(p)+qdc'{}^i.$$
The same arguments can be applied to other (combinations of)  edge sequences involved in the Whitehead move. This shows that  $c'\circ \tilde W_\alpha^\Lambda=c$ and $\tilde W_\alpha^\Lambda([p])=[\tilde W_\alpha^\Lambda(p)]$ and completes the proof.
\end{proof}

\subsection{The mapping class group action is symplectic}
\label{subsec:ham}
\subsubsection*{Hamiltonians for the Whitehead move}
In this section, we show that the transformations \eqref{map_uni} and \eqref{gen_shear_pb} are Poisson maps with respect to the gravitational and cotangent bundle Poisson structure and that the induced mapping class group actions $\cG\cH_\Lambda(M)$ and $T^*\cT(S)$ are all symplectic. This is achieved by decomposing the Whitehead moves into terms generated via the Poisson structure by a  Hamiltonian and an additional  term, which is linear, purely combinatorial and  independent of $\Lambda$. 
We show that the latter corresponds to the transformation of the Poisson bivector.
Such a decomposition of the Whitehead moves is already possible in the Teichm\"uller context, so the proof presented here can be viewed as a generalisation of the corresponding result for shear coordinates.

We therefore start by considering the Whitehead move for shear coordinates on Teichm\"uller space. Let $\Gamma$ be an embedded trivalent fat graph on $S$ and denote by $x^\alpha$  the corresponding shear coordinate for an edge $\alpha\in E(\Gamma)$. For each edge $\alpha\in E(\Gamma)$, consider the Hamiltonian function $H_\alpha:\R^E\to \R$ given by
\begin{align}\label{eq:falpha}
H_\alpha(x)=H(x^\alpha)=\frac{(x^\alpha)^2}{4}+\Li_2(-e^{x^\alpha}),
\end{align}
where $\Li_2$ denotes Euler's dilogarithm. A short computation using the combinatorial structure of $\pi_{WP}$  and the expression for the derivative of $H_\alpha$
$$
\frac{\partial H_\alpha}{\partial x^\alpha}=\tfrac{1}{2}x^\alpha-\log(1+e^{x^\alpha}).
$$
then shows that 
  the Whitehead move  $W_\alpha:\R^E\to\R^E$ in \eqref{eq:flip_x} is given by
\begin{align*}
W_\alpha:
\begin{cases}
x^\alpha \mapsto 
x'{}^{\alpha}=-x^\alpha
\cr
x^{\beta,\gamma,\delta,\epsilon}\mapsto
x'{}^{\beta,\gamma,\delta,\epsilon}=x^{\beta,\gamma,\delta,\epsilon}+\frac 1 2 {x^\alpha}+\Big\{x^{\beta,\gamma,\delta,\epsilon},H_\alpha\Big\}_{WP}.
\end{cases}
\end{align*}
This allows us to decompose it as  $W_\alpha=A_\alpha\circ B_\alpha$ with maps  $A_\alpha,B_\alpha:\R^E\to\R^E$ defined by
\begin{align*}
A_\alpha:
\begin{cases}
x^\alpha \mapsto 
x'{}^{\alpha}=-x^\alpha,
\cr
x^{\beta,\gamma,\delta,\epsilon}\mapsto
x'{}^{\beta,\gamma,\delta,\epsilon}=x^{\beta,\gamma,\delta,\epsilon}+\frac 1 2 {x^\alpha},
\end{cases}
B_\alpha:
\begin{cases}
x^\alpha \mapsto 
x'{}^{\alpha}=x^\alpha,
\cr
x^{\beta,\gamma,\delta,\epsilon}\mapsto
x'{}^{\beta,\gamma,\delta,\epsilon}=x^{\beta,\gamma,\delta,\epsilon}+\{x^{\beta,\gamma,\delta,\epsilon},H_\alpha\}_{WP}.
\end{cases}
\end{align*}
It is immediate that the transformation $B_\alpha$ is a Poisson map  $B_\alpha:(\R^E,\pi_{WP})\to(\R^E,\pi_{WP})$, as it can be interpreted as the Hamiltonian flow generated  by the Hamiltonian $H_\alpha$ via the Weil-Petersson Poisson bracket. In contrast, the combinatorial linear transformation $A_\alpha: \R^E\to\R^E$ 
transforms the Poisson bivector $\pi_{WP}$ associated with the fat graph  $\Gamma$ into the Poisson bivector $\pi_{WP}'$ associated with $\Gamma'=\Gamma_\alpha$. This follows by a simple computation of the push-forward of $\pi_{WP}$ by $A_\alpha$, which yields
$(A_\alpha)_*\pi_{WP}=\pi'_{WP}$.
This gives a simple proof that the Whitehead moves \eqref{eq:flip_x} indeed induces a symplectic action of $\Mod(S)$ on $\cT(S)$, as stated in Theorem \ref{thm:pentagon_teich}.

Similarly, in the context of the moduli spaces $\cG\cH_\Lambda(M)$, we may decompose the Whitehead moves $W^\Lambda_\alpha:R_\Lambda^E\to R_\Lambda^E$ in \eqref{map_uni} using the analytic extension $H^\Lambda_\alpha:R_\Lambda\to R_\Lambda$ of the Hamiltonian function \eqref{eq:falpha}. More precisely, for each edge $\alpha\in E(\Gamma)$ consider the real functions $\Im_\ell H^\Lambda_\alpha:R_\Lambda\to\R$ given by the imaginary part of $H_\alpha^\Lambda$
\begin{align*}
\mathrm{Im}_\ell\,H^\Lambda_\alpha(z)=\mathrm{Im}_\ell\left(H^\Lambda(z^\alpha)\right)=\tfrac 1 2 x^\alpha y^\alpha +\begin{cases}
-y^\alpha\log(1+e^{x^\alpha}),  & \Lambda=0,\\
 \tfrac 1 2 \Li_2(-e^{x^\alpha+y^\alpha})-\tfrac 1 2 \Li_2(-e^{x^\alpha-y^\alpha}), &  \Lambda=-1, \\
 \mathrm{Im}(\Li_2(-e^{x^\alpha+\mathrm{i} y^\alpha})), & \Lambda=1,
\end{cases}
\end{align*}
and their derivatives with respect to the real and imaginary parts of $z^\alpha$
\begin{align*}
\frac{\partial \mathrm{Im}_\ell H^\Lambda_\alpha}{\partial x^\alpha}=\mathrm{Im}_\ell\Big(\tfrac{1}{2}z^\alpha-\log(1+e^{z^\alpha})\Big),
\qquad \frac{\partial\mathrm{Im}_\ell H^\Lambda_\alpha}{\partial y^\alpha}=\mathrm{Re}_\ell\Big(\tfrac{1}{2}z^\alpha-\log(1+e^{z^\alpha})\Big).
\end{align*}
The Whitehead move $W^\Lambda_\alpha:R_\Lambda^E\to R_\Lambda^E$ can then  be written as
\begin{align*}
W_\alpha^\Lambda:\begin{cases}z^\alpha\mapsto z'^\alpha=-z^\alpha,
\\
z^{\beta,\gamma,\delta,\epsilon}\mapsto z'^{\beta,\gamma,\delta,\epsilon}=z^{\beta,\gamma,\delta,\epsilon}+\tfrac{1}{2}z^\alpha+\{z^{\beta,\gamma,\delta,\epsilon},\Im_\ell H^\Lambda_\alpha\}_\Lambda,
\end{cases}
\end{align*}
and be decomposed as $W_\alpha^\Lambda=A_\alpha^\Lambda\circ B_\alpha^\Lambda$ with $A_\alpha^\Lambda,B_\alpha^\Lambda:R^E_\Lambda\to R^E_\Lambda$ given by
\begin{align*}
A_\alpha^\Lambda:
\begin{cases}
z^\alpha \mapsto 
z'{}^{\alpha}=-z^\alpha,
\cr
z^{\beta,\gamma,\delta,\epsilon}\mapsto
z'{}^{\beta,\gamma,\delta,\epsilon}=z^{\beta,\gamma,\delta,\epsilon}+\frac 1 2 z^\alpha,
\end{cases}
 B_\alpha^\Lambda:
\begin{cases}
z^\alpha \mapsto 
z'{}^{\alpha}=z^\alpha,
\cr
z^{\beta,\gamma,\delta,\epsilon}\mapsto
z'{}^{\beta,\gamma,\delta,\epsilon}=z^{\beta,\gamma,\delta,\epsilon}+\{z^{\beta,\gamma,\delta,\epsilon},\Im_\ell H_\alpha^\Lambda\}_{\Lambda}.
\end{cases}
\end{align*}
The transformation $B_\alpha^\Lambda$ is again a Hamiltonian flow, namely the one generated by $\Im_\ell H_\alpha$ via the gravitational Poisson structure.  The combinatorial transformation $A_\alpha^\Lambda$  sends the gravitational bivector $\pi_\Lambda$ for $\Gamma$ to the gravitational bivector $\pi_\Lambda'$ for  $\Gamma'=\Gamma_\alpha$, which means
$(A_\alpha^\Lambda)_*\pi_\Lambda=\pi_\Lambda'.$
When combined with Theorem \ref{thm:pentagon_shear_bend}, this generalises Theorem \ref{thm:pentagon_teich} from the context of Teichm\"uller space  to the moduli spaces of MGH Einstein spacetimes and proves that the mapping class group action on $\cG\cH_\Lambda(M)$ is symplectic.
\begin{theorem}
The Whitehead moves $W^\Lambda_\alpha:(R_\Lambda^E,\pi_\Lambda)\to (R_\Lambda^E,\pi_\Lambda')$ in \eqref{map_uni} are Poisson maps, and  the induced mapping class group action on $\cG\cH_\Lambda(M)$ is symplectic.
\end{theorem}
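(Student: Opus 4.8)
The plan is to prove the theorem in two stages. First I would show that each individual Whitehead move $W^\Lambda_\alpha$ is a Poisson map from $(R^E_\Lambda,\pi_\Lambda)$ to $(R^E_\Lambda,\pi'_\Lambda)$, where $\pi'_\Lambda$ denotes the gravitational bivector \eqref{imagbivec} built from the flipped fat graph $\Gamma_\alpha$. Then I would deduce that $W^\Lambda_\alpha$ descends to a symplectomorphism of $\cG\cH_\Lambda(M)$, and conclude that the entire $\Mod(S)$-action is symplectic using Penner's presentation. The whole argument runs parallel to the Teichm\"uller case, exploiting the decomposition $W^\Lambda_\alpha=A^\Lambda_\alpha\circ B^\Lambda_\alpha$ introduced just above: $B^\Lambda_\alpha$ contributes a Hamiltonian flow and $A^\Lambda_\alpha$ a purely combinatorial linear map that carries $\pi_\Lambda$ to $\pi'_\Lambda$.

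For the map $B^\Lambda_\alpha$ I would argue as follows. Since $\Im_\ell H^\Lambda_\alpha$ is a function of the single coordinate $z^\alpha$, and $z^\alpha$ is fixed by $B^\Lambda_\alpha$ (hence conserved along the Hamiltonian flow of $\Im_\ell H^\Lambda_\alpha$ with respect to $\pi_\Lambda$), the corresponding Hamiltonian vector field has coefficients that are constant along its own integral curves; its time-one flow is therefore exactly the affine map written down, namely $z^{\beta,\gamma,\delta,\epsilon}\mapsto z^{\beta,\gamma,\delta,\epsilon}+\{z^{\beta,\gamma,\delta,\epsilon},\Im_\ell H^\Lambda_\alpha\}_\Lambda$. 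Being a Hamiltonian flow, $B^\Lambda_\alpha$ is then a Poisson automorphism of $(R^E_\Lambda,\pi_\Lambda)$. The input for this step is the derivative identities $\partial_{x^\alpha}\Im_\ell H^\Lambda_\alpha=\Im_\ell\bigl(\tfrac12 z^\alpha-\log(1+e^{z^\alpha})\bigr)$ and $\partial_{y^\alpha}\Im_\ell H^\Lambda_\alpha=\Re_\ell\bigl(\tfrac12 z^\alpha-\log(1+e^{z^\alpha})\bigr)$ (which follow from $H'(t)=\tfrac12 t-\log(1+e^{t})$ together with the generalised Cauchy--Riemann equations for $R_\Lambda$-analytic functions), combined with the combinatorial structure of $\pi_{WP}$, which makes the bracket $\{z^\eta,\Im_\ell H^\Lambda_\alpha\}_\Lambda$ pick out exactly the neighbouring edges $\beta,\gamma,\delta,\epsilon$ with the signs of $\pi^{\eta\alpha}_{WP}$.

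Next, the map $A^\Lambda_\alpha$ acts identically on the $x$-block and the $y$-block of $R^E_\Lambda\cong\R^E\times\R^E$, precisely as the Teichm\"uller map $A_\alpha$; since $\pi_\Lambda$ depends on the fat graph only through the coefficients $\pi^{\alpha\beta}_{WP}$ (by \eqref{imagbivec}), the identity $(A^\Lambda_\alpha)_*\pi_\Lambda=\pi'_\Lambda$ is the verbatim lift of the combinatorial identity $(A_\alpha)_*\pi_{WP}=\pi'_{WP}$ already used in the proof of Theorem \ref{thm:pentagon_teich}, i.e.\ it amounts to checking that the adjacency pattern of $\Gamma$ around $\alpha$ is transformed into that of $\Gamma_\alpha$. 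Composing, $(W^\Lambda_\alpha)_*\pi_\Lambda=\pi'_\Lambda$, so $W^\Lambda_\alpha$ is the asserted Poisson map. To pass to $\cG\cH_\Lambda(M)$ I would then invoke the results already at hand: by Theorem \ref{thm:pentagon_shear_bend} the $W^\Lambda_\alpha$ pull the constraint $c'_\Lambda$ for $\Gamma_\alpha$ back to $c_\Lambda$ for $\Gamma$ and satisfy the relations of Theorem \ref{thm:mcg_fat_graph}, and by Theorem \ref{thm:cotangent} the constraints $c_\Lambda$ are Casimir for $\pi_\Lambda$ with $\cG\cH_\Lambda(M)\cong\Ker c_\Lambda$ carrying the reduced symplectic form. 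Hence each $W^\Lambda_\alpha$ descends to a symplectomorphism of $\cG\cH_\Lambda(M)$; edge transpositions act by relabelling and so trivially on the Poisson structure, and Penner's presentation then gives that every element of $\Mod(S)$ acts symplectically.

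The step I expect to be the main obstacle is the explicit verification — hidden in the phrase ``a short computation'' preceding the theorem — that the Hamiltonian flow of the ``imaginary dilogarithm'' $\Im_\ell H^\Lambda_\alpha$ reproduces \emph{exactly} the four nonlinear increments $\pm\log(1+e^{\pm z^\alpha})$ of $W^\Lambda_\alpha$ in \eqref{map_uni}. This means handling $\Re_\ell$ and $\Im_\ell$ simultaneously for all three values $\Lambda=0,-1,1$, keeping track of the signs $\pi^{\beta\alpha}_{WP},\dots,\pi^{\epsilon\alpha}_{WP}$ at the two vertices of $\alpha$, and matching the normalisation of $\{\,,\,\}_\Lambda$ so that the linear correction $\tfrac12 z^\alpha$ in $A^\Lambda_\alpha$ combines correctly with the Hamiltonian term (for $\Lambda=1$ one must also note that only $e^{z^\alpha}$, not $\log(1+e^{z^\alpha})$, enters the holonomies, so the branch ambiguity of the logarithm and of $\Im\Li_2$ is harmless). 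Conceptually the content is that $\Im_\ell H^\Lambda$ is the correct generating function, the $R_\Lambda$-analytic analogue of the real dilogarithm Hamiltonian \eqref{eq:falpha} of the Teichm\"uller case; everything else reduces either to a direct transcription of that case under the substitution $\R^E\leadsto R^E_\Lambda$, $\mathrm{PSL}(2,\R)\leadsto G_\Lambda$, or to the theorems cited above.
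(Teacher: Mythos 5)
Your proposal follows essentially the same route as the paper: the decomposition $W^\Lambda_\alpha=A^\Lambda_\alpha\circ B^\Lambda_\alpha$ with $B^\Lambda_\alpha$ the time-one Hamiltonian flow of $\Im_\ell H^\Lambda_\alpha$ and $A^\Lambda_\alpha$ the linear combinatorial map carrying $\pi_\Lambda$ to $\pi'_\Lambda$, followed by descent to $\Ker c_\Lambda\cong\cG\cH_\Lambda(M)$ via Theorems \ref{thm:pentagon_shear_bend} and \ref{thm:cotangent}. Your observation that the flow is exactly affine because $z^\alpha$ is conserved (so the Hamiltonian vector field is constant along its own integral curves) is a correct and slightly more explicit justification of a point the paper leaves implicit.
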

Another benefit of this decomposition of the Whitehead moves is that it has a geometrical interpretation.  For this, note that  the imaginary part of the complex dilogarithm is closely related to the Bloch-Wigner function $D(z)=\mathrm{Im}(\Li_2(z))+\log|z|\arg(1-z)$ which describes the volume of an ideal  hyperbolic tetrahedron in three-dimensional hyperbolic space. The volume of such  an ideal hyperbolic tetrahedron is given by the Bloch-Wigner function of the cross ratio of its vertices,  see for instance \cite{Zagier1990},  and there are similar results  for the spherical tetrahedra \cite{Murakami2005}. As a Whitehead move 
 corresponds to gluing an ideal tetrahedron on two adjacent ideal triangles in an ideal  triangulation, it is natural that the Hamiltonian generating this transformation is related to the volume of an ideal tetrahedron. In the context of three-dimensional Einstein manifolds, it seems plausible  that the Hamiltonian obtained for different values of $\Lambda$ could be related to the volumes of certain tetrahedra in three-dimensional Minkowski, anti de Sitter and de Sitter space.

\subsubsection*{Hamiltonians for the cotangent bundle}

We now show that a similar decomposition  of the Whitehead move is possible for the mapping class group action \eqref{gen_shear_pb} on the cotangent bundle $T^*\cT(S)$. For this, we construct  transformations $\tilde A^\Lambda_\alpha,\tilde B^\Lambda_\alpha:T^*\R^E\to T^*\R^E$, which are defined uniquely up to translations generated by the constraints  by the requirement  that the following diagrams commute
\begin{equation}
\xymatrix{
T^*\R^E \ar[dd]_{\widetilde A^\Lambda_\alpha}\ar[rr]^{\;\;\pi^\sharp_{WP}\quad} & &  R^E_\Lambda \ar[dd]^{A^\Lambda_\alpha}
\\ & & \\
T^*\R^E \ar[rr]_{\;\;\pi'{}^\sharp_{\!\!WP}\quad} & &  R^E_\Lambda
}
\qquad 
\xymatrix{T^*\R^E \ar[dd]_{\widetilde B^\Lambda_\alpha} \ar[rr]^{\;\;\pi^\sharp_{WP}\quad} &  &  R^E_\Lambda \ar[dd]^{B^\Lambda_\alpha} 
\\ & & \\
T^*\R^E 
\ar[rr]_{\;\;\pi^\sharp_{WP}\quad} & &  R^E_\Lambda.
}
\end{equation}
For this, we first pull-back $A^\Lambda_\alpha$ and $B^\Lambda_\alpha$ via the map $\pi_{WP}^\sharp:T^*\R^E\to R_\Lambda^E$ in \eqref{pipullb} 
and then compare the results with $\pi'{}^\sharp_{\!\!WP}$ and $\pi_{WP}^\sharp$, respectively. A similar computation to that of $A^\Lambda_\alpha$ and $B^\Lambda_\alpha$ together with the fact the map $\pi_{WP}^\sharp$ is Poisson shows that these transformations are given by
\begin{align*}
&\tilde A_\alpha^\Lambda:
\begin{cases}
x^\alpha \mapsto 
x'{}^\alpha=-x^\alpha,
\cr
x^{\beta,\gamma,\delta,\epsilon}\mapsto
x'{}^{\beta,\gamma,\delta,\epsilon}=x^{\beta,\gamma,\delta,\epsilon}+\frac{1}{2}x^\alpha,
\cr
p_\alpha \mapsto 
p_\alpha'=-p_\alpha+\frac{1}{2}(p_\beta+p_\gamma+p_\delta+p_\epsilon),
\cr
p_{\beta,\gamma,\delta,\epsilon}\mapsto p_{\beta,\gamma,\delta,\epsilon}'=p_{\beta,\gamma,\delta,\epsilon},
\end{cases}\\
\intertext{}
&\tilde B_\alpha^\Lambda:
\begin{cases}
x^\alpha \mapsto 
x'{}^\alpha=x^\alpha,
\cr
x^{\beta,\gamma,\delta,\epsilon}\mapsto
x'{}^{\beta,\gamma,\delta,\epsilon}=x^{\beta,\gamma,\delta,\epsilon}+\{x^{\beta,\gamma,\delta,\epsilon},\Im_\ell H_\alpha^\Lambda\circ \pi_{WP}^\sharp\}_{T^*},
\cr
p_\alpha \mapsto 
p_\alpha'=p_\alpha+\{p_\alpha,\Im_\ell H_\alpha^\Lambda\circ \pi_{WP}^\sharp\}_{T^*},
\cr
p_{\beta,\gamma,\delta,\epsilon}\mapsto p_{\beta,\gamma,\delta,\epsilon}'=p_{\beta,\gamma,\delta,\epsilon}.
\end{cases}
\end{align*}
This defines a decomposition of the Whitehead moves in \eqref{gen_shear_pb} as $\tilde W^\Lambda_\alpha=\tilde A^\Lambda_\alpha\circ \tilde B^\Lambda_\alpha$. The map  $\tilde B^\Lambda_\alpha$ is again a Hamiltonian flow, namely the one  generated by the Hamiltonian  $\Im_\ell H_\alpha^\Lambda\circ \pi_{WP}^\sharp:T^*\R^E \to\R$  via the cotangent bundle Poisson structure. The map
 $\tilde A^\Lambda_\alpha$ is again a linear combinatorial transformation, which does not depend on $\Lambda$ and maps  the cotangent bundle bivector $\pi_{T^*}$ for the graph $\Gamma$ to the cotangent bundle bivector $\pi'_{T^*}$ for the graph $\Gamma'=\Gamma_\alpha$, e.~g.~
$(\tilde A_\alpha^\Lambda)_*\pi_{T^*}=\pi'_{T^*}.$
This proves the following theorem which, combined with Theorem \ref{thm:mod_cot}, shows that the induced mapping class group actions on $T^*\cT(S)$  are symplectic for all values of $\Lambda$.
\begin{theorem}
The Whitehead moves $\tilde W^\Lambda_\alpha:(T^*\R^E,\pi_{T^*})\to (T^*\R^E,\pi'_{T^*})$ \eqref{gen_shear_pb} are Poisson maps, and  the induced mapping class group actions on $T^*\cT(S)$ are symplectic for all values of $\Lambda$.
\end{theorem}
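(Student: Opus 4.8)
The plan is to leverage the decomposition $\widetilde W^\Lambda_\alpha = \widetilde A^\Lambda_\alpha\circ\widetilde B^\Lambda_\alpha$ already exhibited above, so that the Poisson assertion splits into a statement about each factor. First I would treat $\widetilde B^\Lambda_\alpha$: by construction it is the time-one map of the Hamiltonian vector field of $\Im_\ell H^\Lambda_\alpha\circ\pi^\sharp_{WP}\colon T^*\R^E\to\R$ with respect to the canonical bivector $\pi_{T^*}$ in \eqref{eq:cot_bivect}. Since this Hamiltonian depends on $(x^\alpha,p_\alpha,\dots)$ only through $x^\alpha$ and the single linear combination $\pi^{\alpha\zeta}_{WP}p_\zeta$, and since $\pi^{\alpha\alpha}_{WP}=0$, both $x^\alpha$ and $\pi^{\alpha\zeta}_{WP}p_\zeta$ are preserved by the flow; hence the flow is linear in time, in particular complete, so the time-one map is a genuine symplectomorphism and $(\widetilde B^\Lambda_\alpha)_*\pi_{T^*}=\pi_{T^*}$. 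Alternatively one may verify this directly from the explicit formula for $\widetilde B^\Lambda_\alpha$: the only brackets that need checking are those among the shifted coordinates $x'{}^{\beta,\gamma,\delta,\epsilon}$ and $p'_\alpha$, and they reduce to the symmetry of the Hessian of $\Im_\ell H^\Lambda_\alpha\circ\pi^\sharp_{WP}$ in the $p$–variables together with the fact that only the edge $\alpha$ actually enters.

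Next I would handle $\widetilde A^\Lambda_\alpha$, the linear, $\Lambda$-independent, purely combinatorial map recorded above, for which the claim is $(\widetilde A^\Lambda_\alpha)_*\pi_{T^*}=\pi'_{T^*}$, with $\pi'_{T^*}$ the canonical cotangent bivector attached to the flipped graph $\Gamma'=\Gamma_\alpha$. This is a finite explicit computation: one pushes forward $\partial/\partial x^\eta\wedge\partial/\partial p_\eta$ for $\eta$ ranging over $\alpha$ and its four neighbours, using the rules $x^\alpha\mapsto-x^\alpha$, $x^{\beta,\gamma,\delta,\epsilon}\mapsto x^{\beta,\gamma,\delta,\epsilon}+\tfrac12 x^\alpha$, $p_\alpha\mapsto-p_\alpha+\tfrac12(p_\beta+p_\gamma+p_\delta+p_\epsilon)$, $p_{\beta,\gamma,\delta,\epsilon}\mapsto p_{\beta,\gamma,\delta,\epsilon}$, and checking that the resulting combinatorial cancellations are precisely the ones already verified for $(A_\alpha)_*\pi_{WP}=\pi'_{WP}$ on the base and for $(A^\Lambda_\alpha)_*\pi_\Lambda=\pi'_\Lambda$ on $\cG\cH_\Lambda(M)$. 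In fact one can avoid redoing the bookkeeping by transporting those two identities through the commuting squares relating $\pi^\sharp_{WP}$, $\pi'{}^\sharp_{\!\!WP}$, $A^\Lambda_\alpha$ and $\widetilde A^\Lambda_\alpha$. Composing the two factors then gives $(\widetilde W^\Lambda_\alpha)_*\pi_{T^*}=(\widetilde A^\Lambda_\alpha)_*(\widetilde B^\Lambda_\alpha)_*\pi_{T^*}=(\widetilde A^\Lambda_\alpha)_*\pi_{T^*}=\pi'_{T^*}$, i.e.\ $\widetilde W^\Lambda_\alpha$ is a Poisson map.

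Finally, to pass from the maps on $T^*\R^E$ to the $\Mod(S)$–action on $T^*\cT(S)$, I would invoke Theorem~\ref{thm:mod_cot}: the $\widetilde W^\Lambda_\alpha$ already satisfy the relations of Theorem~\ref{thm:mcg_fat_graph}, preserve the constraints \eqref{eq:teich_const} and their gauge orbits, and hence descend to a well-defined action on the symplectic quotient \eqref{tquot}, which by Proposition~\ref{thm:cot_reduc} is $T^*\cT(S)$ with its cotangent symplectic structure. Since each generator is Poisson on $T^*\R^E$ and sends first-class constraints for $\Gamma$ to first-class constraints for $\Gamma'$, the induced map on the reduced space is a symplectomorphism of $T^*\cT(S)$ (equivalently, it intertwines the Dirac brackets associated with compatible admissible gauge fixings), and the ambiguity of $\widetilde W^\Lambda_\alpha$ up to $\Ker\pi'{}^\sharp_{\!\!WP}$ is immaterial because that kernel lies in the directions quotiented out. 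I expect the only genuinely delicate point to be the $\Lambda=1$ case, where $\Re_\ell\log$ and $\Im_\ell\log$ of $1+e^{z^\alpha}$ are multivalued owing to the branch points of the complex logarithm; there one should either restrict to the open locus on which $1+e^{z^\alpha}$ stays within a fixed branch domain or work on the natural covering on which the flip becomes single-valued, and in either formulation all the Poisson and symplectic identities above are local and hence unaffected.
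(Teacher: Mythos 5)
Your proposal is correct and follows essentially the same route as the paper: the factorisation $\tilde W^\Lambda_\alpha=\tilde A^\Lambda_\alpha\circ\tilde B^\Lambda_\alpha$, with $\tilde B^\Lambda_\alpha$ recognised as the Hamiltonian flow of $\Im_\ell H^\Lambda_\alpha\circ\pi^\sharp_{WP}$ preserving $\pi_{T^*}$ and $\tilde A^\Lambda_\alpha$ as the linear combinatorial map (in fact the cotangent lift of $A_\alpha$) satisfying $(\tilde A^\Lambda_\alpha)_*\pi_{T^*}=\pi'_{T^*}$, followed by descent to $T^*\cT(S)$ via Theorem \ref{thm:mod_cot} and Proposition \ref{thm:cot_reduc}. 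The additional details you supply --- the observation that $\pi^{\alpha\alpha}_{WP}=0$ makes the flow linear in time so the first-order formula is the exact time-one map, and the caveat about branch cuts for $\Lambda=1$ --- only sharpen the paper's argument (your proposed shortcut of transporting the identity through the commuting square would need care since $\pi'{}^\sharp_{\!\!WP}$ is not injective, but your primary route, the direct computation, does not rely on it).
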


\section*{Acknowledgements}
The authors thank Jean-Marc Schlenker for helpful discussions. This work was supported by the Emmy Noether research grant ME 3425/1-2 of the German Research Foundation (DFG).

\begin{appendix}

\section{Model spacetimes for 3d gravity}
\label{subsec:model}

In this appendix, we describe the model spacetimes of 3d gravity and their description in terms of the groups $\mathrm{PSL}(2,\R)$ and $\mathrm{PSL}(2,\C)$.
These model spacetimes are
three-dimensional  Minkowski space  $\mathrm{M}_3$ for $\Lambda=0$, anti-de Sitter space $\mathrm{AdS}_3$ for $\Lambda<0$, and de Sitter space $\mathrm{dS}_3$ for $\Lambda>0$. All of these model spacetimes are of constant curvature, which is given by the cosmological constant $\Lambda$,  and admit a simple description in terms of the Lie group $\mathrm{PSL}(2,\R)$ and its complexification $\mathrm{PSL}(2,\C)$.

\subsubsection*{The group $\mathrm{SL}(2,\R)$ and its Lie algebra}

In order to exhibit the similarities between the model spacetimes, it is helpful to first consider the Lie group $\mathrm{SL}(2,\R)$ and its Lie algebra.
For this, we introduce the following  basis of $\mathfrak{sl}(2,\R)$ 
\begin{align}\label{eq:jgen}
J_0=\tfrac 1 2 \left(\begin{array}{cc} 0 & -1\\ 1 & 0\end{array}\right)\qquad J_1=\tfrac 1 2 \left(\begin{array}{cc} 1 & 0\\ 0 & -1\end{array}\right)\qquad J_2=\tfrac 1 2 \left(\begin{array}{cc} 0 & 1\\ 1 & 0\end{array}\right),
\end{align}
which diagonalises the Killing form on $\mathfrak{sl}(2,\R)$ and relates it to the three-dimensional Minkowski metric $\eta=\text{diag}(-1,1,1)$
$$
\kappa(J_i,J_j)=\mathrm{Tr}(J_iJ_j)=\tfrac 1 2 \eta_{ij}.
$$

\subsubsection*{Minkowski geometry}
Three-dimensional  Minkowski space $\mathrm{M}_3$ is an affine space over the vector space $\R^3$ with the flat Lorentzian metric $\eta=\text{diag}(-1,1,1)$.
 Elements of $(\R^3,\eta)$ can be identified with the Lie algebra $\mathfrak{sl}(2,\R)$ via the map
$$\phi_0: \;(x^0,x^1,x^2)\mapsto X=2x^aJ_a=\left(\begin{matrix} x^1 & x^2-x^0 \cr x^2+x^0 & -x^1\end{matrix}\right),$$
such that the metric is given by minus the determinant
$$\eta_0(x,x)=-\det X=\tfrac 1 2 \tr X^2-\tfrac 1 2 (\tr X)^2
.$$
The  group of orientation and time orientation preserving isometries of $\mathrm{M}_3$ is the Poincar\'e group in three dimensions $G_0=\mathrm{ISO}(2,1)\cong\mathrm{PSL}(2,\R)\ltimes\mathfrak{sl}(2,\R)$, with the group multiplication
$$
(A,X)\cdot (B,Y)=(AB, X+AYA^\inv)\quad \text{where}\quad A,B\in \mathrm{PSL}(2,\R),\,X,Y\in\mathfrak{sl}(2,\R).
$$
With the identification $\mathfrak{sl}(2,\R)\cong \R^3$ from above,  its action on $\mathrm{M}_3$ is given by
$$(A,X)\cdot Y=AYA^\inv+X.$$
Two-dimensional hyperbolic space $\bbH^2$ embeds in Minkowski space as the space of future-oriented timelike unit vectors with the induced metric. In terms of the matrix realisation of $\mathrm{M}_3$ this can be described as the subspace of  matrices with determinant $1$. The subgroup of $G_0$ which preserves this embedding of $\bbH^2$ is  $\mathrm{PSL}(2,\R)\cong\{(A, 0): A\in\mathrm{PSL}(2,\R) \}$.

\subsubsection*{Anti-de Sitter geometry}

Three-dimensional anti-de Sitter space is defined as a quadric in $\R^4$
$$\mathrm{AdS}_3=\{(x^0,x^1,x^2,x^3)\in\R^4:\,-(x^0)^2+(x^1)^2+(x^2)^2-(x^3)^2=-1\}$$
with the Lorentzian metric induced by the flat pseudo-Riemannian metric $\text{diag}(-1,1,1,-1)$. 
Topologically, anti-de Sitter space is a product $\bbS^1\times\R^2$ presenting a closed timelike direction. This is somewhat irrelevant for our considerations since one can always unwrap such closed time direction by going to the universal covering space. On the other hand, it is sometimes also convenient to consider certain  quotient of anti-de Sitter space by the group $\Z_2$ and to work with its image in three-dimensional projective space
$$\mathrm{X}_{-1}=\{[x^0:x^1:x^2:x^3]\in\R\bbP^3:\,-(x^0)^2+(x^1)^2+(x^2)^2-(x^3)^2=-1\}.$$
As in the case of Minkowski space, three-dimensional anti-de Sitter space can be described in terms of the group $\mathrm{PSL}(2,\R)$.
The map
$$
\phi_{-1}:\,(x^0,x^1,x^2,x^3)\mapsto X=\left(\begin{array}{cc} x^3+x^1 & x^2-x^0\\
x^2+x^0 & x^3-x^1\end{array}\right)
$$
identifies $\mathrm{AdS}_3$ with the group  $\mathrm{SL}(2,\R)$ and $\mathrm{X}_{-1}$ with the group $\mathrm{PSL}(2,\R)$. In both cases the metric is the one induced by minus the determinant
$$\eta_{-1}(x,x)=-\det X=\tfrac{1}{2}\tr X^2-\tfrac 1 2 (\tr X)^2
.$$
The isometry group  of  $\mathrm{AdS}_3$ is the group  $\mathrm{SO}(2,2)\cong \mathrm{SL}(2,\R)\times \mathrm{SL}(2,\R)/\Z_2$, whose action on the  realisation above 
is given by the following action of $\mathrm{SL}(2,\R)\times\mathrm{SL}(2,\R)$
$$(A,B)\cdot X=A X B^\inv
.$$
As the kernel of this group action is $\{ (\mathds{1}, \mathds{1}), (-\mathds{1}, -\mathds{1})\}$ it induces an action of $\mathrm{SO}(2,2)$ on $\mathrm{AdS}_3$. Similarly the isometry group of $\mathrm{X}_{-1}$ is the group $G_{-1}=\mathrm{PSL}(2,\R)\times\mathrm{PSL}(2,\R)$.

Two-dimensional  hyperbolic space $\bbH^2$ embeds in $\mathrm{AdS}_3$ and in $\mathrm{X}_{-1}$ as a totally geodesic surface with the induced metric. In terms of the matrix realisation, $\bbH^2$ can be described as the  subspace characterised by the condition  $x^3=0$.  The subgroup of $G_{-1}$ which preserves $\bbH^2$ is given by the diagonal embedding $\mathrm{PSL}(2,\R)\cong\{(A,A):\, A\in\mathrm{PSL}(2,\R)\}$.

\subsubsection*{de Sitter geometry}
Three-dimensional de Sitter space  is also given as a quadric in $\R^4$
$$\mathrm{X}_{1}=\mathrm{dS}_3=\{(x^0,x^1,x^2,x^3)\in\R^4;\,-(x^0)^2+(x^1)^2+(x^2)^2+(x^3)^2=1\}$$
with the Lorentzian metric induced by the flat metric $\mathrm{diag}(-1,1,1,1)$. 
The map
$$\phi_{1}:\,(x^0,x^1,x^2,x^3)\mapsto {\mathrm i}X={\mathrm i}\left(\begin{matrix}{\mathrm i}x^3+x^1 & x^2-x^0\cr x^2+x^0 & {\mathrm i}x^3-x^1\end{matrix}\right)$$
identifies $\mathrm{dS}_3$ with a subset of $\mathrm{SL}(2,\C)$. The image of $\phi_{1}$ consists of those matrices of $\mathrm{SL}(2,\C)$ which are  invariant under the  involution
$$({\mathrm i}X)^\circ=J_0 ({\mathrm i}X)^\dag J_0^{-1}.$$
In this case, the induced metric is  given by the determinant 
$$\eta_1(x,x)=\det ({\mathrm i}X)=\tfrac{1}{2}\tr X^2-\tfrac 1 2 (\tr X)^2.
$$
The associated isometry group is the  Lorentz group in four dimensions $G_1=\mathrm{SO}(3,1)\cong \mathrm{PSL}(2,\C)$, which acts on the matrix realisation of $\mathrm{dS}_3$ via
$$
A\cdot ({\mathrm i}X)=A({\mathrm i}X)A^\circ.$$

Although the two-dimensional hyperbolic space $\bbH^2$ also embeds in de Sitter space as a totally geodesic surface, the relevant embedding of $\bbH^2$ is the one in the dual hyperbolic 3-space. The duality correspondence between $\mathrm{X}_1$ and $\bbH^3$ is given via the duality correspondence between 1-dimensional and 3-dimensional hyperplanes through the origin in $\R^{3,1}$. It maps points, geodesics and geodesic planes in $\mathrm{X}_1$, respectively, to geodesic planes, geodesics and points in $\bbH^3$. In terms of the matrix realisation above, this embedding of $\bbH^2$ can be described as the dual plane to the point $x^3=1$. The subgroup of $G_{1}$ which preserves $\bbH^2$ is then the subgroup $\mathrm{PSL}(2,\R)\subset \mathrm{PSL}(2,\C)$.

\section{The pentagon relation for the Whitehead move}
\label{app:flip}

\begin{theorem}
The Whitehead move \eqref{map_uni} satisfies the pentagon relation.
\end{theorem}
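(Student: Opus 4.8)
The plan is to establish the pentagon relation for the $R_\Lambda$-valued Whitehead moves $W_\alpha^\Lambda$ by first proving it for the ordinary Thurston--Fock Whitehead moves $W_\alpha$ of \eqref{eq:flip_x} (thereby also re-deriving the pentagon part of Theorem \ref{thm:pentagon_teich}) via an explicit computation, and then transporting the resulting identity along the analytic extension that replaces $x^\gamma$ by $z^\gamma$. First I would observe that, by Theorem \ref{thm:mcg_fat_graph}, the pentagon is the only relation requiring work: involutivity $W_\alpha^\Lambda\circ W_\alpha^\Lambda=\id$ is a short check once one keeps track of the cyclic relabelling of the incident edges $\beta,\gamma,\delta,\epsilon$ induced by a Whitehead move (the two logarithmic corrections then cancel, using $\log(1+e^{z^\alpha})=z^\alpha+\log(1+e^{-z^\alpha})$); naturality is just a relabelling of edges; and commutativity holds because Whitehead moves along edges with no common vertex act on disjoint sets of coordinates. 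It therefore suffices to fix two edges $\alpha,\beta$ sharing exactly one vertex; moreover, since each Whitehead move changes only the coordinate of its edge and of the (at most four) edges incident to its two endpoints, one only has to verify the identity on the finitely many coordinates occurring in the ``star'' of the bigon $\alpha\cup\beta$, and for each of the finitely many local combinatorial configurations of this star.

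For the algebraic core I would pass to the exponentiated coordinates $Z^\gamma:=e^{z^\gamma}$, in which \eqref{map_uni} (and likewise \eqref{eq:flip_x}) becomes the subtraction-free rational transformation
\begin{align*}
Z^\alpha\mapsto (Z^\alpha)^{-1},\qquad Z^{\beta,\delta}\mapsto Z^{\beta,\delta}\bigl(1+Z^\alpha\bigr),\qquad Z^{\gamma,\epsilon}\mapsto Z^{\gamma,\epsilon}\bigl(1+(Z^\alpha)^{-1}\bigr)^{-1},
\end{align*}
that is, a cluster $\mathcal{X}$-mutation. In these variables the pentagon $W_\alpha W_\beta W_\alpha W_\beta W_\alpha=(\alpha\;\beta)$ becomes an identity between rational functions of the finitely many $Z^\gamma$ with coefficients in $\mathbb{Q}$, whose denominators are products of factors of the form $1+(\text{Laurent monomial in the }Z^\gamma)$; it is precisely the length-five periodicity of the $A_2$ cluster algebra, which one checks by expanding the five successive mutations. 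Being an identity of rational functions, it holds in particular over the positive real locus, which recovers Theorem \ref{thm:pentagon_teich}, and, more importantly, it survives under any ring homomorphism into a commutative ring in which the occurring denominators become invertible.

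The last step --- and the only genuine subtlety --- is to check this invertibility for $R_\Lambda$. For $\Lambda=0,-1$ it is automatic: $e^{z^\gamma}$ is a unit of $R_\Lambda$, and for a factor $1+M$ with $M$ a monomial in the $e^{z^\gamma}$ one has $\Re_\ell(1+M)=1+e^{(\cdot)}>0$ when $\Lambda=0$, and both split components of $1+M$ equal to $1+e^{(\cdot)}>0$ when $\Lambda=-1$, so $1+M\in R_\Lambda^{\times}$; hence the rational pentagon holds identically on $R_\Lambda^E$, and, transcribing back through the analytic extension of $\log$, so does \eqref{map_uni}. For $\Lambda=1$ one has $R_1=\mathbb{C}$ and the factors $1+M$ may vanish (for instance when $e^{z^\alpha}=-1$), so the logarithms in \eqref{map_uni} acquire branch points; the resolution is that only the exponentials $e^{z^\gamma}=Z^\gamma$ enter the holonomy parametrisation \eqref{eq:hol_shear}, so it is enough that the rational form of the pentagon hold on the Zariski-dense open subset of $\cG\cH_1(M)$ where the denominators do not vanish, whence it holds everywhere on $\cG\cH_1(M)$ by continuity of the holonomy map. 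I expect this branch-point bookkeeping for $\Lambda=1$ to be the main obstacle; the rest is the routine $A_2$-pentagon computation together with the uniqueness of analytic continuation for $\Lambda=0,-1$.
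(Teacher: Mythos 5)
Your proposal is correct, and it reaches the pentagon relation by a genuinely different route from the paper. The paper's Appendix B simply carries out the five successive substitutions of \eqref{map_uni} explicitly in the $R_\Lambda$-valued coordinates $z^\zeta,z^\eta,\dots$ and observes that the chain closes up to the transposition $(\zeta\;\eta)$, remarking that the computation is ``essentially identical'' to the real Teichm\"uller case; the subtlety of what ``identical'' means over $R_\Lambda$ (in particular the branch points of $\log$ for $\Lambda=1$) is dealt with only by the short remark following \eqref{map_uni}. You instead isolate the algebraic core as the subtraction-free $A_2$ periodicity identity for the exponentiated variables $Z^\gamma=e^{z^\gamma}$, prove it once as an identity of rational functions over $\QQ$, and then specialise: for $\Lambda=0,-1$ by checking that the denominators are units of $R_\Lambda$ (their real part, resp.\ both split components, being of the form $1+e^{(\cdot)}>0$), and for $\Lambda=1$ by passing to the holonomies and a density-and-continuity argument. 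This buys a cleaner justification of exactly the points the paper glosses over, and makes transparent why the $\Lambda=0,-1$ cases follow from the real case by pure ring theory rather than by repeating the computation. Two small points of precision: the denominators arising in the composed move are not products of binomials $1+M$ with $M$ a single Laurent monomial but subtraction-free polynomials such as $1+e^{z^\eta}+e^{z^\zeta+z^\eta}$ (visible in the paper's appendix), so you should state the positivity/unit argument for positive-coefficient Laurent polynomials in the $Z^\gamma$ rather than for single monomials --- the argument is unchanged; and when you transcribe the exponentiated identity back to the logarithmic form \eqref{map_uni} for $\Lambda=0,-1$ you should note that the analytic extension of $\exp$ is injective on the relevant domain (clear from $e^{x+\ell y}=e^x(1+\ell y)$ for $\Lambda=0$ and from the split components $e^{x\pm y}$ for $\Lambda=-1$), or equivalently invoke uniqueness of analytic continuation off the real locus as you suggest. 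Neither point is a gap.
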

\begin{proof}
This follows by a direct computation from the expression \eqref{map_uni} for the Whitehead move in terms of shear-bending coordinates 
$z^\alpha:\cG\cH_\Lambda(M)\to R_\Lambda$.
Explicitly, the transformation of coordinates under the sequence of Whitehead moves in Figure \ref{fig:pentagon} is given by 
\begin{align*}
\left(\begin{array}{c} z^\alpha \\ z^\beta\\ z^\gamma \\ z^\delta \\z^\epsilon \\ z^\zeta\\ z^\eta\end{array}\right)
&\mapsto
\left(\begin{array}{c} z^\alpha +\log(1+e^{z^\zeta})\\ z^\beta-\log(1+e^{-z^\zeta})\\ z^\gamma\\z^\delta \\ z^\epsilon-\log(1+e^{-z^\zeta})\\ -z^\zeta\\ z^\eta+\log(1+e^{z^\zeta})
\end{array}\right)
\mapsto
\left(\begin{array}{c} z^\alpha+\log(1+e^{z^\zeta})\\ 
z^\beta-\log(1+e^{-z^\zeta})\\ 
z^\gamma+z^\eta-\log(1+e^{z^\eta}+e^{z^\zeta+z^\eta})+\log(1+e^{z^\zeta})\\ 
z^\delta+\log(1+e^{z^\eta}+e^{z^\zeta+z^\eta})\\
z^\epsilon+z^\zeta+z^\eta-\log(1+e^{z^\eta}+e^{z^\zeta+z^\eta})\\ 
-z^\zeta+\log(1+e^{z^\eta}+e^{z^\zeta+z^\eta})\\ 
-z^\eta-\log(1+e^{z^\zeta})
\end{array}\right)\\
&\mapsto
\left(\begin{array}{c} z^\alpha -\log(1+e^{z^\eta})+\log(1+e^{z^\eta}+e^{z^\eta+z^\zeta})\\ 
z^\beta+\log(1+e^{z^\eta})\\ 
z^\gamma-\log(1+e^{-z^\eta})\\ 
z^\delta+\log(1+e^{z^\eta}+e^{z^\zeta+z^\eta})\\
z^\epsilon+z^\zeta+z^\eta-\log(1+e^{z^\eta}+e^{z^\zeta+z^\eta})\\ 
z^\zeta-\log(1+e^{z^\eta}+e^{z^\zeta+z^\eta})\\ 
-z^\zeta+\log(1+e^{-z^\eta})
\end{array}\right)
\mapsto 
\left(\begin{array}{c} z^\alpha \\ 
z^\beta+\log(1+e^{z^\eta})\\ 
z^\gamma
-\log(1+e^{-z^\eta})\\ 
z^\delta+\log(1+e^{z^\eta})\\
z^\epsilon \\ 
-z^\eta\\ 
z^\zeta-\log(1+e^{-z^\eta})
\end{array}\right)
\mapsto
\left(\begin{array}{c} z^\alpha \\ z^\beta\\ z^\gamma\\ z^\delta \\z^\epsilon \\ z^\eta\\ z^\zeta
\end{array}\right).
\end{align*}
This proof is essentially  identical to the corresponding proof for shear coordinates on Teichm\"uller space and included  only   for the reader's convenience.
\end{proof}

\begin{figure}
\centering
\includegraphics[scale=0.4]{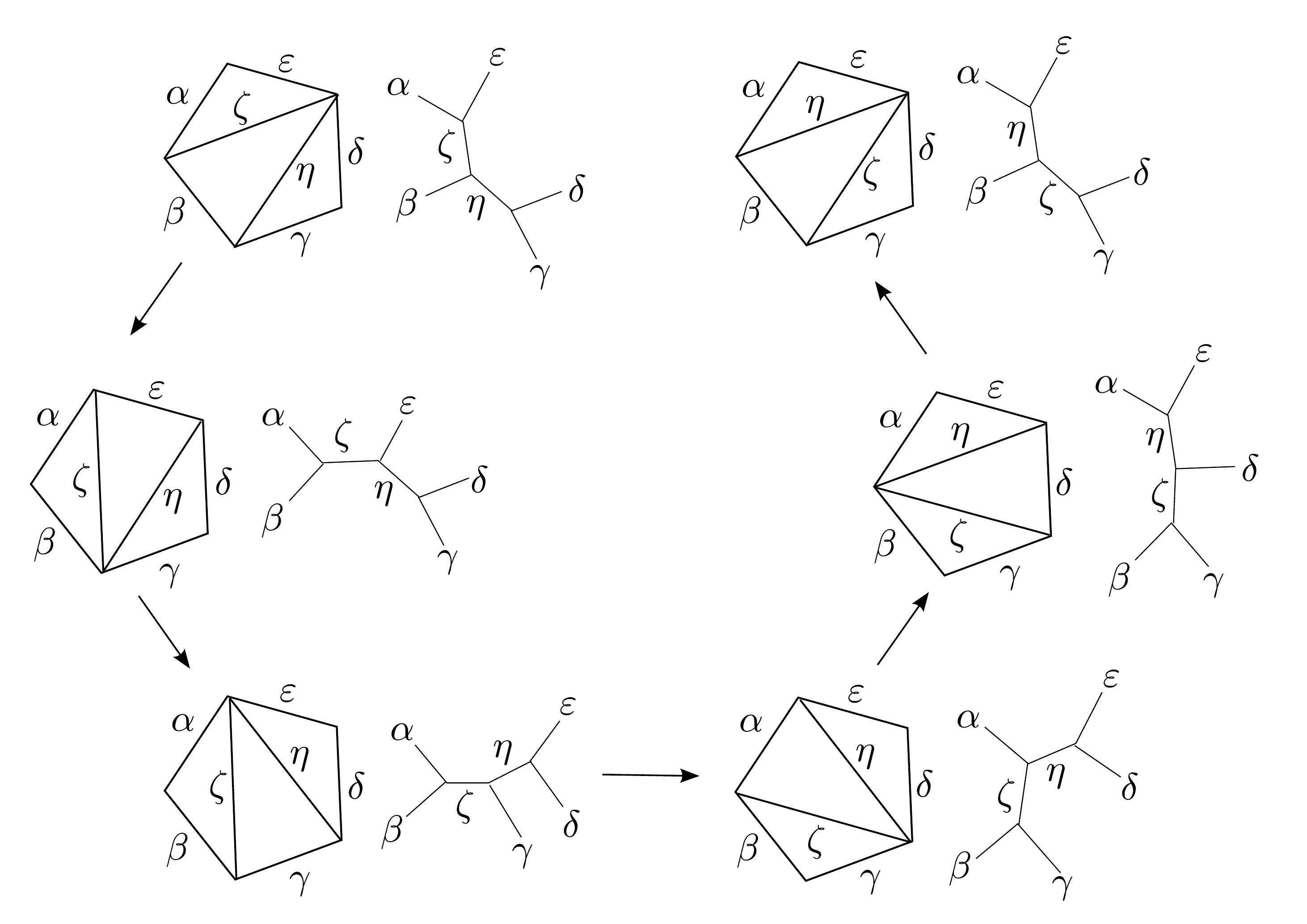}
\caption{The pentagon relation for the Whitehead move.}
\label{fig:pentagon}
\end{figure}

\end{appendix}


\end{document}